\documentclass[11pt]{article}

\usepackage[utf8]{inputenc}
\usepackage[a4paper,portrait,margin=1in]{geometry}
\usepackage{amsmath,amssymb,amsthm,mathtools,mathrsfs}
\usepackage[hidelinks]{hyperref}
\usepackage[dvipsnames]{xcolor}
\usepackage{graphicx}
\hypersetup{
    colorlinks=true,
    linkcolor=blue,
    citecolor=teal
}
\usepackage{microtype}
\usepackage{enumitem}
\usepackage{booktabs,tabularx}
\usepackage{mathpazo}
\usepackage{authblk}

\newtheorem{theorem}{Theorem}[section]
\newtheorem{lemma}[theorem]{Lemma}
\newtheorem{proposition}[theorem]{Proposition}
\newtheorem{corollary}[theorem]{Corollary}
\newtheorem{conjecture}[theorem]{Conjecture}
\theoremstyle{definition}
\newtheorem{definition}[theorem]{Definition}
\newtheorem{remark}[theorem]{Remark}

\newcommand{\R}{\mathbb R}
\newcommand{\C}{\mathbb C}
\newcommand{\eps}{\varepsilon}
\newcommand{\1}{\mathbf 1}
\newcommand{\ket}[1]{\lvert #1\rangle}

\newcommand{\norm}[1]{\left\lVert #1\right\rVert}
\newcommand{\abs}[1]{\left\lvert #1\right\rvert}
\newcommand{\E}{\mathbb E}
\newcommand{\Prb}{\mathbb P}
\newcommand{\dist}{\operatorname{dist}}

\DeclareMathOperator{\rank}{rank}
\DeclareMathOperator{\Tr}{Tr}
\DeclareMathOperator{\col}{col}

\newcount\Comments
\newcommand{\kibitz}[2]{\ifnum\Comments=1\textcolor{#1}{#2}\fi}

\title{Near-Optimal Quantum Lower Bounds for\\
Convex Optimization via Fourier Rank}
\author[1]{Brandon Augustino}
\author[1]{Shouvanik Chakrabarti\thanks{Authors are listed alphabetically. Corresponding author: \texttt{shouvanik.chakrabarti@jpmchase.com}.}}
\author[1]{Enrico Fontana}
\author[1]{Dylan Herman}
\author[1]{Junhyung Lyle Kim}
\author[1]{Guneykan Ozgul}
\author[1,2]{Nadezhda Voronova\footnote{Contributions made during the author's internship at JPMorganChase.}}
\affil[1]{Global Technology Applied Research, JPMorganChase, New York, NY 10001, USA}
\affil[2]{CNRS, Universit\'{e} Paris Cit\'{e}, IRIF, F-75013 Paris, France}
\date{}

\begin{document}
\maketitle

\begin{abstract}
We establish a near-linear quantum query lower bound for high-accuracy convex optimization over an explicit family of $n$-dimensional ellipsoids. We focus on linear optimization with an explicitly given objective, where the feasible set is accessed through a membership oracle. We show that any algorithm that, for every unit linear objective, returns an exactly feasible point with additive objective error $\Theta(n^{-2})$ requires $\Omega\!\left(\frac{n}{\log n\,\log\log n}\right)$ membership queries. The same lower bound can be shown to hold if the returned point is only required to be approximately feasible, within $\Theta(n^{-2})$ distance from the feasible set. This resolves, up to logarithmic factors, an open question posed by Chakrabarti, Childs, Li, and Wu~(\textit{Quantum}, 2020) and by van Apeldoorn, Gily\'en, Gribling, and de Wolf~(\textit{Quantum}, 2020). Coupled with the upper bounds in these papers, the query complexity of high-accuracy convex optimization is characterized tightly up to logarithmic factors. The proof is built around a lower bound for determinant computation that is derived via a novel polynomial method based on Fourier-rank. In the continuous matrix phase-query model, computing the determinant of a real $n\times n$ matrix requires at least $n/2$ matrix-vector product queries. The construction also yields an $\Omega(n)$ phase-query lower bound for estimating the minimum eigenvalue of a real symmetric $n\times n$ matrix to additive accuracy $\Theta(n^{-2})$. These results extend the determinant and minimum-eigenvalue lower bounds of Childs, Hung, and Li~(ICALP 2021) from finite fields to the real-valued setting. Based on the same constructions, we also prove a near-optimal gradient-query lower bound for constant-accuracy optimization of smooth and strongly convex functions.
\end{abstract}

\section{Introduction}

\subsection{Motivation}

Convex optimization is a cornerstone of mathematical programming, with applications spanning machine learning, statistics, signal processing, control, finance, and economics~\cite{boyd2004convex}. Beyond its direct relevance, methods for solving nonconvex optimization problems, such as integer programs, often rely on convex relaxations. Understanding the fundamental computational limits of convex optimization is therefore important in both theory and practice. The possibility of quantum speedups in optimization has recently been the subject of significant interest. Determining the extent of possible quantum speedup in convex optimization is central to the understanding of quantum advantage in optimization as a whole.

Let $K\subset\R^n$ be a convex body satisfying
\[
 B_2(0,r)\subseteq K\subseteq B_2(0,R),
\]
where $B_2(x,t)$ is the Euclidean ball of radius $t$ centered at $x$. Given a convex function $f:\R^n\to\R$ and an error tolerance $\eps>0$, the canonical problem is
\begin{equation}
 \text{find }x\in K\text{ such that }f(x)\le\min_{z\in K}f(z)+\eps.
 \label{eq:canonical-convex-optimization}
\end{equation}
Algorithms for this problem are studied in two general frameworks. In the \emph{white-box model}, the data defining $K$ and $f$ are stored explicitly, and complexity is measured in arithmetic operations. Most work on quantum speedups in this setting has focused on linear and semidefinite programming~\cite{van2018improvements,van2019games,van2020quantum,apers2026quantum,augustino2021quantum,augustino2023quantum,bouland2023quantum,brandao2017quantum,brandao2017exponential,brandao2019faster}; see~\cite{abbas2024challenges} for a survey.

In the \emph{black-box model}, the function $f$ and the set $K$ are accessed only through oracles, and query complexity is the number of oracle calls needed to solve \eqref{eq:canonical-convex-optimization}. This model, introduced by Nemirovskii and Yudin~\cite{nemirovskii1983problem}, captures problems in which the objective or feasible set is complicated or implicit. There are two main regimes within this model.

In the \emph{high-dimensional} regime, where $n$ is much larger than the geometric and accuracy parameters, one seeks methods whose query complexity is independent of $n$. Quantum gradient estimation~\cite{jordan2005fast,gilyen2019gradient,van2020convex,chakrabarti2020quantum} is the main source of quantum speedups in this regime. It has led to faster algorithms for both smooth and nonsmooth problems~\cite{kim2025fast}.

The complementary \emph{high-accuracy} regime seeks bounds that are polynomial in $n$ and logarithmic in $R/(r\eps)$. We focus on this regime. The objective is a known linear function, and the feasible set is accessed via a membership oracle. On input $x\in\R^n$, the oracle returns
\begin{equation}
 \mathsf{MEM}_K(x)=
 \begin{cases}
  1,&x\in K,\\
  0,&x\notin K.
 \end{cases}
 \label{eq:intro-membership}
\end{equation}
The algorithm may query this map in superposition over a set of represented points fixed in advance. Section~\ref{sec:prelim} gives the precise quantum definitions.

Two related oracles will help us place our results in context. A separation oracle $\mathsf{SEP}_K$ either certifies that a point belongs to $K$ or returns a hyperplane separating it from $K$. An optimization oracle $\mathsf{OPT}_K$ takes a linear objective and returns an approximately optimal point in $K$. Membership, separation, and optimization oracles are polynomially equivalent in the classical theory~\cite{grotschel2012geometric}. We use $\widetilde O$ to suppress factors logarithmic in the dimension, geometric parameters, and inverse accuracy. Lee, Sidford, and Vempala showed that $\widetilde O(n^2)$ membership queries suffice to implement an optimization oracle~\cite{lee2018efficient}.

Chakrabarti, Childs, Li, and Wu~\cite{chakrabarti2020quantum} and, independently, van Apeldoorn, Gily\'en, Gribling, and de Wolf~\cite{van2020convex} reduced the quantum query complexity to $\widetilde O(n)$. Their algorithms use quantum gradient estimation to implement a separation oracle with only polylogarithmically many membership queries. They then use a classical reduction that needs $\widetilde O(n)$ separation queries to perform optimization. Both works proved an $\Omega(\sqrt n)$ lower bound when an interior point of $K$ is known and asked whether a linear lower bound holds.

\begin{conjecture}[\cite{chakrabarti2020quantum,van2020convex}]
\label{conj:membership-lower-bound}
There is a family of $n$-dimensional convex bodies for which every quantum algorithm that implements an optimization oracle requires $\Omega(n)$ membership queries in the worst case. This remains true even when the algorithm is given an interior point of the body.
\end{conjecture}

This asks for the query cost of implementing $\mathsf{OPT}_K$ from $\mathsf{MEM}_K$. In the quantum setting, $\mathsf{SEP}_K$ can be implemented using $\widetilde O(1)$ membership queries~\cite{chakrabarti2020quantum,van2020convex}. We focus on lower bounds for implementing optimization directly from membership. Since separation can be implemented from membership with polylogarithmic overhead, our bounds also give lower bounds on the number of separation queries, weaker by at most a polylogarithmic factor.

Beyond optimization, there has been significant interest in quantum algorithms for linear algebraic tasks. Typical black-box query bounds for linear algebra problems operate in a sparse matrix query model. However, there has also been substantial interest in understanding query complexity in the matrix-vector product (MVP) model. Sun, Woodruff, Yang, and Zhang initiated a systematic classical study of matrix--vector product query complexity~\cite{sun2021querying}. Braverman, Hazan, Simchowitz, and Woodworth subsequently proved tight bounds for largest-eigenvector computation and linear regression at polynomial accuracy~\cite{braverman2020gradient}. Childs, Hung, and Li initiated the corresponding quantum study and proved linear query lower bounds over finite fields for problems including trace and determinant computation, rank testing, and solving linear systems~\cite{childs_et_al:LIPIcs.ICALP.2021.55}. They asked whether analogous quantum lower bounds could be proved for real matrices, where the finite-field polynomial arguments do not directly apply.

\subsection{Contributions}

Our main result proves Conjecture~\ref{conj:membership-lower-bound} above up to logarithmic factors. The hard instances for our lower bound are linear optimization problems over centered ellipsoids.

\begin{theorem}[Exactly feasible optimization lower bound, informal]
\label{thm:intro-exactly-feasible}
For every sufficiently large $n$, there is an explicit family of centered $n$-dimensional ellipsoids such that any quantum algorithm that solves \eqref{eq:canonical-convex-optimization} for every ellipsoid in the family and every linear function $f(x)=-c^Tx$, where $\norm c=1$ and $\eps=\Theta(n^{-2})$, must make
\[
 \Omega\!\left(\frac{n}{\log n\,\log\log n}\right)
\]
membership queries in the worst case. \footnote{We believe that the $\log\log n$ factor in our lower bound is largely due to a technicality, and not an essential feature of the technique. Appendix~\ref{app:certified-retry} provides a sketch of how this factor can be removed. Since that argument is a sketch and not complete or fully rigorous, we retain the extra factor in all theorems in the main text.}
\end{theorem}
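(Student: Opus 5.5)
The plan is a chain of reductions: from determinant computation in a continuous matrix phase-query model, to minimum-eigenvalue estimation, to linear optimization over ellipsoids with a membership oracle. The ellipsoids are $K_A=\{x: x^TA x\le 1\}$ for real symmetric positive definite $A$ drawn from an explicit family with $A\approx I$, so that $B_2(0,r)\subseteq K_A\subseteq B_2(0,R)$ with $r,R=\Theta(1)$.

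\textbf{Step 1: a phase query lower bound.} First I will prove that computing a suitable function of $A$ (the determinant, or $\lambda_{\min}$ to accuracy $\Theta(n^{-2})$) needs $\Omega(n)$ phase queries of the form $\ket{v}\mapsto e^{i\theta\, v^TAv}\ket{v}$. The tool is a Fourier-rank polynomial method. After $T$ queries, the amplitudes of the algorithm are trigonometric polynomials in the matrix parameters. If the family is parametrized as $A=A_0+\sum_j t_j B_j$, then each query contributes only a bounded number of new "frequency directions". Consequently the acceptance probability, viewed as a function of the parameters, has Fourier rank $O(T)$: its Fourier support lies in a subspace of dimension $O(T)$.

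On the other hand, I will construct a family (for example a path or cycle Laplacian perturbed by $n$ independent phases) on which the target function depends genuinely on $\Omega(n)$ independent frequency directions. This dependence must be robust, in the sense that no function of Fourier rank below $n/2$ can approximate the target to within the required accuracy. Taking $\eps=\Theta(n^{-2})$ matches the spectral gap scale of such Laplacians, which is $\Theta(1/n^2)$.

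\textbf{Step 2: from membership queries to phase queries.} I will simulate the needed queries using membership queries. Optimizing $-c^Tx$ over $K_A$ gives the support function value $\max_{x\in K_A} c^Tx=\sqrt{c^TA^{-1}c}$. Choosing the family so that $A^{-1}$ is itself well-structured, the optimization oracle answers quadratic-form queries $c^TMc$ to additive accuracy $\Theta(n^{-2})$.

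A single membership query at $x$ reveals the bit $[x^TAx\le1]$. Using a grid discretization at polylogarithmic precision, together with phase kickback and a binary search or threshold comparison on the value $x^TAx$, a phase query to $v^TAv$ can be simulated to the accuracy needed with $O(\log n)$ membership queries. The accuracy is chosen so that the accumulated error over $T$ steps stays small. This accounts for the $\log n$ factor in the bound.

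\textbf{Step 3: from optimization to the hard problem.} An algorithm solving linear optimization with error $\Theta(n^{-2})$ for every unit $c$ yields the minimum eigenvalue. Taking $c$ equal to the bottom eigenvector of $A^{-1}$, or rather arguing through a variational characterization, gives $\lambda_{\min}$ or $\lambda_{\max}$ of $A^{-1}$ up to $\Theta(n^{-2})$.

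There is a subtlety: the eigenvector is unknown. I will therefore reduce instead to deciding between two hypotheses at a fixed known direction $c$, built into the family, such as $c=\mathbf 1/\sqrt n$ for a Laplacian-type family.

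The optimizer may also fail with constant probability. To handle this, I will repeat it and verify the output by checking exact feasibility with a membership query and comparing objective values. Boosting success to $1-1/\mathrm{poly}(n)$ over $O(\log\log n)$ retries produces the $\log\log n$ factor.

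\textbf{Conclusion and main obstacle.} Combining the steps, an optimizer using $T$ membership queries gives a phase-query algorithm with $O(T\log n\log\log n)$ queries, and Step 1 then forces $T=\Omega(n/(\log n\log\log n))$.

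I expect Step 1 to be the main obstacle: proving that the target function is robustly far from every low-Fourier-rank function at accuracy $n^{-2}$. I would first attempt it via an explicit dual witness, namely a measure on parameters orthogonal to all characters in any $(n/2)$-dimensional subspace yet correlated with the target function.
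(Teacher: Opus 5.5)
Your Step~1 uses a notion of Fourier rank that fails. The acceptance probability's Fourier support need not lie in an $O(T)$-dimensional subspace: a single query applied to a superposition over $v$ already produces the frequencies $\theta\,(v^TB_jv)_j$, and these span the whole parameter space once the $B_j$ are linearly independent. What works is per-term bookkeeping with a matrix-valued frequency. Write $\chi_\Xi(Q)=e^{i\Tr(\Xi^TQ)}$; a quadratic-form phase $e^{itq^TQq}$ is $\chi_{tqq^T}(Q)$ with $\rank(tqq^T)\le1$. Subadditivity of matrix rank then bounds every term by rank $T$ in an amplitude and $2T$ in a probability. Frequencies of rank at most $r$ form a variety, not a subspace, so you need a target orthogonal to all of them. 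That is the idea your plan lacks; the perturbed-Laplacian family and the unspecified dual witness do not supply it. The paper takes $Q$ Haar-random in $\mathsf{O}(n)$ with target $\det Q$. If $\rank\Xi<n$, the reflection $J=I-2ww^T$ with $w\perp\col(\Xi)$ gives $\chi_\Xi(JQ)=\chi_\Xi(Q)$ and $\det(JQ)=-\det Q$, so the correlation is exactly zero and no robust approximation at accuracy $n^{-2}$ is ever needed. Your Step~2 also runs in the wrong direction. Simulating phase queries by membership queries (binary search on $x^TAx$) turns phase-query algorithms into membership-query algorithms, which cannot transfer a lower bound. Your conclusion needs the converse, and simulating an exact threshold bit from phase queries would be only approximate near the boundary. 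The paper avoids simulation entirely. Membership at a grid point is $\1[q^TQq\le\tau]$, and Fej\'er means place the phase $(-1)^{\1[q^TQq\le\tau]}$ in the $L^2$-closed span of the rank-one exponentials $\chi_{tqq^T}$. So each membership query has Fourier-rank cost one, with no $\log n$ overhead.

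Step~3 fails for a separate reason. With $A\approx I$ (so $R/r=\Theta(1)$) and a fixed known objective $c$, the optimum $\sqrt{c^TA^{-1}c}$ is, to accuracy $n^{-2}$, a Chebyshev polynomial of degree $O(\log n)$ in $A$ applied to $c$. It is therefore determined by $O(\log n)$ approximate matrix--vector products, each cheap to obtain from membership queries (e.g.\ by quantum gradient estimation of $x\mapsto x^TAx$), so such a family cannot carry an $\Omega(n)$ bound. The $n^{-2}$-scale spectral information is exposed only through an ill-conditioned matrix and an unknown direction. The paper uses $A_Q=(\tfrac12+2g_d)I+\tfrac14(Q+Q^T)$ with $g_d=\Theta(n^{-2})$, so $R/r=\Theta(n)$. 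On the high-probability regular events, $A_Q\succeq3g_dI$ when $\det Q=1$, and $A_Q$ has a simple eigenvalue $2g_d$ with unknown eigenvector $u$ when $\det Q=-1$. A sufficiently accurate feasible output $x$ for objective $b$ gives $z=(b^Tx)x$ with $\norm{z-A_Q^{-1}b}\le\rho$ for a small constant $\rho$. Noisy inverse iteration from a Rademacher start aligns the iterate with $u$ after $k=O(\log n)$ adaptive steps, and a median test of the objective value against $\sqrt{2/(5g_d)}$ decides $\det Q$. That is the source of the factors: $\log n$ counts inverse-iteration steps, and $\log\log n$ is the per-step majority-ball amplification needed for a union bound over them. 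Neither comes from query simulation. Your feasibility-check verification could not replace the amplification, since comparing objective values cannot certify near-optimality when the optimum is unknown.
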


We also prove a version of the lower bound, for a weaker task, that admits approximately feasible output.

\begin{theorem}[Approximately feasible optimization lower bound, informal]
\label{thm:intro-approximately-feasible}
The same lower bound holds for a separately scaled family if \eqref{eq:canonical-convex-optimization} is relaxed as follows: for $\delta=\Theta(n^{-2})$, the algorithm may return any $y$ satisfying
\[
 \dist(y,K)\le\delta,
 \qquad
 f(y)\le\min_{z\in K^{-\delta}}f(z)+\delta.
\]
Here $K^{-\delta}$ is the set of points $z$ for which $B_2(z,\delta)\subseteq K$.
\end{theorem}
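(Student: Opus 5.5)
The plan is to reduce to the exactly feasible lower bound of Theorem~\ref{thm:intro-exactly-feasible}: an approximately feasible solver for a rescaled family will be turned into an exactly feasible solver for the original family, with $O(1)$ overhead in membership queries. Let $\mathcal F=\{E_A=\{x: x^TAx\le 1\}\}$ be the hard family of centered ellipsoids from the exactly feasible theorem. Their eigenvalues lie in a window $[\lambda_{\min},\lambda_{\max}]$ with polynomially bounded ratio, so every $E_A$ contains $B_2(0,r)$ and sits inside $B_2(0,R)$. The new family is $\mathcal F'=\{sE_A\}$ for a scale factor $s$ to be chosen. A membership query to $sE_A$ is just a query to $E_A$ at $x/s$, so queries to $\mathcal F'$ cost the same as queries to $\mathcal F$.

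The key geometric facts concern centered ellipsoids (more generally, convex bodies containing a ball around the origin). For such a body $K\supseteq B_2(0,r')$ and any $\delta<r'$,
\[
(1-\delta/r')K\subseteq K^{-\delta}\subseteq K,
\qquad
K^{+\delta}:=\{y:\dist(y,K)\le\delta\}\subseteq (1+\delta/r')K.
\]
Both inclusions follow from convexity: $(1-t)K+tB_2(0,r')\subseteq K$ for the first, and the gauge/support-function argument for the second. Now let $y$ be the output of the approximate solver on $K=sE_A$. Then $y\in(1+\delta/r')K$, so $y'=y/(1+\delta/r')$ is exactly feasible. For the objective, write $h_K(c)=\max_{z\in K}c^Tz$. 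The guarantee gives
\[
c^Ty\ge h_{K^{-\delta}}(c)-\delta\ge (1-\delta/r')h_K(c)-\delta,
\]
and therefore
\[
c^Ty'\ge h_K(c)-O\!\left(\frac{\delta}{r'}h_K(c)+\delta\right)\ge h_K(c)-O\!\left(\frac{\delta R'}{r'}+\delta\right).
\]
The membership-free postprocessing (dividing by $1+\delta/r'$) costs no queries. Finally, map back to $E_A$ by dividing by $s$. The additive error then becomes $O(\delta(1+R'/r'))/s$.

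The remaining step is to choose $s$ and $\delta$ so that this error is at most the $\eps=\Theta(n^{-2})$ accuracy required in Theorem~\ref{thm:intro-exactly-feasible}, while keeping $\delta=\Theta(n^{-2})$. The aspect ratio $R'/r'=R/r$ is invariant under scaling, so the condition is $\delta(1+R/r)/s\lesssim c_0 n^{-2}$. I will take $s=\Theta(1+R/r)$ times a constant; this gives $sr\ge\delta$ comfortably, so the inclusions above are valid. For this to be compatible with the exactly feasible theorem, the hard family needs bounded aspect ratio, i.e.\ eigenvalues within constant factors of each other. Otherwise the rescaling changes the accuracy exponent.

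This is the main obstacle. The determinant/min-eigenvalue construction perturbs a well-conditioned matrix by $\Theta(1/n^2)$-scale features, so I expect the hard ellipsoids to be of the form $A=I+\Delta$ with $\norm{\Delta}\le 1/2$, giving $R/r\le 2$. If this holds, $s$ is a constant and all accuracy parameters stay $\Theta(n^{-2})$ with adjusted constants, which is why the statement refers to ``a separately scaled family.''

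If instead the aspect ratio of the hard family grows polynomially, I would not use the rescaling argument. I would rerun the final step of the exactly feasible proof directly, since that step extracts the minimum eigenvalue or determinant information from the optimal value $h_{E_A}(c)=\sqrt{c^TA^{-1}c}$. That argument only needs $h$ to be known to additive $\Theta(n^{-2})$. Exact feasibility is used only to guarantee $c^Ty\le h_K(c)$, and the inclusion $y\in(1+\delta/r')K$ supplies the same upper bound up to an additive $O(\delta R/r)$. The lower bound $\Omega(n/(\log n\log\log n))$ then transfers unchanged.
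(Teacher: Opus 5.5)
\textbf{The gap.} Your argument branches on a false premise, and the branch it then takes does not work. The hard ellipsoids are not well conditioned. $A_Q=(\tfrac12+2g_d)I_d+\tfrac14(Q+Q^T)$ only satisfies $2g_dI_d\preceq A_Q\preceq(1+2g_d)I_d$ with $g_d=\Theta(n^{-2})$, so $R/r=\Theta(n)$. This cannot be avoided, because the whole test distinguishes a bottom eigenvalue $2g_d$ from a spectrum $\succeq 3g_d$. So your plan falls through to the fallback, which misdescribes the reduction. It is not true that exact feasibility is used only to ensure $c^Ty\le h_K(c)$. In each of the $O(\log n)$ inverse-iteration steps, Lemma~\ref{lem:maximizer-inverse} uses $x\in K_B$ (i.e.\ $\norm{B^{1/2}x}\le1$) to turn near-optimality into $\norm{x-x_*}_B^2\le 2\eps/h$. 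This requires unscaled accuracy $c_\star g_d^{3/2}=\Theta(n^{-3})$ per call. Without a feasibility argument at every step, a good objective value does not pin down the maximizer. Moreover, without extra rescaling, the effective accuracy on $K_{n,Q}$ is $\Theta(\delta R/r)=\Theta(n^{-1})$, a factor $n$ too coarse. The final value test you do check is the least demanding step: its margin is $\Theta(g_d^{-1/2})$ on the unscaled ellipsoid.

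\textbf{The repair is already in your main argument; the ``obstacle'' is illusory.} With $\delta=\Theta(n^{-2})$, your condition $\delta(1+R/r)/s\lesssim c_0n^{-2}$ holds for $s=\Theta(1+R/r)=\Theta(n)$, the very choice you wrote. A polynomial scale is harmless because the statement concerns a separately scaled family and $\delta$ is an absolute tolerance. Start from the exactly feasible family $K_{n,Q}$, with radii $\Theta(n)$ and $\Theta(n^2)$ and accuracy $\eps_n=nc_\star\overline g_n^{3/2}$. You obtain $K_{sn,Q}$ with $sn=\Theta(n^2)$, radii $\Theta(n^2)$ and $\Theta(n^3)$, and $r\delta/R=O(n^{-3})$, which is exactly the paper's family $K_{n^2,Q}$. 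Your shrink $y\mapsto y/(1+\delta/r')$ is exactly feasible by $K^{+\delta}\subseteq(1+\delta/r')K$ and loses at most $2\delta R/r+\delta$. Hence Theorem~\ref{thm:exactly-feasible-main} applies as a black box with no query overhead; the grid $\mathcal G$ is simply relabeled as $s^{-1}\mathcal G$, still fixed in advance.

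The paper uses the same inclusion $(1-\delta/r)K\subseteq K^{-\delta}$ and the same loss $D=\delta(R/r+2)$, but only through a non-computable feasible witness near $y$. It therefore has to re-derive the inverse error for $z=(b^Ty)y$ and recheck the threshold margins (Lemmas~\ref{lem:approximately-feasible-inverse} and~\ref{lem:approximately-feasible-margins}), with unscaled tolerance $\gamma\overline g_n^2$ and scale $n^2$. Once you drop the bounded-aspect-ratio detour and the fallback, your computable shrink gives a cleaner, reusable approximate-to-exact reduction with the same parameters.
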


The optimization lower bound is built on determinant hardness. We first prove the determinant lower bound in the continuous matrix phase-query model and then adapt the same Fourier-rank argument to membership queries for the hard ellipsoids. The same construction also gives an $\Omega(n)$ phase-query lower bound for estimating the minimum eigenvalue of a real symmetric matrix to additive accuracy $\Theta(n^{-2})$.

\begin{theorem}[Continuous matrix phase determinant and eigenvalue bounds, informal]
\label{thm:intro-mv}
In the continuous matrix phase-query model, computing the determinant of a real $n\times n$ matrix requires at least $n/2$ queries in the worst case. Estimating the minimum eigenvalue of a real symmetric $n\times n$ matrix to additive error $\Theta(n^{-2})$ requires $\Omega(n)$ queries.
\end{theorem}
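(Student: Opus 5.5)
We need to plan a proof of Theorem~\ref{thm:intro-mv}. The model: a continuous matrix phase query on input matrix $A$ presumably maps $\ket{v}\mapsto e^{i\langle u, Av\rangle}$-type phases, or more concretely $\ket{x}\ket{y}\mapsto e^{i\, y^T A x}\ket{x}\ket{y}$ for real vectors $x,y$. Our approach is a Fourier-rank polynomial method: the final state amplitudes are Fourier series in the entries of $A$ whose frequency structure is controlled by the queries.

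\textbf{Step 1 (structure of amplitudes).} We fix the algorithm's unitaries and track the state after $t$ queries. Each query multiplies amplitudes by $e^{i\,\Tr(A\, y x^T)}$, which is a character of $A$ with frequency $yx^T$, a rank-one matrix. By induction, each amplitude is a (generalized) superposition of characters $e^{i\Tr(A M)}$ with $M=\sum_{j\le t} y_j x_j^T$, of rank at most $t$. Acceptance probability is $|\text{amp}|^2$ summed, hence a superposition of characters $e^{i\Tr(A(M-M'))}$ with frequencies of rank at most $2t$. We call this the Fourier rank of the acceptance function: at most $2t$.

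\textbf{Step 2 (hard restriction).} We restrict $A$ to a family where the determinant is distinguishable yet every low-rank-frequency function is blind. We take $A = D + \Theta$ with $\Theta$ parametrized by $n$ angle-type variables so that $\det A$ depends on all of them jointly, e.g. a product or cyclic structure such as $\det A=\prod_k \theta_k$-like dependence, or the distinction between $\det=0$ and $\det\ne 0$. The key lemma is then: if $F(A)=\int e^{i\Tr(AM)}d\mu(M)$ with $\mu$ supported on rank $\le r$ matrices and $r<n$, then $F$ cannot separate two distributions over $A$ that agree on all rank-$r$ "marginals". Concretely, we build two distributions $\mathcal D_0,\mathcal D_1$ on matrices, with determinant values separated, whose characteristic functions $\E e^{i\Tr(AM)}$ coincide for every $M$ of rank $<n$. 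A natural choice is $A=U\Lambda V$ with random orthogonal-type factors and $\det$ sign or scaling differing; $\rank M<n$ means $M$ has a kernel, and along the kernel direction we can perturb $A$ by a rank-one shift that changes the determinant but not $\Tr(AM)$. Specifically, $A\mapsto A+ s\,u w^T$ with $Mu$-compatibility, so that $\Tr(uw^TM)=w^TMu=0$ while $\det$ changes linearly in $s$. Averaging over the right group action makes this uniform in $M$.

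\textbf{Step 3 (conclude).} With $2t<n$, the acceptance probabilities under $\mathcal D_0$ and $\mathcal D_1$ coincide, contradicting bounded-error determinant computation, giving $t\ge n/2$. For the eigenvalue bound we use the symmetric analogue with frequencies $\sum (yx^T+xy^T)$ and a family $A=I+\text{(tridiagonal/path Laplacian-like perturbation)}$ whose minimum eigenvalue gap is $\Theta(n^{-2})$, as for the path Laplacian $\lambda_{\min}\sim \pi^2/n^2$; distinguishing path from cycle (closing one edge) shifts $\lambda_{\min}$ by $\Theta(n^{-2})$, and we show low-Fourier-rank functions cannot detect the closing edge, yielding $\Omega(n)$.

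The main obstacle is Step 2: making the indistinguishability uniform over all low-rank frequencies $M$ simultaneously, since a single rank-one perturbation depends on $M$'s kernel. I would first try a random continuous symmetry (Haar-random orthogonal conjugation combined with a random scale on one direction), showing that the distribution of $\Tr(AM)$ for $\rank M\le r$ depends only on a rank-$r$ compression of $A$, whose law is identical under both hypotheses.
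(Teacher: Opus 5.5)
Your Step~1 is the paper's Fourier-rank bookkeeping: amplitudes lie in the closed span of characters of frequency rank at most $T$, and outcome probabilities at rank at most $2T$. Step~2, however, is the whole content of the lower bound, and it is left open.

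\textbf{The determinant bound.} The obstacle you identify in Step~2 is not the real one. You do not need a single symmetry that works uniformly over all low-rank $M$. The correlation of the target with a character is linear in the character, so it suffices to kill it one frequency at a time, using a symmetry that may depend on that frequency. You then pass to finite sums and to the $L^1$ closure; this is continuous because the target is bounded. What is missing is the hard distribution together with a \emph{measure-preserving} symmetry that fixes the character and negates the target. The paper takes $Q$ Haar on $\mathsf{O}(n)$ with target $\det Q\in\{\pm1\}$. For $\rank\Xi<n$ it picks a unit $w\perp\col(\Xi)$ and sets $J=I-2ww^T$. Then $J^T\Xi=\Xi$ gives $\chi_\Xi(JQ)=\chi_\Xi(Q)$, while $\det(JQ)=-\det Q$ and $Q\mapsto JQ$ preserves Haar measure. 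Hence $\int\det(Q)\chi_\Xi(Q)\,d\mu_n(Q)=0$.

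Note that $JQ=Q-2w(w^TQ)$ is itself a rank-one modification of the kind you were looking for. What makes it work is that it is multiplication by a group element. Your translation $A\mapsto A+s\,uw^T$ cannot play this role, because no probability measure is invariant under a translation. Haar conjugation $A\mapsto UAU^T$ preserves $\det$, so it cannot help either. Of your candidate families, only the sign-flipped $U\Lambda V$ with $U,V$ Haar on $\mathsf{SO}(n)$ works; this is exactly Haar measure on the two components of $\mathsf{O}(n)$. A ``scaling'' difference changes the singular values, and rank-one frequencies already detect that (the law of $x^TAy$ scales).

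\textbf{The eigenvalue bound.} This half has a more serious flaw. If the path/cycle structure is fixed, essentially one query aimed at the $(1,n)$ entry reads off the closing edge. If you hide the structure by Haar conjugation $UAU^T$, the two spectra still differ, and already a rank-one frequency $t\,xx^T$ sees this. With $z=U^Tx$ uniform on the sphere, the moments of $z^TAz$ are spectral invariants, for example $\E[z^TAz]=\Tr(A)/n$ and $\E[(z^TAz)^2]=((\Tr A)^2+2\Tr(A^2))/(n(n+2))$. Since $\Tr A$ or $\Tr(A^2)$ differs between the path and the cycle, the exact orthogonality you need is false, and the proposal offers no quantitative substitute.

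The paper instead reduces to the determinant bound, as follows.
\begin{enumerate}
\item Set $S_Q=(Q+Q^T)/2$. If $\det Q=-1$, then $S_Q$ has the eigenvalue $-1$.
\item On $\mathsf{SO}(n)$ the expected eigenangle density is at most $n/\pi$. By Markov's inequality, with probability at least $99/100$ no eigenangle lies in $[\pi-\alpha/n,\pi]$. This gives $\lambda_{\min}(S_Q)\ge-1+2g_n$ with $g_n=\Theta(n^{-2})$.
\item Since $e^{iu^TS_Qv}=e^{i(u/2)^TQv}e^{i(v/2)^TQu}$, one query to $S_Q$ costs two queries to $Q$.
\item So a $T$-query estimator with error $g_n/3$ yields a $2T$-query determinant predictor with positive bias, and the determinant bound gives $T\ge n/4$.
\end{enumerate}
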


The same hard matrices also yield a complementary result in a different oracle setting. In Section~\ref{sec:strongly-convex-quadratics}, we study unconstrained first-order optimization of smooth and strongly convex quadratic functions in the ideal continuous-register gradient phase-query model.
\begin{theorem}[First-order lower bound for strongly convex quadratics, informal]
\label{thm:intro-strongly-convex-gradient}
For every $n\ge1$ and $\kappa\ge1$, put $q:=\min\{\sqrt\kappa,n\}$. There is an explicit family of $n$-dimensional quadratic functions, each $1$-strongly convex and $\kappa$-smooth, such that any quantum algorithm that, with probability at least $2/3$, returns a point with additive objective error at most a universal constant $\eps_{\rm quad}>0$ requires
\[
 \Omega\!\left(\frac{q}{\log(2+q)\,\log\log(3+q)}\right)
\]
queries to the gradient phase oracle in the worst case. Thus the dependence on $\min\{\sqrt\kappa,n\}$ is optimal up to logarithmic factors.
\end{theorem}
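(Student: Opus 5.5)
The plan is to reduce from the determinant / minimum-eigenvalue hardness of Theorem~\ref{thm:intro-mv}, rescaled to the condition-number regime. I will build a family of quadratics $f_A(x)=\tfrac12 x^TAx-b^Tx$ with $I\preceq A\preceq\kappa I$. The gradient oracle $\nabla f_A(x)=Ax-b$ is then a matrix-vector product query. A gradient phase query is therefore a matrix phase query up to the known shift $b$, whose phase can be removed without further queries.

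\textbf{Construction.} Since $q=\min\{\sqrt\kappa,n\}$, I will embed a $q\times q$ hard block. I take the hard symmetric matrices $M$ from the minimum-eigenvalue construction, whose smallest eigenvalue must be resolved to accuracy $\Theta(q^{-2})$. I set $A=I+c\,\kappa\,(M-\lambda_{\min}\text{-reference})$ on a $q$-dimensional coordinate block and the identity elsewhere. The scaling is chosen so that the spectrum lies in $[1,\kappa]$. An additive gap of $\Theta(q^{-2})$ in $M$ then becomes a gap of order $\kappa q^{-2}\ge 1$ in $A$ near its bottom eigenvalue $1$. This is exactly where $q\le\sqrt\kappa$ enters.

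\textbf{Reduction.} The minimum value is $f^*=-\tfrac12 b^TA^{-1}b$. I will choose $b$ as a fixed vector, such as a constant multiple of a known vector with substantial overlap with the bottom eigenspace across the family. A constant change in $1/\lambda_{\min}(A)$, or equivalently in $b^TA^{-1}b$, between the two sub-families then changes $f^*$ by a constant exceeding $2\eps_{\rm quad}$. An algorithm outputting $x$ with $f(x)-f^*\le\eps_{\rm quad}$ does not directly reveal $f^*$. One extra batch of queries fixes this: compute $f(x)$ from gradients via $f(x)=\tfrac12 x^T(\nabla f(x)-b)-b^Tx$, which takes one gradient query plus known quantities. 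That estimate distinguishes the two cases, so it decides the hard eigenvalue/determinant-type distinguishing problem.

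\textbf{Query lower bound.} I then apply the Fourier-rank argument in the form used for the membership lower bound, not the clean $n/2$ determinant bound. The gradient phase oracle acts on a continuous register with discretized represented points, and success is only required with probability $2/3$. This seems to be why the bound degrades to $q/(\log q\log\log q)$, mirroring the approximate-feasibility analysis. The logs come from truncating the continuous register and from amplification/retry.

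\textbf{Main obstacle.} The delicate step is ensuring $b$ has overlap with the bottom eigenvector that is bounded below uniformly over the hard family. Without this, the eigenvalue gap may not move $f^*$ by a constant. If the hard family lacks a common low eigenvector direction, I would replace the eigenvalue distinction by the determinant-based one and use $\log\det$ through a randomized $b$. Alternatively, I would encode the quantity of interest as $b^TA^{-1}b$ for a structured rank-one perturbation, exploiting that the Fourier-rank argument bounds any polynomial-degree quantity. Finally, a standard Nesterov-style upper bound of $\widetilde O(\sqrt\kappa)$, combined with the trivial $O(n)$ bound, shows the stated optimality.
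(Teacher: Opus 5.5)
Your readout with a fixed $b$ does not work, and this is the step the whole reduction rests on. You need $b$ to overlap the bottom eigenvector of $A$ uniformly over the hard family, and you then treat a change in $b^TA^{-1}b$ as ``equivalent'' to a change in $1/\lambda_{\min}(A)$. No such $b$ exists. Conjugation $Q\mapsto VQV^T$ preserves Haar measure and $\det Q$, so on the negative component the exceptional eigenvector $u$ (with $A_Qu=2g_du$) is uniformly distributed on the sphere. For every fixed $b$, $\abs{b^Tu}^2$ is typically of order $1/d$, and it is zero for some regular $Q$.

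The signal $(b^Tu)^2/(2g_d)$ is then of order $d$. The positive component gives a contribution of the same order, $\E_{\mu_{d,+}}[b^TA_Q^{-1}b]=\frac1d\E_{\mu_{d,+}}\Tr A_Q^{-1}$, coming from the other small eigenvalues of $A_Q$, which are of order $d^{-2}$ near the bottom of the spectrum. So there is no worst-case separation. For the same reason you cannot reduce from the worst-case minimum-eigenvalue bound of Theorem~\ref{thm:min-eigenvalue}: your test fails whenever $b\perp u$. Your fallbacks ($\log\det$ with a randomized $b$, a rank-one perturbation) are not developed into arguments.

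The paper supplies the missing mechanism by making $b$ adaptive. Because $\eps_{\rm quad}=\rho^2/2$ and $1$-strong convexity force $\norm{\widehat x-x^*_{Q,b}}\le\rho$ with $x^*_{Q,b}=(A_Q^{-1}b,0)$, each optimizer call is a noisy application of $A_Q^{-1}$. The paper starts from a Rademacher vector, whose overlap with $u$ is at least $1/\sqrt{2d}$ with probability at least $1/12$. It then runs $k=O(\log d)$ rounds of noisy inverse iteration, each made reliable by a majority-ball selector over $m=O(\log\log d)$ calls, which aligns the iterate with $u$. A final median of $v_k^T\widehat x'$ is compared with $2/(5g_d)$: it is at least $9/(20g_d)$ on $\mathcal E_{d,-}$ and at most $1/(3g_d)$ on $\mathcal E_{d,+}$. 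Only a positive Haar-average advantage is needed, since Theorem~\ref{thm:mv-determinant} forces success exactly $1/2$ below $d/2$ queries. That exactness means a fixed $b$ might still work through a genuinely distributional comparison, for instance Markov's inequality on $\E_{\mu_{d,+}}[b^TA_Q^{-1}b]$ against the lower bound $(b^Tu)^2/(2g_d)$ on the negative component. But you would have to carry that calculation out; the uniform-overlap claim you actually rely on is false.

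Your account of the query side and of the logarithms is also mistaken. Nothing like the membership analysis (discretized registers, Fej\'er approximants) is needed. For these quadratics, $u^T\nabla f_{Q,b}(x)$ is a $Q$-independent phase plus $\frac1{8g_d}\bigl((u')^TQx'+(x')^TQu'\bigr)$. Each gradient phase query or its inverse is therefore exactly two $\mathcal P_Q$ queries, and Theorem~\ref{thm:mv-determinant} in dimension $d$ gives $2WT\ge d/2$ directly. The factor $\log q\,\log\log q$ is exactly $W=(k+1)m$, the number of optimizer invocations in the inverse-iteration wrapper; it does not come from truncation. A fixed-$b$ scheme with $O(1)$ optimizer calls would have produced no logarithm at all, so your explanation is inconsistent with your own plan.

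Smaller points:
\begin{itemize}
\item The extra gradient query to evaluate $f(x)$ is unnecessary, and a phase query does not simply return $\nabla f(x)$. Since $\widehat x$ is close to the minimizer, the classically computable $b^T\widehat x$ already estimates $b^TA^{-1}b$.
\item Your identity should read $f(x)=\tfrac12x^T\nabla f(x)-\tfrac12b^Tx$.
\item With a full $q\times q$ block and $q=\sqrt\kappa$, your bottom gap is only $1+O(\alpha^2)$ rather than a gap ``of order $\ge1$''. The paper instead shrinks the block to $d=2\lfloor\frac{\alpha}{2\pi}q\rfloor$ and uses the Hessian $\frac{A_Q}{2g_d}\oplus\kappa I_{n-d}$.
\item Bounded $q$ needs the separate trivial two-function argument.
\end{itemize}
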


For the hard family above, Nesterov's accelerated gradient descent and conjugate gradient give $O(\sqrt\kappa\log(2+\kappa))$ and $O(n)$ upper bounds, respectively~\cite{nesterov_acceleration}. Thus our theorem rules out a polynomial quantum speedup in $q=\min\{\sqrt\kappa,n\}$, up to logarithmic factors. Combined with the accuracy-dependent quantum lower bounds of Garg, Kothari, Netrapalli, and Sherif~\cite{garg2020no,garg2021near}, this establishes quantum optimality of the principal classical parameter dependences in the respective nonsmooth, smooth convex, and constant-accuracy strongly convex regimes. Because the bounds use different hard families, they do not imply a joint $\Omega(\sqrt\kappa\log(1/\eps))$ lower bound for strongly convex optimization.

The hard inputs for the determinant problem are Haar-random matrices $Q\in \mathsf{O}(n)$. The corresponding hard inputs for eigenvalue estimation are their symmetric parts
\[
 S_Q=\frac{Q+Q^T}{2}.
\]
Our lower bounds are based on a variant of the polynomial method that is based on Fourier-rank. This variant is particularly well-suited to matrix phase oracles and gives a convenient alternative to the well studied polynomial method~\cite{beals2001quantum,buhrman2002complexity,bun2018polynomial} from quantum query complexity. This method is encapsulated in Theorem~\ref{thm:Fourier-rank-method} and may be of independent interest for the establishment of further quantum lower bounds for linear algebraic problems.

\paragraph{Why previous approaches do not extend directly.}
The existing $\Omega(\sqrt n)$ lower bounds of Chakrabarti, Childs, Li, and Wu~\cite{chakrabarti2020quantum} and van Apeldoorn, Gily\'en, Gribling, and de Wolf~\cite{van2020convex} both use the hypercube as a hard instance, with reductions to search with wildcards and unstructured search, respectively. These reductions naturally capture the quadratic quantum speedup available for search, but for the same reason they do not provide a clear route to a near-linear lower bound. Such a lower bound would rule out any polynomial quantum speedup in the dimension, up to logarithmic factors. Results of this kind are relatively uncommon in quantum query complexity; familiar examples include binary search, sorting, and parity~\cite{hoyer2002quantum,beals2001quantum}. Within black-box convex optimization, the closest prior results are due to Garg, Kothari, Netrapalli, and Sherif, who rule out polynomial quantum speedups in the accuracy dependence of first-order optimization under several smoothness assumptions~\cite{garg2020no,garg2021near}. Their bounds, however, require the dimension to grow as a function of the target accuracy. For example, their nonsmooth lower bound is $\Omega(1/\eps^2)$ when $n=\Omega(1/\eps^4)$, which gives only an $\Omega(\sqrt n)$ lower bound when expressed in terms of the dimension. It therefore does not resolve the high-accuracy, dimension-dependent question considered here.

The intuition behind a near-linear lower bound is that the best classical algorithms in this regime proceed through a sequence of adaptively chosen cuts~\cite{vaidya1996new,lee2015faster,lee2018efficient}, and therefore appear inherently sequential. Sequentiality alone, however, does not preclude a quantum speedup: apparently sequential procedures, including some dynamic-programming algorithms, can still admit polynomial quantum improvements~\cite{ambainis2019quantum}. Classical matrix--vector lower bounds provide another possible starting point. In particular, Braverman, Hazan, Simchowitz, and Woodworth prove a linear lower bound for linear regression using a delicate argument that conditions on the classical transcript of previous queries and responses~\cite{braverman2020gradient}. A quantum algorithm may query coherently without producing such a transcript, and measuring the query history would generally change the algorithm. Their conditioning argument therefore does not transfer directly to the quantum setting. Our proof avoids these obstacles by reducing determinant prediction to convex optimization and using Fourier rank to control the dependence of a fully coherent query algorithm on the hidden matrix.

\paragraph{Concurrent work.}
In the later stages of this project, we became aware of independent and concurrent upcoming work by Andrew M. Childs~\cite{childs2026quantum} that obtains similar asymptotic lower bounds for convex optimization and related real matrix--vector query problems. The two approaches were developed independently. Our technical analysis shares some commonalities with~\cite{childs2026quantum} but the overall proofs have notable differences. As in this paper, that work also reports the use of Large Language Models in developing its proofs.

\paragraph{Proof overview.}
The proof has two main layers. First, we develop a Fourier-rank analogue of the polynomial method and use it to prove determinant lower bounds for both matrix phase queries and membership queries. Second, we reduce determinant prediction to convex optimization over a family of ellipsoids.

\paragraph{The Fourier-rank polynomial method.}
The main technical idea is a continuous analogue of the polynomial method for quantum query complexity~\cite{beals2001quantum}. In the usual polynomial method, one tracks the degree of the input variables appearing in an algorithm's amplitudes. Here the hidden input is a real matrix $Q$. We instead expand functions of $Q$ in Fourier exponentials
\[
 \chi_\Xi(Q):=e^{i\Tr(\Xi^TQ)}
\]
and measure the complexity of the frequency matrix $\Xi$ by its rank. We call this quantity its Fourier rank.

This measure is particularly compatible with the matrix phase-query model. One query contributes a factor
\[
 e^{iu^TQv}=\chi_{uv^T}(Q),
\]
whose frequency matrix $uv^T$ has rank one. Multiplying Fourier exponentials adds their frequency matrices and since matrix rank is subadditive,  increases the Fourier rank by at most one. It follows that after $T$ queries, every amplitude has Fourier rank at most $T$. An output probability pairs two amplitudes through the Born rule, so its Fourier rank is at most $2T$. These are the analogues of the degree-$T$ and degree-$2T$ bounds for amplitudes and probabilities in the ordinary polynomial method.

To obtain a lower bound, we show that the determinant is orthogonal to every frequency of rank less than $n$. If $\rank(\Xi)<n$, there is a reflection $J$ satisfying
\[
 J^T\Xi=\Xi,
 \qquad
 \det J=-1.
\]
Thus $\chi_\Xi(JQ)=\chi_\Xi(Q)$, while $\det(JQ)=-\det Q$. Haar invariance then forces the correlation between $\chi_\Xi$ and $\det Q$ to vanish. Therefore an output probability of Fourier rank less than $n$ cannot distinguish the two determinant components, which gives the $n/2$ query lower bound.

This conclusion is stronger than a standard bounded-error lower bound. With fewer than $n/2$ queries, an algorithm cannot obtain even an arbitrarily small Haar-average advantage over random guessing. In this sense, the determinant plays a role analogous to parity in the ordinary polynomial method: parity is invisible to low-degree polynomials, while the determinant is invisible to low-rank matrix frequencies. More generally, the method suggests notions of sign and approximate Fourier rank analogous to threshold degree and approximate degree. We leave the development of these measures and their application to other continuous query problems for future work.

\paragraph{Minimum-eigenvalue estimation.}
To obtain the eigenvalue lower bound, we replace $Q$ by its symmetric part
\[
 S_Q=\frac{Q+Q^T}{2}.
\]
On the negative determinant component, $S_Q$ has an eigenvalue $-1$. With high probability on the positive component, all eigenvalues are separated from $-1$ by $\Theta(n^{-2})$. An estimate of $\lambda_{\min}(S_Q)$ to this accuracy would therefore predict $\det Q$. Since one phase query to $S_Q$ can be implemented using two phase queries to $Q$, the determinant bound gives an $\Omega(n)$ lower bound.

\paragraph{Membership queries.}
The membership proof shows that the Fourier-rank method also applies to a discontinuous Boolean oracle. For each queried point, membership in the hard ellipsoid is equivalent to a threshold
\[
 q^TQq\le \tau.
\]
Here the vector $q$ and scalar $\tau$ are determined by the queried point. The corresponding Boolean phase is discontinuous, but Fej\'er approximation expresses it as an $L^2$ limit of trigonometric polynomials whose exponential terms have frequency matrices $tqq^T$. These matrices have rank at most one. A membership query therefore increases the Fourier rank of an amplitude by at most one, just as a matrix phase query does. Reflection orthogonality then gives the same $n/2$ determinant lower bound for membership queries to the hard ellipsoids.

\paragraph{From optimization to determinant prediction.}
It remains to show that an accurate optimizer can predict the determinant. For each hidden matrix $Q$, we construct a positive-definite matrix $A_Q$ and the ellipsoid
\[
 K_{A_Q}:=\{x:x^TA_Qx\le1\}.
\]
The unique maximizer of $b^Tx$ over this ellipsoid is a normalized copy of $A_Q^{-1}b$, and its objective value supplies the missing normalization. A sufficiently accurate optimizer therefore gives an approximate application of $A_Q^{-1}$.

On the negative determinant component, $A_Q$ has one exceptional eigenvalue $2g_n$, while all its other eigenvalues are at least $3g_n$, where $g_n=\Theta(n^{-2})$ is the spectral scale used in the construction. Repeated approximate inverse applications amplify the corresponding eigendirection. After $O(\log n)$ inverse-iteration steps, the iterate is nearly aligned with this direction. On the positive component, every eigenvalue of $A_Q$ is at least $3g_n$. A final optimization call produces an objective value that lies on different sides of a fixed threshold in the two cases.

Each inverse step succeeds only with constant probability. We repeat and cluster the optimizer outputs, introducing an additional $O(\log\log n)$ factor. The complete determinant predictor uses $O(\log n\log\log n)$ optimizer invocations. Combining this reduction with the $n/2$ membership-query lower bound gives
\[
 \Omega\!\left(\frac{n}{\log n\,\log\log n}\right)
\]
membership queries per optimizer invocation.

\subsection{Related work}

The oracle framework for convex optimization goes back to Nemirovskii and Yudin and to Gr\"otschel, Lov\'asz, and Schrijver~\cite{nemirovskii1983problem,grotschel2012geometric}. Vaidya gave the first cutting-plane method that uses nearly linear-in-$n$ separation queries~\cite{vaidya1996new}. Lee, Sidford, and Wong obtained the same query bound with fewer arithmetic operations~\cite{lee2015faster}. Lee, Sidford, and Vempala later showed that $\widetilde O(n^2)$ membership queries suffice~\cite{lee2018efficient}.

The quantum algorithms of Chakrabarti, Childs, Li, and Wu and of van Apeldoorn, Gily\'en, Gribling, and de Wolf use quantum gradient estimation to reduce the membership-query complexity to $\widetilde O(n)$~\cite{chakrabarti2020quantum,van2020convex}. Our lower bound matches this dependence on $n$ up to logarithmic factors in the model studied here.

Garg, Kothari, Netrapalli, and Sherif proved quantum lower bounds for first-order convex optimization over a range of smoothness assumptions~\cite{garg2020no,garg2021near}. These results describe the regime in which the accuracy controls the complexity and the dimension is large. Parallel convex optimization has a different set of tradeoffs~\cite{duchi2012randomized,bubeck2019complexity}. Our result concerns sequential membership queries in the high-accuracy regime.

Matrix--vector query models have been studied for determinant computation, eigenvalue estimation, principal component analysis, regression, and trace estimation~\cite{dorn2009quantum,childs_et_al:LIPIcs.ICALP.2021.55,simchowitz2018tight,braverman2020gradient,sun2021querying,wimmer2014optimal,meyer2021hutch++}. Our matrix results use the continuous phase formulation. Inverse iteration and shift-and-invert are standard tools in this area~\cite{saad2003iterative,saad2011numerical,garber2016faster}. Our proof uses a version of inverse iteration that tolerates the error produced by the optimization oracle.

Independent and concurrent work of Childs~\cite{childs2026quantum} gives a different approach to the convex-optimization and real matrix--vector lower bounds studied here, and that approach yields very similar bounds to those obtained here. We refer to that work for a complementary treatment of these problems.

\subsection{Organization}

Section~\ref{sec:prelim} collects the notation used throughout the paper and defines the query models. Section~\ref{sec:matrix-lower-bounds} develops the Fourier-rank polynomial method and applies it to determinant and minimum-eigenvalue estimation. Section~\ref{sec:determinant-to-optimization} constructs the hard ellipsoids, establishes determinant hardness for exact membership queries, and shows how accurate optimization can predict the determinant through approximate inverse iteration. Section~\ref{sec:optimization-lower-bounds} chooses the final scales and proves the exactly and approximately feasible optimization lower bounds. Section~\ref{sec:strongly-convex-quadratics} applies the same construction to first-order optimization of smooth and strongly convex quadratics. Appendix~\ref{app:continuous-membership} extends the membership obstruction to continuous query registers.

\section{Preliminaries}
\label{sec:prelim}

We use two query models. The determinant and eigenvalue lower bounds use a continuous matrix phase oracle. The optimization lower bound uses a binary membership oracle on a finite set of represented points. We define both models below and comment on their relationship.

\subsection{Basic notation}

We first collect the geometric and matrix notation used in the proof.
Unless a base is displayed explicitly, $\log$ denotes the natural logarithm. This does not affect any of the asymptotics but does impact the explicit constants in some of the proofs.

For $x\in\R^n$ and $r>0$, let
\[
 B_2(x,r):=\{y\in\R^n:\norm{y-x}_2\le r\}.
\]
A convex body is a compact convex set with nonempty interior. As is standard in the oracle model for convex optimization, we assume throughout the optimization results that the algorithm is given parameters $0<r\le R$ such that
\begin{equation}
 B_2(0,r)\subseteq K\subseteq B_2(0,R).
 \label{eq:standard-ball-assumption}
\end{equation}
See~\cite{nemirovskii1983problem,grotschel2012geometric} for this standard setup.

For a nonempty set $K\subseteq\R^n$, define
\[
 \dist(x,K):=\inf_{y\in K}\norm{x-y}_2.
\]
For $\delta>0$, define
\begin{align}
 K^{+\delta}&:=\{x:\dist(x,K)\le\delta\},
 \\ 
 K^{-\delta}&:=\{x:B_2(x,\delta)\subseteq K\}.
 \label{eq:offsets}
\end{align}
For compact $K$, write
\[
 h_K(c):=\max_{x\in K}c^Tx.
\]
$K^{+\delta}$ allows points up to distance $\delta$ outside $K$. The set $K^{-\delta}$ removes the points within distance $\delta$ of the boundary. These sets are used in the standard definitions of approximate geometric oracles~\cite{grotschel2012geometric,van2020convex}.

We write $\mathbb S^{n-1}:=\{x\in\R^n:\norm{x}_2=1\}$ for the unit sphere and $\mathsf{O}(n):=\{Q\in\R^{n\times n}:Q^TQ=I\}$ for the orthogonal group.

For a matrix $A$, $\Tr(A)$ denotes its trace, $\rank(A)$ its rank, and $\col(A)$ its column space. A real symmetric matrix $A$ is positive definite if $x^TAx>0$ for every nonzero $x$. We order its eigenvalues as
\[
 \lambda_1(A)\ge\cdots\ge\lambda_n(A),
 \qquad
 \lambda_{\max}(A):=\lambda_1(A),
 \qquad
 \lambda_{\min}(A):=\lambda_n(A).
\]
For symmetric matrices $A$ and $B$, we write $A\succeq B$ when $x^T(A-B)x\ge0$ for every $x$. We also use
\[
 \norm{A}_{\rm op}:=\sup_{\norm{x}_2=1}\norm{Ax}_2,
 \qquad
 \norm{x}_A:=\sqrt{x^TAx}
\]
for the operator norm and, when $A$ is positive definite, the $A$-norm.

\subsection{Defining a discretized membership oracle}

For membership queries, we use the finite-register model of van Apeldoorn, Gily\'en, Gribling, and de Wolf~\cite{van2020convex}. Each coordinate is represented with finitely many bits. This leads to the following model.

\begin{definition}[Previously fixed finite grid]
\label{def:grid}
Before the body $K$ is chosen, fix a finite set $\mathcal G\subseteq\R^n$. Its elements label the basis states of the membership-query register.
\end{definition}

The grid can be made arbitrarily fine. For example, one may intersect $h\mathbb Z^n$ with a known box containing $K$ and choose $h>0$ as small as desired. Our lower bound holds for every finite grid fixed in advance and does not depend on its spacing. Finiteness allows the determinant proof to analyze the membership phase separately at each possible query point \footnote{This analytical need for finiteness can be avoided, at the cost of more involved arguments. This is discussed further in Section~\ref{sec:matrix-lower-bounds} with the corresponding proofs given in Appendix~\ref{app:continuous-membership}.}.

\begin{definition}[Exact coherent membership]
\label{def:exact-membership}
For $K\subseteq\R^n$ and a finite grid $\mathcal G$ fixed in advance, the exact membership oracle acts on a query point $x\in\mathcal G$ and an answer bit $b$ as
\begin{equation}
 U_K\ket{x}\ket{b}
 =\ket{x}\ket{b\oplus\1[x\in K]},
 \qquad x\in\mathcal G,
 \label{eq:exact-membership}
\end{equation}
where boundary points are counted as members of $K$.
\end{definition}

\begin{definition}[Weak coherent membership]
\label{def:weak-membership}
Fix $\delta_{\rm mem}>0$. A weak membership oracle is determined by a function $g:\mathcal G\to\{0,1\}$ such that
\[
 g(x)=1\Longrightarrow x\in K^{+\delta_{\rm mem}},
 \qquad
 g(x)=0\Longrightarrow x\notin K^{-\delta_{\rm mem}}.
\]
The oracle acts as $U_g\ket{x}\ket{b}=\ket{x}\ket{b\oplus g(x)}$. Thus either answer is allowed when $x$ lies in the promise region between $K^{-\delta_{\rm mem}}$ and $K^{+\delta_{\rm mem}}$.
\end{definition}

Unless stated otherwise, all membership queries below are exact.

\begin{remark}[Weak membership]
The lower bounds proved below also apply, with no query overhead, to algorithms required to work for every weak membership oracle allowed by Definition~\ref{def:weak-membership}. Indeed, exact membership is one admissible choice for every $\delta_{\rm mem}>0$: the function $g_K(x)=\1[x\in K]$ is valid because
\[
 K^{-\delta_{\rm mem}}\subseteq K\subseteq K^{+\delta_{\rm mem}}.
\]
This observation uses the requirement that the algorithm work for every admissible weak oracle. If a model fixes one particular convention for answers near the boundary, a separate reduction may be needed.
\end{remark}

\subsection{Optimization guarantees}

Both definitions below are stated for maximizing the linear objective $c^Tx$, which is the problem in \eqref{eq:canonical-convex-optimization} with $f(x)=-c^Tx$.

\begin{definition}[Exactly feasible optimization]
\label{def:exactly-feasible-opt}
An exactly feasible optimizer of accuracy $\eps>0$ receives a unit objective $c\in\mathbb S^{n-1}$ and returns a classical vector $\widehat x$.  With probability at least $2/3$ it must satisfy
\begin{equation}
 \widehat x\in K,
 \qquad
 c^T\widehat x\ge h_K(c)-\eps.
 \label{eq:exactly-feasible-output}
\end{equation}
\end{definition}

\begin{definition}[Approximately feasible optimization]
\label{def:approximately-feasible-opt}
Fix a tolerance $0<\delta<r$, where $r$ is the known inner radius in \eqref{eq:standard-ball-assumption}. Then $K^{-\delta}$ is nonempty. An approximately feasible optimizer with tolerance $\delta$ receives a unit objective $c$ and returns a classical vector $y$. With probability at least $2/3$ it must satisfy
\begin{equation}
 \dist(y,K)\le\delta,
 \qquad
 c^Ty\ge h_{K^{-\delta}}(c)-\delta.
 \label{eq:approximately-feasible-output}
\end{equation}
\end{definition}

In the exactly feasible problem, the output must lie in $K$ and is compared with the optimum over $K$. In the approximately feasible problem, the output may move outward by $\delta$, while the comparison set moves inward by $\delta$. This is the standard approximately feasible optimization guarantee~\cite{grotschel2012geometric,van2020convex}.

\subsection{The continuous matrix phase-query model}

The determinant and eigenvalue results use the continuous phase formulation of matrix--vector access studied by Childs, Hung, and Li~\cite{childs_et_al:LIPIcs.ICALP.2021.55}. On the Hilbert space $L^2(\R^n\times\R^n;\C)$, with query labels $(v,u)$, define
\begin{equation}
 (\mathcal P_Q\psi)(v,u)
 =e^{iu^TQv}\psi(v,u).
 \label{eq:continuous-phase}
\end{equation}
We call one application of $\mathcal P_Q$ or $\mathcal P_Q^{-1}$ a matrix phase query. The phase has the form
\[
 u^TQv=\Tr((uv^T)^TQ),
\]
and the frequency matrix $uv^T$ has rank one.

\begin{remark}[Ideal continuous translation oracle]\label{rem:ideal-continuous-translation}
In the same ideal continuous-register model, one may instead define the translation oracle
\begin{equation}
 (\mathcal O_Q\psi)(v,y)=\psi(v,y-Qv).
 \label{eq:continuous-mv}
\end{equation}
Use the Fourier-transform convention
\[
 (\mathcal F_yf)(u):=(2\pi)^{-n/2}
 \int_{\R^n}e^{iu^Ty}f(y)\,dy.
\]
Then a change of variables gives the exact identity
\begin{equation}
 \mathcal P_Q=(I\otimes\mathcal F_y)\mathcal O_Q
 (I\otimes\mathcal F_y^{-1}).
 \label{eq:mv-phase-equivalence}
\end{equation}
This identity holds in the ideal $L^2$ model. A finite-precision standard-basis encoding of $Qv$ defines a different oracle and need not produce the bilinear phase in \eqref{eq:continuous-phase}. The matrix lower bounds below are proved directly for $\mathcal P_Q$.
\end{remark}

\begin{remark}[Separation of query models]
The optimization lower bound uses only the finite binary membership oracle in Definition~\ref{def:exact-membership}. It neither invokes nor simulates the matrix phase oracle $\mathcal P_Q$. Conversely, the determinant and eigenvalue results use only $\mathcal P_Q$. The two arguments share a Fourier-rank obstruction, not an oracle simulation.
\end{remark}

\subsection{Fourier approximations}
\label{sec:Fejer-prelim}

We use one standard fact about Fej\'er means. Let $h$ be a bounded, integrable, two-periodic function on $[-1,1]$, with Fourier coefficients
\[
 \widehat h(k):=\frac12\int_{-1}^1h(t)e^{-i\pi kt}\,dt.
\]
Its $N$th Fej\'er mean is the weighted Fourier sum
\begin{equation}
 (F_Nh)(t):=\sum_{k=-N}^N
 \left(1-\frac{|k|}{N+1}\right)\widehat h(k)e^{i\pi kt}.
 \label{eq:Fejer-mean-prelim}
\end{equation}
Unlike an ordinary partial Fourier sum, this is an average of the first $N+1$ partial sums. The Fej\'er kernel is nonnegative and has unit mass. Consequently,
\[
 \norm{F_Nh}_\infty\le\norm{h}_\infty,
\]
and $F_Nh(t)\to h(t)$ at every continuity point of $h$~\cite{katznelson2004introduction}. In particular, if $h$ has only finitely many jumps per period, then the convergence holds almost everywhere and, by dominated convergence, in $L^2$. The same statement applies on any bounded interval after rescaling it to $[-1,1]$.

\section{Lower bounds on determinant and eigenvalue estimation}
\label{sec:matrix-lower-bounds}

This section first develops the Fourier-rank polynomial method as a general query lower-bound framework. We then apply it to the determinant and reduce determinant prediction to minimum-eigenvalue estimation.

\subsection{The Fourier-rank polynomial method}
\label{sec:mv-lower-bounds}

We begin with the standard quantum query algorithm model. After purifying classical randomness and intermediate measurements, a $T$-query algorithm has the form
\begin{equation}
 \ket{\psi_T(Q)}
 =U_T\mathcal O_{Q,T}U_{T-1}\cdots
 U_1\mathcal O_{Q,1}U_0\ket{\psi_0}.
 \label{eq:standard-query-algorithm}
\end{equation}
The initial state $\ket{\psi_0}$, the inter-query unitaries $U_0,\ldots,U_T$, and the final measurement are independent of the hidden matrix $Q$. Each $\mathcal O_{Q,j}$ is an allowed query or inverse query. This representation also covers adaptive algorithms: an intermediate measurement can be stored coherently in an ancilla, and later operations can be controlled by that ancilla before all measurements are deferred to the end. A query slot may likewise be controlled by the stored history; its Fourier-rank cost is at most the largest cost among its branches. Algorithms using fewer than $T$ queries on some branches can be padded with identity queries.

The method tracks the dependence of \eqref{eq:standard-query-algorithm} on $Q$. Instead of polynomial degree, we use the matrix rank of a Fourier frequency.

Throughout this subsection, $Q$ ranges over $\mathsf{O}(n)$ and $\mu_n$ denotes normalized Haar measure on $\mathsf{O}(n)$.

\begin{definition}[Fourier rank]
\label{def:fourier-rank}
For $\Xi\in\R^{n\times n}$, define
\begin{equation}
 \chi_\Xi(Q):=e^{i\Tr(\Xi^TQ)}.
 \label{eq:fourier-exponential}
\end{equation}
We call $\Xi$ the frequency matrix of $\chi_\Xi$ and $\rank\Xi$ its Fourier rank. For an integer $r\ge0$, define
\begin{equation}
 \mathcal V_r^{(n)}
 :=\overline{\operatorname{span}}^{\,L^2}
 \{\chi_\Xi:\rank\Xi\le r\}
 \subseteq L^2(\mathsf{O}(n),\mu_n).
 \label{eq:rank-space}
\end{equation}
For scalar outcome probabilities, define the corresponding $L^1$ closure
\begin{equation}
 \mathcal A_r^{(n)}
 :=\overline{\operatorname{span}}^{\,L^1}
 \{\chi_\Xi:\rank\Xi\le r\}
 \subseteq L^1(\mathsf{O}(n),\mu_n).
 \label{eq:probability-rank-space}
\end{equation}
Thus $\mathcal V_r^{(n)}$ is the rank-$r$ space used for amplitudes, whereas $\mathcal A_r^{(n)}$ is used for scalar outcome probabilities.
\end{definition}

This argument is a continuous-parameter analogue of the polynomial method for quantum queries~\cite{beals2001quantum}.  It tracks Fourier rank in place of polynomial degree.  The closure and approximation steps below are standard harmonic analysis~\cite{katznelson2004introduction}.

The following product and rank identities explain how these spaces grow:
\begin{equation}
 \chi_\Xi\chi_\Gamma=\chi_{\Xi+\Gamma},
 \qquad
 \chi_\Xi\overline{\chi_\Gamma}=\chi_{\Xi-\Gamma},
 \qquad
 \rank(\Xi\pm\Gamma)\le\rank\Xi+\rank\Gamma.
 \label{eq:frequency-algebra}
\end{equation}

The notation $\widehat\otimes$ denotes the completed Hilbert-space tensor product. Let $(Y,\nu)$ be a sigma-finite measure space and let $\mathcal K$ be a separable Hilbert space. Define
\begin{equation}
 \mathcal W_r(Y,\mathcal K)
 :=\mathcal V_r^{(n)}\widehat\otimes L^2(Y,\nu;\mathcal K).
 \label{eq:controlled-rank-space}
\end{equation}
Under the standard product-space identification, $\Psi\in\mathcal W_r(Y,\mathcal K)$ exactly when $\Psi(\cdot,y)\in\mathcal V_r^{(n)}\widehat\otimes\mathcal K$ for almost every $y$.

The notation serves three roles. The space $\mathcal V_r^{(n)}$ contains scalar amplitude functions of $Q$. The space $\mathcal A_r^{(n)}$ contains scalar outcome probabilities, for which $L^1$ convergence is sufficient. The space $\mathcal W_r(Y,\mathcal K)$ adds arbitrary query labels and workspace registers to the amplitude space. In each case, the subscript $r$ is the maximum matrix rank allowed in the Fourier frequencies. The proof uses only two bookkeeping rules: one query increases the amplitude rank by at most one, and forming a probability doubles the rank bound.

\paragraph{Fourier-rank cost of a query.}
A family of query oracles $\{\mathcal O_Q:Q\in \mathsf{O}(n)\}$ acting on a Hilbert space $\mathcal H$ has Fourier-rank cost at most $s$ if, for every $r\ge0$, both the query and its inverse map
\[
 \mathcal V_r^{(n)}\widehat\otimes\mathcal H
 \quad\text{into}\quad
 \mathcal V_{r+s}^{(n)}\widehat\otimes\mathcal H.
\]
The next lemma gives the basic example. A query controlled by a label $y$ has cost one whenever the corresponding frequency matrix has rank at most one for almost every $y$.

\begin{lemma}[Controlled rank-one phase growth]
\label{lem:controlled-phase-growth}
Let $\Gamma:Y\to\R^{n\times n}$ be measurable and satisfy $\rank\Gamma(y)\le1$ for almost every $y$. Define
\begin{equation}
 (P_\Gamma\Psi)(Q,y):=\chi_{\Gamma(y)}(Q)\Psi(Q,y).
 \label{eq:controlled-phase}
\end{equation}
Then $P_\Gamma$ is unitary and
\[
 P_\Gamma\mathcal W_r(Y,\mathcal K)
 \subseteq\mathcal W_{r+1}(Y,\mathcal K)
\]
for every $r\ge0$. The same statement holds for $P_\Gamma^{-1}$.
\end{lemma}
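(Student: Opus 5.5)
Unitarity is immediate. $P_\Gamma$ is multiplication by the measurable function $(Q,y)\mapsto\chi_{\Gamma(y)}(Q)$, which has modulus one everywhere. On $L^2(\mathsf O(n)\times Y;\mathcal K)$ it is therefore an isometry, with inverse given by multiplication by $\overline{\chi_{\Gamma(y)}(Q)}=\chi_{-\Gamma(y)}(Q)$. Since $\rank(-\Gamma(y))=\rank\Gamma(y)\le1$, the inverse is of the same form with $\Gamma$ replaced by $-\Gamma$. The statement for $P_\Gamma^{-1}$ will then follow from the one for $P_\Gamma$, so it remains to prove the inclusion.

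For the inclusion I will reduce to a fiberwise statement. By the product-space identification stated after \eqref{eq:controlled-rank-space}, $\Psi\in\mathcal W_r(Y,\mathcal K)$ means that $\Psi(\cdot,y)\in\mathcal V_r^{(n)}\widehat\otimes\mathcal K$ for almost every $y$. The same criterion characterizes $\mathcal W_{r+1}$, and $P_\Gamma\Psi$ is automatically in $L^2$ by unitarity. Thus it suffices to show the following: for a fixed matrix $\Gamma_0$ with $\rank\Gamma_0\le1$, multiplication by $\chi_{\Gamma_0}$ maps $\mathcal V_r^{(n)}\widehat\otimes\mathcal K$ into $\mathcal V_{r+1}^{(n)}\widehat\otimes\mathcal K$. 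We then apply this with $\Gamma_0=\Gamma(y)$ on the full-measure set where the rank bound holds.

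For the fiberwise claim, let $M$ denote multiplication by $\chi_{\Gamma_0}$. It is a unitary on $L^2(\mathsf O(n);\mathcal K)\cong L^2(\mu_n)\widehat\otimes\mathcal K$, so in particular it is continuous. The algebraic tensor span of elements $\chi_\Xi\otimes k$ with $\rank\Xi\le r$ and $k\in\mathcal K$ is dense in $\mathcal V_r^{(n)}\widehat\otimes\mathcal K$. By \eqref{eq:frequency-algebra},
\[
M(\chi_\Xi\otimes k)=\chi_{\Xi+\Gamma_0}\otimes k,
\qquad
\rank(\Xi+\Gamma_0)\le r+1,
\]
so $M$ maps this dense span into $\mathcal V_{r+1}^{(n)}\widehat\otimes\mathcal K$. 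That space is closed, being the completed tensor product of a closed subspace with $\mathcal K$, and $M$ is continuous. Hence $M$ maps the closure $\mathcal V_r^{(n)}\widehat\otimes\mathcal K$ into $\mathcal V_{r+1}^{(n)}\widehat\otimes\mathcal K$ as well.

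The main obstacle I expect is measure-theoretic rather than algebraic: one has to justify the fiberwise characterization of $\mathcal W_r$ so that the $y$-dependent multiplier is handled correctly. I will rely on the identification already asserted in the text. If a more self-contained argument is wanted, I would instead work directly in $\mathcal W_r(Y,\mathcal K)=\mathcal V_r^{(n)}\widehat\otimes L^2(Y;\mathcal K)$. Take the orthogonal projection $\Pi_{r+1}$ onto $\mathcal V_{r+1}^{(n)}$ tensored with the identity; it acts fiberwise in $y$. Then show that $(I-\Pi_{r+1})P_\Gamma\Psi$ vanishes for almost every $y$ by the fiber argument above, using Fubini to pass from fibers to the whole space.
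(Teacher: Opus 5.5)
Your proof is correct and matches the paper's argument essentially step for step. Like the paper, you get unitarity from the unimodular multiplier, prove a fiberwise density-and-closedness statement using $\chi_{\Xi}\chi_{\Gamma(y)}=\chi_{\Xi+\Gamma(y)}$ and rank subadditivity, pass back to $\mathcal W_{r+1}$ via the product-space identification, and handle the inverse by replacing $\Gamma$ with $-\Gamma$. Working with the dense algebraic span of $\chi_\Xi\otimes k$ only makes the $\mathcal K$ factor slightly more explicit than the paper does.
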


\begin{proof}
The multiplier in \eqref{eq:controlled-phase} is measurable and has modulus one, so $P_\Gamma$ is unitary. Fix a label $y$ for which $\rank\Gamma(y)\le1$. If
\[
 f(Q)=\sum_{j=1}^N c_j\chi_{\Xi_j}(Q),
 \qquad
 \rank\Xi_j\le r,
\]
then
\[
 \chi_{\Gamma(y)}(Q)f(Q)
 =\sum_{j=1}^N c_j\chi_{\Xi_j+\Gamma(y)}(Q),
 \qquad
 \rank(\Xi_j+\Gamma(y))\le r+1.
\]
Multiplication by $\chi_{\Gamma(y)}$ is an $L^2$ isometry. Approximation by finite Fourier sums and closedness of $\mathcal V_{r+1}^{(n)}$ therefore show that the $y$-fiber is mapped from $\mathcal V_r^{(n)}\widehat\otimes\mathcal K$ into $\mathcal V_{r+1}^{(n)}\widehat\otimes\mathcal K$. The multiplier also preserves the norm of every fiber. Integrating those norms over $Y$ gives the claimed inclusion in $\mathcal W_{r+1}(Y,\mathcal K)$. Replacing $\Gamma$ by $-\Gamma$ proves the statement for the inverse.
\end{proof}

Let $\mathcal H_{\rm alg}$ denote the Hilbert space of all query, workspace, and ancilla registers. The following theorem is the reusable core of the method. It separates the rank bookkeeping from both the formula for a particular oracle and the choice of target function.

\begin{theorem}[Fourier-rank polynomial method]
\label{thm:Fourier-rank-method}
\label{lem:mv-rank-growth}
Consider a $T$-query algorithm of the form \eqref{eq:standard-query-algorithm}. Suppose its $j$th query has Fourier-rank cost at most $s_j$, and set
\[
 R:=\sum_{j=1}^T s_j.
\]
The algorithm has a purification whose final state lies in
\[
 \mathcal V_R^{(n)}\widehat\otimes\mathcal H_{\rm alg},
\]
and every final outcome probability belongs to $\mathcal A_{2R}^{(n)}$.

Moreover, let $F:\mathsf{O}(n)\to\{-1,+1\}$ be a measurable target function such that
\begin{equation}
 \int_{\mathsf{O}(n)}F(Q)f(Q)\,d\mu_n(Q)=0
 \qquad
 \text{for every }f\in\mathcal A_{2R}^{(n)}.
 \label{eq:generic-target-orthogonality}
\end{equation}
If the algorithm outputs a prediction in $\{-1,+1\}$, its Haar-average success probability for predicting $F(Q)$ is exactly $1/2$.
\end{theorem}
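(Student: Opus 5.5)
We prove the three claims in order: the amplitude rank bound by induction over the circuit, the probability bound via the Born rule, and the success-probability statement from orthogonality.

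\textbf{Amplitude rank.} We view the purified state $\ket{\psi_j(Q)}$ after $j$ queries as an element of $L^2(\mathsf O(n),\mu_n)\widehat\otimes\mathcal H_{\rm alg}$. The induction hypothesis is
\[
\ket{\psi_j}\in\mathcal V_{R_j}^{(n)}\widehat\otimes\mathcal H_{\rm alg},
\qquad R_j:=\sum_{i\le j}s_i .
\]
The base case holds because $U_0\ket{\psi_0}$ is constant in $Q$, and the constant function is $\chi_0$ with $\rank 0=0$, so it lies in $\mathcal V_0^{(n)}$. For the inductive step, each $U_j$ acts as $I\otimes U_j$ on the completed tensor product. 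It maps $\mathcal V_r^{(n)}\widehat\otimes\mathcal H_{\rm alg}$ into itself, since it is a bounded operator preserving algebraic tensors $f\otimes h$ with $f\in\mathcal V_r^{(n)}$, and the space is closed. By the definition of Fourier-rank cost, the $j$th query (or its inverse) raises the index from $R_{j-1}$ to $R_{j-1}+s_j=R_j$. Padding and controlled queries have already been absorbed into the cost definition in the discussion preceding the theorem, so after $T$ steps the final state lies in $\mathcal V_R^{(n)}\widehat\otimes\mathcal H_{\rm alg}$.

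\textbf{Outcome probabilities.} Fix an orthonormal basis $\{e_k\}$ of $\mathcal H_{\rm alg}$ adapted to the measurement. Each outcome projector $\Pi$ can be diagonalized this way, or we write $p(Q)=\norm{\Pi\psi_T(Q)}^2$. Since $\Pi$ acts only on $\mathcal H_{\rm alg}$, we have $\Pi\psi_T\in\mathcal V_R^{(n)}\widehat\otimes\mathcal H_{\rm alg}$, and its coordinates $a_k(Q)=\langle e_k,\Pi\psi_T(Q)\rangle$ lie in $\mathcal V_R^{(n)}$. Then
\[
p(Q)=\sum_k |a_k(Q)|^2 .
\]
Each term is handled as follows. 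Choose finite Fourier sums $f_m\to a_k$ in $L^2$. By \eqref{eq:frequency-algebra}, $f_m\overline{f_m}$ is a finite combination of $\chi_{\Xi-\Gamma}$ with $\rank(\Xi-\Gamma)\le 2R$. The estimate
\[
\norm{f\bar f-g\bar g}_1\le \norm{f-g}_2\bigl(\norm f_2+\norm g_2\bigr)
\]
(Cauchy--Schwarz) shows that $|a_k|^2\in\mathcal A_{2R}^{(n)}$. The series $\sum_k|a_k|^2$ has nonnegative terms and $\int\sum_k|a_k|^2\,d\mu_n=\int\norm{\Pi\psi_T}^2\,d\mu_n\le1$. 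Its partial sums therefore converge to $p$ in $L^1$ by monotone convergence, and closedness of $\mathcal A_{2R}^{(n)}$ gives $p\in\mathcal A_{2R}^{(n)}$.

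\textbf{Success probability.} Let $p_+(Q)$ be the probability of outputting $+1$ and $p_-=1-p_+$. The Haar-average success probability is
\[
\int\Bigl(p_+\tfrac{1+F}{2}+p_-\tfrac{1-F}{2}\Bigr)d\mu_n
=\tfrac12+\tfrac12\int F\,(p_+-p_-)\,d\mu_n .
\]
Here $p_+-p_-=2p_+-1\in\mathcal A_{2R}^{(n)}$, because constants are $\chi_0$. Hypothesis \eqref{eq:generic-target-orthogonality} makes the integral vanish, so the success probability is exactly $\tfrac12$. Note that $F$ is bounded, so the pairing with $L^1$ functions is well defined and continuous.

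\textbf{Main obstacle.} The only delicate step is the analytic bookkeeping. We must verify that $I\otimes U$ and coordinate extraction preserve the closed subspaces, and that the sum over a possibly infinite-dimensional $\mathcal H_{\rm alg}$ and the Born-rule products pass correctly from $L^2$ to $L^1$ closures. We handle this with the bilinear $L^1$ estimate and monotone convergence rather than by attempting a pointwise argument.
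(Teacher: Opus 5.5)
Your proof is correct and follows the paper's argument. The structure is the same: induction on the rank spaces, the Born rule doubling the frequency rank together with the Cauchy--Schwarz $L^2\to L^1$ estimate, and the identity $\tfrac12+\tfrac12\int F(2p_+-1)\,d\mu_n$, which orthogonality kills. The only difference is cosmetic: you pass through scalar coordinates $a_k$ and monotone convergence, whereas the paper approximates the vector-valued state directly by finite sums $\sum_j\chi_{\Xi_{a,j}}\xi_{a,j}$ and pairs them with an arbitrary POVM effect $E$. The paper's route avoids the basis expansion and covers non-projective measurements; yours adapts verbatim with $E^{1/2}$ in place of $\Pi$.
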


\begin{proof}
The initial state has only the zero frequency. A $Q$-independent operation acts as the identity on the $L^2(\mathsf{O}(n),\mu_n)$ factor and therefore preserves every rank space. The definition of Fourier-rank cost and induction show that the final state has amplitude rank at most $R$. Intermediate measurements and classical randomness may be included in the purification.

Let $\Psi$ be the final state. There are finite expansions
\[
 \Psi^{(a)}(Q)=\sum_j\chi_{\Xi_{a,j}}(Q)\xi_{a,j},
 \qquad
 \rank\Xi_{a,j}\le R,
\]
that converge to $\Psi$ in $L^2(\mathsf{O}(n),\mu_n;\mathcal H_{\rm alg})$. For a final POVM effect $0\preceq E\preceq I$, their outcome probabilities are
\[
 p_E^{(a)}(Q)
 =\sum_{j,k}\chi_{\Xi_{a,k}-\Xi_{a,j}}(Q)
 \langle\xi_{a,j},E\xi_{a,k}\rangle.
\]
Every displayed frequency has rank at most $2R$. Since $\norm{E}_{\rm op}\le1$, Cauchy--Schwarz gives
\[
 \norm{p_E^{(a)}-p_E}_{L^1}
 \le\norm{\Psi^{(a)}-\Psi}_{L^2}
 \left(\norm{\Psi^{(a)}}_{L^2}+ \Big\|\Psi \Big\|_{L^2}\right)
 \longrightarrow0.
\]
Thus $p_E\in\mathcal A_{2R}^{(n)}$.

For the last claim, let $p_+(Q)$ be the probability that the algorithm outputs $+1$. Its Haar-average success probability is
\[
 \frac12+\frac12\int_{\mathsf{O}(n)}
 F(Q)\bigl(2p_+(Q)-1\bigr)\,d\mu_n(Q).
\]
Since $p_+\in\mathcal A_{2R}^{(n)}$ and the constant function belongs to $\mathcal A_0^{(n)}\subseteq\mathcal A_{2R}^{(n)}$, equation~\eqref{eq:generic-target-orthogonality} makes the integral zero.
\end{proof}

Theorem~\ref{thm:Fourier-rank-method} reduces a query lower bound to two tasks: bound the Fourier-rank cost of one query, and prove that the target is orthogonal to all frequencies below a given rank. This division is analogous to the ordinary polynomial method, where one first bounds the degree accumulated by a query algorithm and then proves that the target cannot be represented or approximated at that degree. The framework is not tied to one oracle: below, matrix phase queries have cost one, and Section~\ref{sec:membership-rank} shows that exact membership queries for the hard ellipsoids also have cost one.

The analogy also suggests notions of thresholded Fourier rank and approximate Fourier rank, corresponding to threshold degree and approximate degree. The determinant result below gives an especially strong obstruction: every function of Fourier rank below $n$ has zero Haar correlation with the determinant. Consequently, fewer than $n/2$ queries cannot obtain even an arbitrarily small average advantage over random guessing. We leave a systematic development of these notions and their application to other continuous query problems for future work.

\subsection{The determinant lower bound}

We now specialize the method to $F(Q)=\det Q$. The two determinant components are
\[
 \mathsf{O}^+(n):=\mathsf{SO}(n)=\{Q\in \mathsf{O}(n):\det Q=+1\},
 \qquad
 \mathsf{O}^-(n):=\{Q\in \mathsf{O}(n):\det Q=-1\}.
\]
Each component has $\mu_n$-measure $1/2$. We write $\mu_{n,+}$ and $\mu_{n,-}$ for the corresponding conditional probability measures.

The oracle side of the method is immediate. For the matrix phase oracle, take $Y=\R^n\times\R^n$, with labels $(v,u)$, and set $\Gamma(v,u)=uv^T$. Since $\rank(uv^T)\le1$, Lemma~\ref{lem:controlled-phase-growth} shows that each matrix phase query has Fourier-rank cost at most one.

It remains to establish the determinant-specific orthogonality property. A low-rank frequency cannot correlate with the determinant because a suitable reflection preserves the frequency and reverses the determinant.

\begin{lemma}[Reflection orthogonality]
\label{lem:reflection}
If $\rank\Xi<n$, then
\begin{equation}
 \int_{\mathsf{O}(n)}\det(Q)\chi_\Xi(Q)\,d\mu_n(Q)=0.
 \label{eq:reflection-zero}
\end{equation}
Consequently, $Q\mapsto\det Q$ is orthogonal to $\mathcal V_{n-1}^{(n)}$, and
\begin{equation}
 \int_{\mathsf{O}(n)}\det(Q)f(Q)\,d\mu_n(Q)=0
 \qquad
 \text{for every }f\in\mathcal A_{n-1}^{(n)}.
 \label{eq:reflection-L1}
\end{equation}
\end{lemma}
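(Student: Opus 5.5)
The plan is to prove \eqref{eq:reflection-zero} with a single reflection and Haar invariance, and then to extend it to $\mathcal V_{n-1}^{(n)}$ and $\mathcal A_{n-1}^{(n)}$ by linearity and continuity.

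\textbf{Choosing the reflection.} Fix $\Xi$ with $\rank\Xi<n$. The column space $\col(\Xi)$ is a proper subspace of $\R^n$, so we may pick a unit vector $w$ orthogonal to it, meaning $w^T\Xi=0$. Define the Householder reflection $J:=I-2ww^T$. It satisfies $J\in\mathsf{O}(n)$, $J^T=J$, and $\det J=-1$. We also have $J^T\Xi=\Xi-2w(w^T\Xi)=\Xi$. Consequently
\[
 \Tr(\Xi^TJQ)=\Tr((J^T\Xi)^TQ)=\Tr(\Xi^TQ),
\]
so $\chi_\Xi(JQ)=\chi_\Xi(Q)$. On the other hand, $\det(JQ)=-\det Q$.

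\textbf{Haar invariance.} Left-invariance of $\mu_n$ under $Q\mapsto JQ$ gives
\[
 I:=\int\det(Q)\chi_\Xi(Q)\,d\mu_n=\int\det(JQ)\chi_\Xi(JQ)\,d\mu_n=-I,
\]
so $I=0$. This proves \eqref{eq:reflection-zero}.

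\textbf{Passing to the closures.} The inner product $\langle\det,\chi_\Xi\rangle_{L^2}$ is the integral of $\det(Q)\overline{\chi_\Xi(Q)}=\det(Q)\chi_{-\Xi}(Q)$, and $\rank(-\Xi)=\rank\Xi$, so it vanishes. Orthogonality to finite spans follows by linearity. Because $\det\in L^2$ and the inner product is continuous, it extends to the $L^2$ closure $\mathcal V_{n-1}^{(n)}$.

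For $\mathcal A_{n-1}^{(n)}$ we use $L^1$ convergence instead. Take $f\in\mathcal A_{n-1}^{(n)}$ and finite sums $f_a$ with $f_a\to f$ in $L^1$. Since $|\det Q|=1$,
\[
 \abs{\int\det(Q)\bigl(f-f_a\bigr)\,d\mu_n}\le\norm{f-f_a}_{L^1}\to0.
\]
Each $\int\det\cdot f_a$ is zero by linearity and \eqref{eq:reflection-zero}, which gives \eqref{eq:reflection-L1}.

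The only potential subtlety is that the integral pairs $\det$ with $f$ itself, not with $\overline f$. This causes no difficulty because the set of frequencies of rank at most $n-1$ is closed under negation, and each step above is elementary.
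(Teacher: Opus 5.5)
Your proof is correct and follows the paper's argument essentially step for step. Both use the Householder reflection $J=I-2ww^T$ with $w\perp\col(\Xi)$, the Haar change of variables $Q\mapsto JQ$ forcing the integral to equal its negative, and the extension to $\mathcal V_{n-1}^{(n)}$ and $\mathcal A_{n-1}^{(n)}$ via continuity of the $L^2$ inner product and the bound $\abs{\int\det(Q)f(Q)\,d\mu_n}\le\norm{f}_{L^1}$. Your remark that the $L^2$ pairing involves $\overline{\chi_\Xi}=\chi_{-\Xi}$, which has the same rank, fills in a small detail the paper leaves implicit.
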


\begin{proof}
Choose a unit vector $w\in\col(\Xi)^\perp$ and let $J=I_n-2ww^T$. Then $\det J=-1$ and $J^T\Xi=\Xi$. Moreover,
\[
 \chi_\Xi(JQ)
 =e^{i\Tr(\Xi^TJQ)}
 =e^{i\Tr((J^T\Xi)^TQ)}
 =\chi_\Xi(Q).
\]
The Haar-measure-preserving change of variables $Q\mapsto JQ$ therefore changes the sign of the integrand in \eqref{eq:reflection-zero} while preserving its Fourier exponential. The integral equals its negative and is zero. Continuity of the $L^2$ inner product gives the conclusion for $\mathcal V_{n-1}^{(n)}$.

For the $L^1$ statement, define
\[
 \Lambda(f):=\int_{\mathsf{O}(n)}\det(Q)f(Q)\,d\mu_n(Q).
\]
Since $\abs{\det Q}=1$, we have $\abs{\Lambda(f)}\le\norm{f}_{L^1}$. Thus $\Lambda$ is continuous on $L^1(\mathsf{O}(n),\mu_n)$. It vanishes on every finite Fourier sum with frequency rank below $n$, so it also vanishes on their $L^1$ closure $\mathcal A_{n-1}^{(n)}$.
\end{proof}

\begin{theorem}[Continuous matrix phase determinant lower bound]
\label{thm:mv-determinant}
Any algorithm which predicts $\det Q$ for $Q\sim\mu_n$ with average success probability strictly greater than $1/2$, using the matrix phase oracle \eqref{eq:continuous-phase}, makes at least $n/2$ queries.
\end{theorem}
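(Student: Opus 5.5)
We prove the contrapositive: if an algorithm makes $T<n/2$ matrix phase queries, its Haar-average success probability for predicting $\det Q$ is exactly $1/2$. The argument is a direct assembly of Theorem~\ref{thm:Fourier-rank-method} and Lemma~\ref{lem:reflection}, with the oracle-side input supplied by Lemma~\ref{lem:controlled-phase-growth}.

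\emph{Step 1: normal form.} First put the algorithm in the form \eqref{eq:standard-query-algorithm}. Classical randomness and intermediate measurements are purified, and branches that use fewer queries are padded with identity queries, so every branch performs exactly $T$ query slots. Each slot is either $\mathcal P_Q$ or $\mathcal P_Q^{-1}$, possibly controlled by stored history.

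\emph{Step 2: Fourier-rank cost per query.} Take $Y=\R^n\times\R^n$ and $\Gamma(v,u)=uv^T$. Lemma~\ref{lem:controlled-phase-growth} then says that $\mathcal P_Q$ and its inverse have Fourier-rank cost at most one. The remaining registers enter only through the tensor factor $\mathcal K$, and the query-label space $L^2(Y)$ is absorbed into $\mathcal H_{\rm alg}$ via the identification $\mathcal W_r(Y,\mathcal K)=\mathcal V_r^{(n)}\widehat\otimes L^2(Y;\mathcal K)$. For a history-controlled slot, the cost is the maximum over its branches, which is again at most one. Hence $R=\sum_j s_j\le T$.

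\emph{Step 3: orthogonality and conclusion.} Since $T<n/2$ and $T$ is an integer, $2R\le 2T\le n-1$. Therefore $\mathcal A_{2R}^{(n)}\subseteq\mathcal A_{n-1}^{(n)}$. By \eqref{eq:reflection-L1} of Lemma~\ref{lem:reflection}, $F(Q)=\det Q$, which takes values in $\{-1,+1\}$ on $\mathsf{O}(n)$, satisfies the orthogonality hypothesis \eqref{eq:generic-target-orthogonality}. The final clause of Theorem~\ref{thm:Fourier-rank-method} then gives Haar-average success probability exactly $1/2$. This contradicts the assumed advantage, so $T\ge n/2$.

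The only point that needs real care is Step 2. We must check that a query acting on the full register space $L^2(\R^n\times\R^n)\otimes\mathcal H'$ fits the controlled-phase form of Lemma~\ref{lem:controlled-phase-growth}, with the label measure on $\R^{2n}$ being sigma-finite (Lebesgue measure). We must also check that controlled or inverse variants still have cost one. Both follow directly from the lemma's fiberwise argument, so the theorem itself is essentially a corollary.
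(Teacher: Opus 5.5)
Your proposal is correct and is the paper's proof: rank-one frequencies $uv^T$ give each query Fourier-rank cost one via Lemma~\ref{lem:controlled-phase-growth}, so $2T\le n-1$. Lemma~\ref{lem:reflection} and the decision clause of Theorem~\ref{thm:Fourier-rank-method} then force Haar-average success exactly $1/2$. Your remarks on sigma-finiteness of the label space and on history-controlled slots only make explicit what the paper already handles when it sets up the method.
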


\begin{proof}
Suppose that $T<n/2$. Every query has Fourier-rank cost at most one, so $R=T$. Lemma~\ref{lem:reflection} makes the determinant orthogonal to $\mathcal A_{2T}^{(n)}\subseteq\mathcal A_{n-1}^{(n)}$. Theorem~\ref{thm:Fourier-rank-method} therefore makes the Haar-average success probability exactly $1/2$.
\end{proof}

The lower bound is distributional: every algorithm using fewer than $n/2$ queries has Haar-average success probability $1/2$. In particular, every such algorithm fails on some $Q\in \mathsf{O}(n)$. Therefore, the lower bound naturally holds for worst-case instances. Since orthogonal matrices are real matrices, this also gives a worst-case lower bound for determinant computation on general real matrices.

\subsection{From determinant to minimum-eigenvalue estimation}
For the rest of this subsection, let $n$ be even and define the symmetric part of $Q$ by
\[
 S_Q:=\frac{Q+Q^T}{2}.
\]
The nonreal eigenvalues of a real orthogonal matrix come in rotation pairs $e^{\pm i\theta}$.  We represent each pair by an angle $\theta\in[0,\pi]$; these are the positive eigenangles.  On the corresponding real two-dimensional plane, $S_Q$ acts as $\cos\theta$ times the identity.  Thus a rotation pair with eigenangle $\theta$ contributes the eigenvalue $\cos\theta$ to $S_Q$.

Each rotation pair has determinant one. Therefore $\det Q=-1$ forces an eigenvalue $-1$. On the other hand, a Haar-random matrix in $\mathsf{SO}(n)$ usually has no eigenangle very close to $\pi$. We make this separation quantitative below. We use the standard real canonical form and Haar conventions for orthogonal matrices as in \cite{Meckes19}.

The reduction will compare an estimate of $\lambda_{\min}(S_Q)$ with a threshold of the form $-1+g_n$. We first prove that, with high probability, the negative component has minimum eigenvalue $-1$, while the positive component has minimum eigenvalue at least $-1+2g_n$.

We choose the width of the interval near $\pi$ so that its expected number of eigenangles is at most $1/100$. Set
\begin{equation}
 \alpha:=\frac\pi{100},
 \qquad
 g_n:=\frac{1-\cos(\alpha/n)}2.
 \label{eq:constants}
\end{equation}
Since $1-\cos t=\Theta(t^2)$ near zero, we have $g_n=\Theta(n^{-2})$. The definition also gives
\[
 \cos(\pi-\alpha/n)=-\cos(\alpha/n)=-1+2g_n.
\]

\begin{lemma}[Endpoint eigenangle density]
\label{lem:eigenangle-density}
Let $n=2N$. The expected eigenangle density $\rho$ is defined by the identity
\[
 \E\bigl[\textup{number of eigenangles in }J\bigr]
 =\int_J\rho(\theta)\,d\theta
\]
for every measurable $J\subseteq[0,\pi]$.

Under $\mu_{n,+}$, the $N$ positive eigenangles have expected density
\[
 \rho_N^+(\theta)=\frac1\pi+\frac2\pi
 \sum_{j=1}^{N-1}\cos^2(j\theta)\le\frac n\pi.
\]
Under $\mu_{n,-}$, there is almost surely one forced eigenvalue $+1$ and one forced eigenvalue $-1$. After removing them, the $N-1$ positive eigenangles have expected density
\[
 \rho_N^-(\theta)=\frac2\pi
 \sum_{j=1}^{N-1}\sin^2(j\theta)\le\frac n\pi.
\]
Consequently, the expected number of the indicated angles in $J\subseteq[0,\pi]$ is at most $n|J|/\pi$.
\end{lemma}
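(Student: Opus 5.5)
We will derive both densities from the Weyl integration formula for the two components of $\mathsf{O}(2N)$, as in \cite{Meckes19}, and then read off the one-point correlation functions from the associated determinantal structure. For $\mathsf{SO}(2N)$, the joint density of the positive eigenangles $\theta_1,\dots,\theta_N\in[0,\pi]$ is proportional to
\[
\prod_{j<k}(\cos\theta_j-\cos\theta_k)^2 .
\]
For $\mathsf{O}^-(2N)$, the canonical form has forced eigenvalues $\pm1$, and the remaining $N-1$ eigenangles have joint density proportional to
\[
\prod_{j<k}(\cos\theta_j-\cos\theta_k)^2\prod_j\sin^2\theta_j .
\]
We take both facts from the standard real canonical form. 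In particular, the forced pair $\pm1$ on $\mathsf{O}^-(2N)$ comes from $\det Q=-1$: the real eigenvalues must pair off so that the determinant is $-1$ in even dimension, and almost surely no further real eigenvalues occur.

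Each of these densities is the squared Vandermonde determinant of an orthonormal system on $[0,\pi]$, so it defines a determinantal point process whose one-point density is $K(\theta,\theta)=\sum_j\phi_j(\theta)^2$.

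\begin{itemize}
\item In the $\mathsf{SO}(2N)$ case we use the basis $1,\cos\theta,\dots,\cos((N-1)\theta)$, which spans the polynomials in $\cos\theta$ of degree less than $N$. Its orthonormal version on $[0,\pi]$ with Lebesgue measure is
\[
\phi_0=\pi^{-1/2},\qquad \phi_j=\sqrt{2/\pi}\,\cos(j\theta),
\]
which gives $\rho_N^+$.
\item In the $\mathsf{O}^-$ case we use $\sin\theta\cdot(\text{polynomials in }\cos\theta\text{ of degree}<N-1)$. This space is spanned by $\sin(j\theta)$ for $j=1,\dots,N-1$, with orthonormal versions $\sqrt{2/\pi}\,\sin(j\theta)$, which gives $\rho_N^-$.
\end{itemize}

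The identification of the one-point correlation with $\sum_j\phi_j^2$ is the standard Gaudin lemma. The expected count $\E[\#\{\text{eigenangles in }J\}]=\int_J K(\theta,\theta)\,d\theta$ follows by integrating out the other variables.

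The inequalities are then immediate. For the positive component,
\[
\rho_N^+\le \frac1\pi+\frac{2(N-1)}{\pi}=\frac{2N-1}{\pi}\le \frac n\pi .
\]
For the negative component,
\[
\rho_N^-\le \frac{2(N-1)}{\pi}\le\frac n\pi .
\]
Integrating over $J$ gives the final count bound $n|J|/\pi$.

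The main obstacle is getting the Weyl density normalizations and weights right, in particular the $\sin^2$ factor on the negative component. We will either cite these from \cite{Meckes19}, or check them via consistency: the total mass $\int_0^\pi\rho\,d\theta$ must equal $N$ and $N-1$ respectively, and the formulas above satisfy this, since $\int_0^\pi\cos^2(j\theta)\,d\theta=\int_0^\pi\sin^2(j\theta)\,d\theta=\pi/2$. We will also note that the $N$ positive eigenangles are unordered labels, so multiplicity conventions do not affect expected counts.
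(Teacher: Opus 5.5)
Your proposal is correct and follows the same route as the paper, which cites the Weyl integration formula for the two components of $\mathsf{O}(2N)$ from Meckes (Chapter~3) and then bounds the densities term by term. You also write out the determinantal (Gaudin-lemma) computation with the orthonormal cosine and sine bases that the paper leaves to the reference; your kernels, pointwise bounds, and total-mass checks are all consistent with the stated densities.
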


\begin{proof}
These density formulas follow from the Weyl integration formula for the two even orthogonal components; see~\cite[Chapter~3]{Meckes19}.  The pointwise upper bounds follow term by term.  Integrating over $J$ gives the expectation bound.
\end{proof}

We now define high-probability events on which the two determinant components are separated by a spectral gap of order $n^{-2}$. Let
\[
 J_n:=[\pi-\alpha/n,\pi].
\]
Define $\mathcal E_{n,+}$ to be the event that no positive eigenangle lies in $J_n$. Define $\mathcal E_{n,-}$ to be the event that no nonforced positive eigenangle lies in $J_n$ and that the forced endpoint eigenvalues are simple.

\begin{lemma}[High-probability spectral separation]
\label{lem:regular-events}
For every even $n$,
\begin{equation}
 \mu_{n,+}(\mathcal E_{n,+})\ge\frac{99}{100},
 \qquad
 \mu_{n,-}(\mathcal E_{n,-})\ge\frac{99}{100}.
 \label{eq:regular-probability}
\end{equation}
On the positive event,
\begin{equation}
 Q\in\mathcal E_{n,+}
 \quad\Longrightarrow\quad
 \lambda_{\min}(S_Q)\ge-1+2g_n.
 \label{eq:regular-S-plus}
\end{equation}
On the negative event, let $u$ be a unit $-1$ eigenvector of $S_Q$.  Then
\begin{equation}
 Q\in\mathcal E_{n,-}
 \quad\Longrightarrow\quad
 \lambda_{\min}(S_Q)=-1,\qquad
 \dim\ker(S_Q+I_n)=1,\qquad
 S_Q|_{u^\perp}\succeq(-1+2g_n)I_{u^\perp}.
 \label{eq:regular-S-minus}
\end{equation}
In particular, the eigenvalue $-1$ is simple and every other eigenvalue is at least $-1+2g_n$.
\end{lemma}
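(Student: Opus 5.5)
The plan has three parts: the probability bounds, then the positive event, then the negative event. Throughout we use the real canonical form of $Q$. There is an orthonormal basis in which $Q$ is block diagonal, with $2\times2$ rotation blocks of angle $\theta_j\in(0,\pi)$ and $1\times1$ blocks $\pm1$. A block $-1$ is recorded as eigenangle $\pi$, and a pair of $-1$ eigenvalues counts as a rotation pair with $\theta=\pi$. In this basis $S_Q$ is diagonal, with entries $\cos\theta_j$ (twice each) on the rotation planes and $\pm1$ on the fixed lines. So the spectrum of $S_Q$ is read off directly from the eigenangles, and the whole proof comes down to controlling how many eigenangles lie in $J_n$.

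\emph{Probabilities.} By Lemma~\ref{lem:eigenangle-density}, the expected number of (nonforced) positive eigenangles in $J_n$ is at most $n|J_n|/\pi=n(\alpha/n)/\pi=1/100$. Markov's inequality then bounds the probability of at least one such eigenangle by $1/100$. On the negative component we also need the forced endpoint eigenvalues to be simple. A second $-1$ (or $+1$) would mean a nonforced eigenangle equal to $\pi$ (or $0$). Since $\rho_N^-$ is a bounded density, the set $\{0,\pi\}$ carries zero expected count, so this happens with probability zero. The case $\pi$ is in any case already excluded by the $J_n$ condition. Hence $\mu_{n,-}(\mathcal E_{n,-})\ge 99/100$. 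For the positive component, ``$n$ even, $\det=+1$'' means the $\pm1$ eigenvalues occur in pairs. I will treat these as degenerate rotation pairs with angle $0$ or $\pi$, consistent with the convention that there are $N$ positive eigenangles almost surely.

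\emph{Positive event.} On $\mathcal E_{n,+}$ every eigenangle satisfies $\theta<\pi-\alpha/n$. Since $\cos$ is decreasing on $[0,\pi]$, each eigenvalue $\cos\theta$ of $S_Q$ exceeds $\cos(\pi-\alpha/n)=-1+2g_n$ by \eqref{eq:constants}. This gives \eqref{eq:regular-S-plus}.

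\emph{Negative event.} On $\mathcal E_{n,-}$ the canonical form has exactly one block $-1$, exactly one block $+1$, and $N-1$ rotation blocks with angles in $[0,\pi-\alpha/n)$. So $S_Q$ has eigenvalue $-1$ with multiplicity one, with eigenvector $u$ spanning the $-1$ line. Its other eigenvalues are $+1$ and the numbers $\cos\theta_j>-1+2g_n$. This gives $\lambda_{\min}(S_Q)=-1$, $\dim\ker(S_Q+I_n)=1$, and, since $u^\perp$ is the invariant complement spanned by the remaining blocks, $S_Q|_{u^\perp}\succeq(-1+2g_n)I$.

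The only delicate step is the bookkeeping of the measure-zero degenerate cases: coincident or endpoint eigenangles, and the meaning of ``positive eigenangles'' when an angle equals $0$ or $\pi$. I would handle these by noting that the density formulas of Lemma~\ref{lem:eigenangle-density} already account for them: any eigenvalue $-1$ beyond the forced one contributes an eigenangle at $\pi\in J_n$, so it is excluded by the $J_n$ event.
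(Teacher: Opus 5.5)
Your proposal is correct and follows essentially the same route as the paper: Markov's inequality applied to the expected eigenangle count in $J_n$ from Lemma~\ref{lem:eigenangle-density}, a null-set argument for extra endpoint multiplicities, and monotonicity of $\cos$ on $[0,\pi]$ to read off the spectrum of $S_Q$ block by block. Your observation that an extra $-1$ on the negative component would itself be an eigenangle at $\pi\in J_n$ adds a small clarification. It shows that simplicity of the forced $-1$, the part the spectral conclusions actually use, needs no separate null-set argument; only the forced $+1$ does.
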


\begin{proof}
Lemma~\ref{lem:eigenangle-density} bounds the expected number of relevant angles in $J_n$ by
\[
 \frac{n|J_n|}{\pi}=\frac{n(\alpha/n)}\pi=\frac1{100}.
\]
Markov's inequality gives \eqref{eq:regular-probability}.  Endpoint multiplicities larger than the forced ones form a Haar-null set.

If $\theta\le\pi-\alpha/n$, then
\[
 \cos\theta\ge\cos(\pi-\alpha/n)=-1+2g_n.
\]
On $\mathcal E_{n,+}$, this applies to every eigenangle and proves \eqref{eq:regular-S-plus}. On $\mathcal E_{n,-}$, the forced eigenvalue $-1$ is simple, and the same bound applies to every other eigenvalue. This proves \eqref{eq:regular-S-minus}.
\end{proof}

One phase query to $S_Q$ can be implemented with two phase queries to $Q$.  Indeed,
\begin{equation}
 \begin{aligned}
  u^TS_Qv
  &=\frac12u^TQv+\frac12v^TQu,\\
  e^{iu^TS_Qv}
  &=e^{i(u/2)^TQv}\,e^{i(v/2)^TQu}.
 \end{aligned}
 \label{eq:S-query-simulation}
\end{equation}
The first factor is one $Q$ phase query with a rescaled Fourier label. The second swaps the roles of the two query labels and uses one more $Q$ phase query. This transpose-access equivalence is standard in the continuous matrix phase-query model~\cite{childs_et_al:LIPIcs.ICALP.2021.55}.
The same construction with the signs reversed simulates an inverse phase query to $S_Q$.

\begin{theorem}[High-accuracy minimum-eigenvalue lower bound]
\label{thm:min-eigenvalue}
Let $n$ be even. Any algorithm with matrix phase-oracle access to $S_Q$ which, for every $Q\in \mathsf{O}(n)$, estimates $\lambda_{\min}(S_Q)$ to additive error at most $g_n/3$ with success probability at least $2/3$ makes at least $n/4$ queries. In particular, the task requires $\Omega(n)$ queries at accuracy $\Theta(n^{-2})$.
\end{theorem}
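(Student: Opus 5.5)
We argue by contradiction, reducing to Theorem~\ref{thm:mv-determinant}. Suppose an algorithm $\mathcal B$ makes $T<n/4$ phase queries to $S_Q$ and, for every $Q\in\mathsf{O}(n)$, outputs with probability at least $2/3$ an estimate $\widehat\lambda$ satisfying $\abs{\widehat\lambda-\lambda_{\min}(S_Q)}\le g_n/3$. From $\mathcal B$ we build a determinant predictor $\mathcal D$ with access only to $\mathcal P_Q$. Each phase query (or inverse phase query) to $S_Q$ is replaced by two phase queries to $Q$, using the factorization \eqref{eq:S-query-simulation}: the first factor uses the rescaled label $(v,u/2)$, and the second swaps the two label registers, rescales, queries, and swaps back. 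The simulation is exact in the ideal $L^2$ model. Equivalently, in Fourier-rank language the frequency $\tfrac12(uv^T+vu^T)$ has rank at most two, so each $S_Q$ query has Fourier-rank cost at most two. Either way, $\mathcal D$ uses $2T<n/2$ queries to $Q$, and its final decision is determined by the output of $\mathcal B$.

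The decision rule thresholds at $-1+g_n$. If $\widehat\lambda\le -1+g_n$, then $\mathcal D$ outputs $-1$; otherwise it outputs $+1$.

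To verify correctness, we first take $Q\in\mathcal E_{n,-}$. By \eqref{eq:regular-S-minus} we have $\lambda_{\min}(S_Q)=-1$, so a successful estimate satisfies $\widehat\lambda\le -1+g_n/3<-1+g_n$. Next take $Q\in\mathcal E_{n,+}$. By \eqref{eq:regular-S-plus} we have $\lambda_{\min}(S_Q)\ge -1+2g_n$, so a successful estimate satisfies $\widehat\lambda\ge -1+5g_n/3>-1+g_n$. In both cases $\mathcal D$ outputs $\det Q$ whenever $\mathcal B$ succeeds and the regular event holds.

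We then average over Haar measure, splitting it into the two components with weight $1/2$ each. On each component, Lemma~\ref{lem:regular-events} gives the regular event with probability at least $99/100$, and on that event the conditional success probability is at least $2/3$. Hence the Haar-average success probability of $\mathcal D$ is at least $\tfrac{99}{100}\cdot\tfrac23>\tfrac12$. This requires measurability of the success probability as a function of $Q$, which follows because it is an $L^1$ function by Theorem~\ref{thm:Fourier-rank-method}. This contradicts Theorem~\ref{thm:mv-determinant}, so $T\ge n/4$. Since $g_n=\Theta(n^{-2})$, the accuracy $g_n/3$ is $\Theta(n^{-2})$, and the bound is $\Omega(n)$.

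The step needing the most care is the simulation: we must confirm that controlled and inverse $S_Q$ queries, which may act on arbitrary superpositions of labels, really cost only rank two each. The cleanest route is to bypass circuit-level simulation and apply Lemma~\ref{lem:controlled-phase-growth} twice, once with $\Gamma(v,u)=\tfrac12 uv^T$ and once with $\Gamma(v,u)=\tfrac12 vu^T$. This places the final state in $\mathcal V_{2T}^{(n)}$, and Lemma~\ref{lem:reflection} then applies directly, since $4T<n$. A secondary point is that $n$ must be even for Lemma~\ref{lem:eigenangle-density}, and this is assumed in the statement.
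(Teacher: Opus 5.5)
Your proposal is correct and matches the paper's proof: you threshold the estimate at $-1+g_n$, use the regular events $\mathcal E_{n,\pm}$ to get per-component success at least $\frac{99}{100}\cdot\frac23>\frac12$, and simulate each $S_Q$ query with two $Q$ queries so that Theorem~\ref{thm:mv-determinant} forces $2T\ge n/2$. Your alternative accounting, which assigns Fourier-rank cost two to each $S_Q$ query via two applications of Lemma~\ref{lem:controlled-phase-growth}, is the same simulation stated in Fourier-rank terms, and it is also valid.
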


\begin{proof}
Suppose the estimator makes $T$ queries to $S_Q$.  Given its estimate $\widetilde\lambda$, classify the determinant by the threshold
\[
 \tau:=-1+g_n.
\]
Output $+1$ when $\widetilde\lambda>\tau$ and output $-1$ otherwise.

If $Q\in\mathcal E_{n,-}$ and the estimate succeeds, \eqref{eq:regular-S-minus} gives
\[
 \widetilde\lambda\le-1+\frac{g_n}{3}<-1+g_n.
\]
The classifier therefore outputs $-1$.

If $Q\in\mathcal E_{n,+}$ and the estimate succeeds, \eqref{eq:regular-S-plus} gives
\[
 \widetilde\lambda\ge-1+2g_n-\frac{g_n}{3}
 =-1+\frac{5g_n}{3}>-1+g_n.
\]
The classifier therefore outputs $+1$.

On each determinant component, the corresponding event $\mathcal E_{n,\pm}$ has probability at least $99/100$. At every input, the estimator succeeds with probability at least $2/3$. Hence the classifier is correct on each component with probability at least
\[
 \frac{99}{100}\cdot\frac23=\frac{33}{50}>\frac12.
\]
Its Haar-average success probability is therefore strictly greater than $1/2$.

By \eqref{eq:S-query-simulation}, the $T$ queries to $S_Q$ use $2T$ queries to $Q$. The determinant lower bound in Theorem~\ref{thm:mv-determinant} gives $2T\ge n/2$. Thus $T\ge n/4$.
\end{proof}

Since every $S_Q$ is a real symmetric matrix, the theorem gives a worst-case lower bound for real symmetric matrices in every even dimension. For odd $n$, apply the construction in dimension $n-1$ and append one known eigenvalue:
\[
 \widetilde S_Q:=S_Q\oplus[1].
\]
Then $\lambda_{\min}(\widetilde S_Q)=\lambda_{\min}(S_Q)$. A phase query to $\widetilde S_Q$ is a phase query to $S_Q$ multiplied by the known phase $e^{iu_nv_n}$, so it uses one query to $S_Q$. Thus minimum-eigenvalue estimation for real symmetric $n\times n$ matrices requires $\Omega(n)$ matrix phase queries at accuracy $\Theta(n^{-2})$ in every dimension.

\section{From determinant hardness to convex optimization}
\label{sec:determinant-to-optimization}

The hard family in this section is indexed by a hidden orthogonal matrix. Fix the ambient optimization dimension $n$, set
\[
 d:=2\lfloor n/2\rfloor
\]
and let $Q\in \mathsf{O}(d)$ be the hidden matrix. Thus $d=n$ when $n$ is even and $d=n-1$ when $n$ is odd; in the latter case, we append one fixed coordinate to the construction. We use the notation of Section~\ref{sec:matrix-lower-bounds}: in particular, the lower bound analyzes $Q$ drawn from the Haar measure on $\mathsf{O}(d)$, where
\[
 S_Q=\frac{Q+Q^T}{2},
\]
the spectral scale is denoted $g_d$, and the regular events in the two determinant components are $\mathcal E_{d,+}$ and $\mathcal E_{d,-}$. 

We now outline the hard family of convex sets. We shift and rescale $S_Q$ to obtain a positive-definite matrix $A_Q$, and use $A_Q$ as the quadratic form of a centered ellipsoid, up to an overall scale. The bottom of the spectrum of $A_Q$ preserves the information of the sign of $\det Q$. On $\mathcal E_{d,-}$, the matrix $A_Q$ has one exceptional eigenvalue $2g_d$, while all its other eigenvalues are at least $3g_d$. On $\mathcal E_{d,+}$, every eigenvalue is at least $3g_d$.

This family has two useful features. First, membership of a represented point can be written as a threshold
\[
 q^TQq \le \tau.
\]
The Boolean phase associated with one membership query therefore has Fourier rank at most one (Lemma \ref{lem:threshold-rank-one}), so predicting $\det Q$ still requires at least $d/2$ membership queries. Second, the unique maximizer of $b^Tx$ over the ellipsoid is a normalized copy of $A_Q^{-1}b$. An accurate optimizer consequently gives an approximate inverse application. On $\mathcal E_{d,-}$, repeated inverse applications find the exceptional eigendirection with eigenvalue $2g_d$; the maximum of the linear objective in this direction is then large. Testing the value of the approximate maximum therefore distinguishes the two determinant components.

Suppose now that an optimizer uses at most $T$ membership queries per invocation. The reduction described above uses
\[
 O(\log d\,\log\log d)
\]
optimizer invocations. These invocations will predict $\det Q$ with probability strictly greater than $1/2$. The membership-query lower bound for determinant will then force
\[
 T=\Omega\!\left(\frac{d}{\log d\,\log\log d}\right).
\]
The next subsection gives the exact matrices and ellipsoids. The remaining subsections prove the membership lower bound and formalize the optimizer-to-inverse and inverse-iteration steps.

\subsection{The hard family of ellipsoids}
\label{sec:hard-family}

Section~\ref{sec:matrix-lower-bounds} established a difference between the two determinant components at the bottom of the spectrum of $S_Q$. On $\mathcal E_{d,-}$, the matrix $S_Q$ has one simple eigenvalue $-1$, while every other eigenvalue is at least $-1+2g_d$. On $\mathcal E_{d,+}$, every eigenvalue is at least $-1+2g_d$. We encode this difference in a positive-definite matrix. First, the transformation $H_Q=(I_d-S_Q)/2$ maps the spectrum of $S_Q$ from $[-1,1]$ to $[0,1]$. We then reverse the order of these eigenvalues and shift them upward by $2g_d$. Define
\begin{equation}
 H_Q:=\frac{I_d-S_Q}{2}
     =\frac12I_d-\frac14(Q+Q^T),
 \qquad
 A_Q:=(1+2g_d)I_d-H_Q.
 \label{eq:hard-matrices}
\end{equation}
Thus
\begin{equation}
 A_Q=\left(\frac12+2g_d\right)I_d+\frac14(Q+Q^T).
 \label{eq:A-explicit}
\end{equation}
On $\mathcal E_{d,-}$, $A_Q$ has one simple eigenvalue equal to $2g_d$. On $\mathcal E_{d,+}$, all its eigenvalues are at least $3g_d$. This is the gap used by inverse iteration later in the reduction.

\begin{lemma}[Spectral geometry]
\label{lem:spectral-geometry}
For every $Q\in \mathsf{O}(d)$,
\begin{equation}
 0\preceq H_Q\preceq I_d,
 \qquad
 2g_dI_d\preceq A_Q\preceq(1+2g_d)I_d.
 \label{eq:spectral-bounds}
\end{equation}
A rotation pair with eigenangle $\theta\in[0,\pi]$ contributes the eigenvalue $(1-\cos\theta)/2$ to $H_Q$.

On the positive regular event,
\begin{equation}
 Q\in\mathcal E_{d,+}
 \quad\Longrightarrow\quad
 H_Q\preceq(1-g_d)I_d,
 \qquad
 A_Q\succeq3g_dI_d.
 \label{eq:regular-plus}
\end{equation}
On the negative regular event, let $u$ be the unit vector from \eqref{eq:regular-S-minus}. The matrix $H_Q$ has simple top eigenvalue $1$, and all its other eigenvalues are at most $1-g_d$:
\begin{equation}
 H_Qu=u,
 \qquad
 \dim\ker(I_d-H_Q)=1,
 \qquad
 H_Q|_{u^\perp}\preceq(1-g_d)I_{u^\perp}.
 \label{eq:regular-minus}
\end{equation}
Consequently, $A_Q$ has a simple eigenvalue $2g_d$ and satisfies
\begin{equation}
 A_Qu=2g_du,
 \qquad
 A_Q|_{u^\perp}\succeq3g_dI_{u^\perp}.
 \label{eq:inverse-gap}
\end{equation}
\end{lemma}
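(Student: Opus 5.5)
The lemma is pure linear algebra on top of Lemma~\ref{lem:regular-events}, so the plan is to transfer the spectral statements about $S_Q$ through the affine maps in \eqref{eq:hard-matrices}. Since $S_Q$, $H_Q$ and $A_Q$ are polynomials of degree at most one in $S_Q$, they share eigenvectors. The eigenvalue maps are
\[
 \lambda\mapsto\frac{1-\lambda}{2}\quad(S_Q\to H_Q),
 \qquad
 \mu\mapsto 1+2g_d-\mu\quad(H_Q\to A_Q).
\]
Both maps are bijections, and the first reverses order, so multiplicities and eigenspaces are preserved. Every claim therefore reduces to a statement about the spectrum of $S_Q$.

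For the global bounds \eqref{eq:spectral-bounds}, I would use that $Q$ is orthogonal. Then $\norm{Q}_{\rm op}=\norm{Q^T}_{\rm op}=1$, so $\norm{S_Q}_{\rm op}\le1$ and $-I_d\preceq S_Q\preceq I_d$. Applying the first map gives $0\preceq H_Q\preceq I_d$. Applying the second map gives $2g_dI_d\preceq A_Q\preceq(1+2g_d)I_d$. For the rotation-pair claim, I use the fact recalled before Lemma~\ref{lem:eigenangle-density}: on the invariant plane of a pair $e^{\pm i\theta}$, $S_Q$ acts as $\cos\theta$ times the identity. Hence $H_Q$ acts there as $(1-\cos\theta)/2$. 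The real eigenvalues $\pm1$ of $Q$ fit the same formula with $\theta=0$ and $\theta=\pi$.

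For the regular events, on $\mathcal E_{d,+}$ the bound \eqref{eq:regular-S-plus} says $\lambda_{\min}(S_Q)\ge-1+2g_d$. This gives $H_Q\preceq\frac{1-(-1+2g_d)}{2}I_d=(1-g_d)I_d$, and then $A_Q\succeq(1+2g_d-(1-g_d))I_d=3g_dI_d$.

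On $\mathcal E_{d,-}$ I use \eqref{eq:regular-S-minus}. The unit vector $u$ satisfies $S_Qu=-u$, so $H_Qu=u$. The kernel of $S_Q+I_d$ is one-dimensional, and $\ker(I_d-H_Q)=\ker(S_Q+I_d)$, which gives simplicity. Because $S_Q$ is symmetric, $u^\perp$ is invariant, and there $S_Q\succeq(-1+2g_d)I$. This yields $H_Q|_{u^\perp}\preceq(1-g_d)I$. Applying the second map gives $A_Qu=(1+2g_d-1)u=2g_du$ and $A_Q|_{u^\perp}\succeq3g_dI$. Combined with $2g_d<3g_d$, this shows that $2g_d$ is a simple eigenvalue of $A_Q$.

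No step is a genuine obstacle. The only point needing care is the invariance of $u^\perp$ under $H_Q$ and $A_Q$, which follows from symmetry, so that restricting to $u^\perp$ is meaningful. One should also note that $g_d>0$ is what makes $2g_d$ strictly isolated from the rest of the spectrum of $A_Q$.
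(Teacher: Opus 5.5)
Your proposal is correct and follows essentially the same route as the paper. The paper also bounds $S_Q$ globally via orthogonality (it writes $\abs{z^TQz}\le\norm z^2$ where you use $\norm{S_Q}_{\rm op}\le1$), reads the rotation-pair eigenvalue from the real canonical form, and pushes \eqref{eq:regular-S-plus} and \eqref{eq:regular-S-minus} through the affine maps defining $H_Q$ and $A_Q$. Your explicit remarks on the invariance of $u^\perp$ and on $g_d>0$ isolating $2g_d$ spell out details the paper leaves implicit.
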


\begin{proof}
For real $z$, the scalar identity $z^TQ^Tz=z^TQz$ and orthogonality give $\abs{z^TQz}\le\norm z^2$. Hence
\[
 z^TH_Qz=\frac12\norm z^2-\frac12z^TQz
 \in[0,\norm z^2].
\]
This proves $0\preceq H_Q\preceq I_d$. Subtracting these bounds from $(1+2g_d)I_d$ gives the two bounds on $A_Q$ in \eqref{eq:spectral-bounds}.

On a real rotation block of angle $\theta$, we have $S_Q=\cos\theta I_2$. Therefore $H_Q=(I_d-S_Q)/2$ has eigenvalue $(1-\cos\theta)/2$ on that block. This is the standard real canonical form used in~\cite{Meckes19}.

If $Q\in\mathcal E_{d,+}$, equation~\eqref{eq:regular-S-plus} and the identity $H_Q=(I_d-S_Q)/2$ give $H_Q\preceq(1-g_d)I_d$. The definition of $A_Q$ then gives $A_Q\succeq3g_dI_d$. This proves \eqref{eq:regular-plus}.

If $Q\in\mathcal E_{d,-}$, equation~\eqref{eq:regular-S-minus} gives $S_Qu=-u$, with this eigenvalue simple, and $S_Q|_{u^\perp}\succeq(-1+2g_d)I_{u^\perp}$. Applying $H_Q=(I_d-S_Q)/2$ gives \eqref{eq:regular-minus}. Subtracting these eigenvalues from $1+2g_d$ gives \eqref{eq:inverse-gap} and proves that the eigenvalue $2g_d$ of $A_Q$ is simple.
\end{proof}

To append the fixed coordinate when $n$ is odd, define
\begin{equation}
 d_n:=d,
 \qquad
 \ell_n:=n-d_n\in\{0,1\},
 \qquad
 \overline A_{n,Q}:=
 \begin{cases}
 A_Q,&\ell_n=0,\\
 A_Q\oplus[1],&\ell_n=1.
 \end{cases}
 \label{eq:odd-embedding}
\end{equation}
For use after this section, we package the hidden-block parameters in notation indexed only by the ambient dimension:
\begin{equation}
 \mathcal Q_n:=\mathsf{O}(d_n),
 \qquad
 \overline g_n:=g_{d_n},
 \qquad
 \overline{\mathcal E}_{n,\pm}:=\mathcal E_{d_n,\pm}.
 \label{eq:ambient-notation}
\end{equation}
For a fixed scale $s>0$, define the ellipsoid
\begin{equation}
 K_{s,Q}:=\{y\in\R^n:y^T\overline A_{n,Q}y\le s^2\}.
 \label{eq:scaled-body}
\end{equation}
The spectral bounds in \eqref{eq:spectral-bounds} give
\begin{equation}
 B_2(0,r_s)\subseteq K_{s,Q}\subseteq B_2(0,R_s),
 \qquad
 r_s:=\frac{s}{\sqrt{1+2g_d}},
 \qquad
 R_s:=\frac{s}{\sqrt{2g_d}},
 \qquad d=d_n.
 \label{eq:scaled-ball-bounds}
\end{equation}
These are the hard ellipsoids used throughout the rest of the proof.

\subsection{Determinant hardness for membership queries}
\label{sec:membership-rank}

We begin with the hard problem that will be reduced to optimization. For the family $K_{s,Q}$, a membership answer is a threshold function of $q^TQq$. Its Fourier frequencies are scalar multiples of $qq^T$ and therefore have rank at most one. It follows that $M$ membership queries can produce output probabilities of Fourier rank at most $2M$. Reflection orthogonality from Lemma~\ref{lem:reflection} then rules out any correlation with $\det Q$ when $2M<d$.

\paragraph{Discretized query points.}
The finite grid makes the membership register finite, allowing us to analyze the membership phase separately at each possible query point. Its spacing does not enter the Fourier-rank argument, so the grid may be chosen as finely as desired. The oracle returns the exact membership bit at every represented point.

Fix $d=d_n$, a scale $s>0$, and a finite grid $\mathcal G\subseteq\R^n$ chosen independently of $Q$. By \eqref{eq:scaled-ball-bounds}, $K_{s,Q}\subseteq B_2(0,R_s)$; hence only the finite set
\[
 \mathcal X:=\mathcal G\cap B_2(0,R_s)
\]
has a $Q$-dependent membership answer.

For $y=(y',y'')\in\R^d\times\R^{\ell_n}$, define
\begin{equation}
 q_y:=\frac{y'}s,
 \qquad
 t_y:=\frac{\norm{y''}}s,
 \qquad
 \tau_y:=2-(1+4g_d)\norm{q_y}^2-2t_y^2.
 \label{eq:membership-parameters}
\end{equation}
We now rewrite the membership predicate in a form that exposes its dependence on $Q$. Equation~\eqref{eq:A-explicit} gives
\[
 \frac{y^T\overline A_{n,Q}y}{s^2}
 =\left(\frac12+2g_d\right)\norm{q_y}^2
    +\frac12q_y^TQq_y+t_y^2.
\]
Therefore
\begin{equation}
 y\in K_{s,Q}
 \quad\Longleftrightarrow\quad
 q_y^TQq_y\le\tau_y.
 \label{eq:membership-threshold}
\end{equation}

Our next goal is to quantify how much Fourier rank one membership answer can introduce. For a fixed query point, all dependence on the hidden matrix is through the single scalar $q^TQq$. Although thresholding this scalar gives a discontinuous function, the threshold can be approximated in $L^2$ by trigonometric polynomials in $q^TQq$. Each exponential in such a polynomial has the form
\[
 e^{itq^TQq}=\chi_{tqq^T}(Q),
\]
whose frequency matrix $tqq^T$ has rank at most one. The lemma formalizes this observation.

\begin{lemma}[The membership phase has Fourier rank at most one]
\label{lem:threshold-rank-one}
Fix $q\in\R^d$ and $\tau\in\R$. The corresponding membership phase is the following sign-valued function of the hidden matrix $Q$:
\[
 \phi_{q,\tau}(Q):=(-1)^{\1[q^TQq\le\tau]}.
\]
Thus $\phi_{q,\tau}(Q)=-1$ when the threshold test returns $1$, and $\phi_{q,\tau}(Q)=+1$ when it returns $0$. For the membership query at a represented point $y$, the relevant parameters are $q=q_y$ and $\tau=\tau_y$ from \eqref{eq:membership-parameters}.
Then $\phi_{q,\tau}\in\mathcal V_1^{(d)}$.  More precisely, there are trigonometric polynomials
\begin{equation}
 p_a(Q)=\sum_{j=-a}^a c_{a,j}e^{it_{a,j}q^TQq}
       =\sum_{j=-a}^a c_{a,j}\chi_{t_{a,j}qq^T}(Q)
 \label{eq:Fejer-approximation}
\end{equation}
with $\abs{p_a(Q)}\le1$ such that $p_a\to\phi_{q,\tau}$ almost everywhere and in $L^2(\mathsf{O}(d),\mu_d)$.
\end{lemma}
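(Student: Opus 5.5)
The plan is to write $\phi_{q,\tau}(Q)=h(q^TQq)$ for a fixed one-variable step function $h$ and then apply the Fej\'er facts from Section~\ref{sec:Fejer-prelim}. First I handle the degenerate case. If $q=0$, then $q^TQq=0$, so $\phi_{q,\tau}$ is the constant $\pm1$. This constant is $\chi_0$ up to sign, and I can take $p_a$ to be that constant for all $a$.

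Now assume $q\neq0$. For every $Q\in\mathsf O(d)$ we have $\abs{q^TQq}\le\norm q^2=:L$. So only the behavior of $h$ on $[-L,L]$ matters. I rescale by setting $x=q^TQq/L'$ with $L'=2L$, so that $x$ ranges over $[-1/2,1/2]$, strictly inside $[-1,1]$. Define
\[
 h(x):=(-1)^{\1[L'x\le\tau]}
\]
on $[-1,1]$ and extend it $2$-periodically. Then $h$ is bounded and has at most two jumps per period: the threshold $x=\tau/L'$ if it lies in the period, and possibly the wrap-around point $\pm1$.

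I then take $p_a:=(F_ah)(q^TQq/L')$. By \eqref{eq:Fejer-mean-prelim} this equals
\[
 p_a(Q)=\sum_{j=-a}^a\Bigl(1-\tfrac{|j|}{a+1}\Bigr)\widehat h(j)\,e^{i\pi j\,q^TQq/L'}.
\]
This has exactly the form \eqref{eq:Fejer-approximation} with $t_{a,j}=\pi j/L'$. Each frequency matrix $t_{a,j}qq^T$ has rank at most one. The bound $\abs{p_a}\le\norm{F_ah}_\infty\le\norm h_\infty=1$ follows from positivity of the Fej\'er kernel.

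For convergence, $F_ah(x)\to h(x)$ at every continuity point of $h$. The only discontinuity in the relevant range $[-1/2,1/2]$ is $x_0=\tau/L'$. So $p_a(Q)\to\phi_{q,\tau}(Q)$ for every $Q$ outside the level set
\[
 Z:=\{Q:q^TQq=\tau\}.
\]
The main obstacle is showing that $Z$ is $\mu_d$-null. Since $\mathsf O(d)$ is a compact real-analytic manifold, I argue as follows. The function $Q\mapsto q^TQq$ is the restriction of a linear functional, hence real-analytic. On each of the two connected components it is nonconstant, because $q\neq0$ and both $Q$ and $JQ$ (for a reflection $J$) range over orthogonal matrices whose value $q^TQq/\norm q^2$ sweeps an interval of positive length. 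The zero set of a nonconstant real-analytic function on a connected analytic manifold has measure zero with respect to the smooth Haar density. Alternatively, I can note that $q^TQq/\norm q^2=\langle e,Qe\rangle$ for $e=q/\norm q$. When $Q$ is Haar, $Qe$ is uniform on the sphere, so this quantity has an absolutely continuous distribution and the level set $\{\cdot=\tau/\norm q^2\}$ has probability zero. I would use this second argument because it is cleaner, and it works for $d\ge2$.

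Hence $p_a\to\phi_{q,\tau}$ almost everywhere. Since $\abs{p_a}\le1$ and $\mu_d$ is a probability measure, dominated convergence upgrades this to $L^2(\mathsf O(d),\mu_d)$ convergence. Each $p_a$ lies in the span of $\{\chi_\Xi:\rank\Xi\le1\}$, and $\mathcal V_1^{(d)}$ is $L^2$-closed. Therefore $\phi_{q,\tau}\in\mathcal V_1^{(d)}$.
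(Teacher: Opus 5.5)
Your proposal is correct and follows the paper's proof: you take Fej\'er means of a periodized step function composed with $q^TQq$, so every frequency is $t\,qq^T$ of rank at most one; you get almost-everywhere convergence from the atomless law of $\langle e,Qe\rangle$ under Haar measure; and you upgrade to $L^2$ by dominated convergence. The only difference is cosmetic: you periodize on a doubled interval so the wrap-around jump falls outside the range of $q^TQq$. The paper instead periodizes on $[-\norm{q}^2,\norm{q}^2]$ and lets atomlessness also absorb the possible jumps at the endpoints.
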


\begin{proof}
The case $q=0$ is constant. Otherwise, $q^TQq/\norm q^2$ has the distribution of one coordinate of a uniform point on the sphere and is atomless. On the interval $[-\norm q^2,\norm q^2]$, let $h(t)=(-1)^{\1[t\le\tau]}$ and extend this function periodically. The resulting periodic function has only finitely many jumps per period. Its Fej\'er means from Section~\ref{sec:Fejer-prelim} are bounded in modulus by one and converge at every continuity point. Composing with $q^TQq$ gives almost-everywhere convergence, while every Fourier term has frequency $tqq^T$ of rank at most one. Dominated convergence supplies the $L^2$ conclusion.
\end{proof}

Note that “Fourier rank at most one” does not mean that $\phi_{q,\tau}$ is a single exponential. It means that this exact Boolean phase lies in the $L^2$-closed span of exponentials whose frequency matrices have rank at most one. The Fej\'er polynomials are used to establish this analytic statement, they are not implemented as actual queries.

In the Hadamard basis of the answer qubit, a membership query acts as multiplication by the membership phase via phase-kickback. The preceding lemma therefore gives the same rank-growth rule as a matrix phase query.

\begin{lemma}[Fourier-rank growth for exact membership]
\label{lem:membership-rank-growth}
After $M$ calls to exact membership for $K_{s,Q}$, every purified algorithmic state lies in $\mathcal V_M^{(d)}\widehat\otimes\mathcal H_{\rm alg}$. Every final outcome probability belongs to $\mathcal A_{2M}^{(d)}$.
\end{lemma}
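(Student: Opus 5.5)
The plan is to verify that one exact membership query to $K_{s,Q}$ has Fourier-rank cost at most one. Once that holds, the conclusion follows verbatim from Theorem~\ref{thm:Fourier-rank-method} with $s_j=1$ for every $j$, giving $R=M$.

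\textbf{Step 1: diagonalize the oracle.} Conjugate the answer qubit by a Hadamard gate. This conjugation is $Q$-independent, so it preserves every rank space. In the Hadamard basis the oracle $U_{K_{s,Q}}$ acts on $\ket{y}\ket{-}$ as multiplication by $(-1)^{\1[y\in K_{s,Q}]}$ and acts trivially on $\ket{y}\ket{+}$. We take the query label space $Y=\mathcal G\times\{+,-\}$ with counting measure, and the workspace Hilbert space as $\mathcal K$. The query then becomes a label-controlled multiplication operator $(P\Psi)(Q,y,b)=\phi_{y,b}(Q)\Psi(Q,y,b)$, where
\[
 \phi_{y,+}\equiv1,\qquad \phi_{y,-}(Q)=(-1)^{\1[y\in K_{s,Q}]}.
\]
The operator $P$ is self-inverse, so the inverse query needs no separate treatment.

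\textbf{Step 2: identify each phase.} For $y\notin B_2(0,R_s)$, the containment \eqref{eq:scaled-ball-bounds} gives a constant phase, which lies in $\mathcal V_0^{(d)}$. For $y\in\mathcal X$, the equivalence \eqref{eq:membership-threshold} shows $\phi_{y,-}=\phi_{q_y,\tau_y}$. By Lemma~\ref{lem:threshold-rank-one}, this phase is the a.e.\ and $L^2$ limit of uniformly bounded trigonometric polynomials $p_a$ whose frequencies $t_{a,j}q_yq_y^T$ have rank at most one.

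\textbf{Step 3: control the rank growth.} This is the main obstacle. Lemma~\ref{lem:controlled-phase-growth} applies only to a single exponential per label, not to an $L^2$ limit of sums of exponentials. Fix a label and take $f=\sum_k c_k\chi_{\Xi_k}\xi_k$, a finite sum with $\rank\Xi_k\le r$. Then $p_af$ is a finite sum of terms $\chi_{\Xi_k+t_{a,j}qq^T}$, each of rank at most $r+1$ by \eqref{eq:frequency-algebra}. Dominated convergence, using $\abs{p_a}\le1$, $\abs{\phi}=1$ and the bounded function $f$, gives $p_af\to\phi f$ in $L^2$. So $\phi f\in\mathcal V_{r+1}^{(d)}\widehat\otimes\mathcal K$. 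Multiplication by $\phi$ is an isometry, and such finite sums are dense in $\mathcal V_r^{(d)}\widehat\otimes\mathcal K$. Together with closedness of $\mathcal V_{r+1}^{(d)}$, the inclusion therefore extends to all of $\mathcal V_r^{(d)}\widehat\otimes\mathcal K$.

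\textbf{Step 4: conclude.} The label set $\mathcal G\times\{\pm\}$ is finite, so summing over fibers is immediate. Each membership query therefore has Fourier-rank cost at most one. Theorem~\ref{thm:Fourier-rank-method}, applied in dimension $d$, then places the final state in $\mathcal V_M^{(d)}\widehat\otimes\mathcal H_{\rm alg}$ and every outcome probability in $\mathcal A_{2M}^{(d)}$. The extra coordinate used when $n$ is odd is $Q$-independent and is absorbed into $q_y$ and $\tau_y$.
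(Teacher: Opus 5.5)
Your proposal is correct and follows essentially the same route as the paper. Both arguments use Hadamard-basis phase kickback, the bounded Fej\'er approximants of Lemma~\ref{lem:threshold-rank-one}, dominated convergence plus closedness for a single query, finiteness of the grid for the controlled sum, the involution property for the inverse, and finally Theorem~\ref{thm:Fourier-rank-method}. The only cosmetic difference is the order of approximation: you first approximate the state by finite Fourier sums and then extend by isometry and density, whereas the paper applies $p_a$ directly to a general $\Psi$, where each exponential term already raises the rank by at most one, and takes the limit in one step.
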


\begin{proof}
Use the Hadamard basis $\ket{\widehat z}=2^{-1/2}\sum_{b=0}^1(-1)^{zb}\ket b$ of the answer qubit. For a query point $y$, one membership call fixes the $z=0$ component and multiplies the $z=1$ component by $\phi_{q_y,\tau_y}$.  If $\Psi\in\mathcal V_r^{(d)}\widehat\otimes\mathcal H_{\rm alg}$, use the bounded approximants from Lemma~\ref{lem:threshold-rank-one}.  Multiplication by each $p_a$ adds a rank-one frequency, and
\[
 \norm{(p_a-\phi_{q_y,\tau_y})\Psi}_{L^2}^2
 =\int\abs{p_a-\phi_{q_y,\tau_y}}^2\norm{\Psi(Q)}^2d\mu_d(Q)
 \longrightarrow0
\]
by dominated convergence. Closedness therefore shows that one controlled query maps rank $r$ to rank at most $r+1$. There are finitely many represented query points, so the complete membership operation is a finite controlled sum of these maps. The membership oracle is an involution, so the same bound holds for its inverse. Theorem~\ref{thm:Fourier-rank-method} now proves both statements.
\end{proof}

\begin{theorem}[Determinant lower bound for exact membership]
\label{thm:binary-determinant}
Any algorithm which predicts $\det Q$ for $Q\sim\mu_d$ with average success probability greater than $1/2$, using only exact membership for $K_{s,Q}$ on a finite grid fixed in advance, makes at least $d/2$ queries.
\end{theorem}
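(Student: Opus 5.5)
The argument mirrors the proof of Theorem~\ref{thm:mv-determinant}, with Lemma~\ref{lem:membership-rank-growth} supplying the rank bookkeeping in place of Lemma~\ref{lem:controlled-phase-growth}. Suppose an algorithm uses $M<d/2$ exact membership queries to $K_{s,Q}$ on the fixed grid $\mathcal G$.

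First we put the algorithm in the standard form \eqref{eq:standard-query-algorithm}. Adaptivity, intermediate measurements and classical randomness are purified, and shorter branches are padded with identity queries. The oracle acts nontrivially only on the finite set $\mathcal X=\mathcal G\cap B_2(0,R_s)$. Outside $B_2(0,R_s)$ the membership bit is identically $0$, so those points contribute a $Q$-independent action of Fourier rank zero.

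By Lemma~\ref{lem:membership-rank-growth}, each query has Fourier-rank cost at most one, so the final purified state lies in $\mathcal V_M^{(d)}\widehat\otimes\mathcal H_{\rm alg}$. Every outcome probability, in particular $p_+(Q)$, the probability of outputting $+1$, therefore lies in $\mathcal A_{2M}^{(d)}$.

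Since $M<d/2$ and $M$ is an integer, $2M\le d-1$. Hence $\mathcal A_{2M}^{(d)}\subseteq\mathcal A_{d-1}^{(d)}$. Lemma~\ref{lem:reflection}, applied in dimension $d$, gives
\[
 \int_{\mathsf{O}(d)}\det(Q)f(Q)\,d\mu_d(Q)=0
 \qquad\text{for every }f\in\mathcal A_{2M}^{(d)}.
\]
This is exactly the orthogonality hypothesis \eqref{eq:generic-target-orthogonality} with $F=\det$ and $R=M$. Theorem~\ref{thm:Fourier-rank-method} then says the Haar-average success probability is exactly $1/2$, contradicting the assumption that it is greater than $1/2$. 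Hence $M\ge d/2$.

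The only step needing care is that Lemma~\ref{lem:membership-rank-growth} applies uniformly. The query points are controlled by a register in superposition, and the odd-$n$ coordinate $y''$ also enters. The threshold representation \eqref{eq:membership-threshold} covers both. The extra coordinate enters only through the $Q$-independent quantities $t_y$ and $\tau_y$, so every query point still yields a phase $\phi_{q_y,\tau_y}$ with $q_y\in\R^d$. Finiteness of $\mathcal X$ makes the controlled sum over query points a finite direct sum of rank-increasing maps. I expect no further obstacle: the substantive analytic work, the Fejér approximation in Lemma~\ref{lem:threshold-rank-one}, is already done.
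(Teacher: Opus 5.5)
Your proof is correct and follows the paper's own argument: Lemma~\ref{lem:membership-rank-growth} places $p_+$ in $\mathcal A_{2M}^{(d)}\subseteq\mathcal A_{d-1}^{(d)}$, reflection orthogonality \eqref{eq:reflection-L1} makes its determinant correlation vanish, and the decision clause of Theorem~\ref{thm:Fourier-rank-method} then forces average success exactly $1/2$. Your remarks about grid points outside $B_2(0,R_s)$ and the padded coordinate are accurate; the paper handles that bookkeeping inside the membership lemmas rather than in this proof.
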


\begin{proof}
If $M<d/2$, Lemma~\ref{lem:membership-rank-growth} places the output-$+1$ probability in $\mathcal A_{2M}^{(d)}\subseteq\mathcal A_{d-1}^{(d)}$. Equation~\eqref{eq:reflection-L1} makes its determinant correlation zero. The decision conclusion of Theorem~\ref{thm:Fourier-rank-method} then makes the average success exactly $1/2$.
\end{proof}

\paragraph{Continuous query registers.}
Theorem~\ref{thm:binary-determinant} is the finite-register result used in the main optimization lower bounds. For completeness, we also consider the idealized model in which the membership query point ranges over a measurable space $Y$ and the query register is $L^2(Y,\nu)$. The finite controlled-sum argument in Lemma~\ref{lem:membership-rank-growth} does not apply directly to an uncountable set of query points. Nevertheless, the same rank-one argument goes through by constructing jointly measurable Fej\'er approximants and taking a strong $L^2$ limit. The following corollary records this extension. It is used only for the continuous-register version of the optimization result stated after the main theorems; the proof is given in Appendix~\ref{app:continuous-membership}.

\begin{corollary}[Continuous exact-membership determinant lower bound]
\label{cor:continuous-membership-determinant}
Let $(Y,\nu)$ be a sigma-finite query space, and let $q:Y\to\R^d$ and $\tau:Y\to\R$ be measurable. For each $Q\in \mathsf{O}(d)$, define the Boolean threshold predicate
\[
 m_Q(y):=\1[q(y)^TQq(y)\le\tau(y)].
\]
Any algorithm which predicts $\det Q$ for $Q\sim\mu_d$ with average success probability greater than $1/2$, using exact coherent query access to $m_Q$ on $L^2(Y,\nu)\otimes\C^2$, makes at least $d/2$ queries.
\end{corollary}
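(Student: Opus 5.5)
The plan is to reduce the corollary to Theorem~\ref{thm:Fourier-rank-method} by showing that one coherent query to $m_Q$ on $L^2(Y,\nu)\otimes\C^2$ has Fourier-rank cost at most one. Lemma~\ref{lem:reflection} then applies exactly as in the proof of Theorem~\ref{thm:binary-determinant}. First pass to the Hadamard basis of the answer qubit, as in Lemma~\ref{lem:membership-rank-growth}. The query fixes the $z=0$ component and multiplies the $z=1$ component by the measurable sign function $\Phi(Q,y):=\phi_{q(y),\tau(y)}(Q)$. The task therefore reduces to showing that this controlled multiplier maps $\mathcal W_r(Y,\mathcal K)$ into $\mathcal W_{r+1}(Y,\mathcal K)$. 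The inverse is handled automatically, since the oracle is an involution.

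The key step is to construct jointly measurable approximants. For each $a$, I would define $p_a(Q,y):=\sum_{j=-a}^a c_{a,j}(y)\,\chi_{t_{a,j}(y)q(y)q(y)^T}(Q)$, obtained by applying the Fej\'er construction of Lemma~\ref{lem:threshold-rank-one} pointwise in $y$. The rescaling uses the interval $[-\norm{q(y)}^2,\norm{q(y)}^2]$, with frequencies $t_{a,j}(y)=\pi j/\norm{q(y)}^2$. The coefficients $c_{a,j}(y)$ are explicit integrals of $h_y(t)=(-1)^{\1[t\le\tau(y)]}$ against $e^{-i\pi jt}$, so they are closed-form expressions in $\tau(y)/\norm{q(y)}^2$. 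This makes them measurable in $y$. On $\{q(y)=0\}$ the phase is constant, so we set $p_a$ equal to that constant. The bounds $\abs{p_a}\le1$ hold everywhere. For each fixed $y$ with $q(y)\neq0$, $p_a(\cdot,y)\to\Phi(\cdot,y)$ for $\mu_d$-a.e.\ $Q$, because $q^TQq$ is atomless and therefore a.s.\ avoids the jump points. Fubini then upgrades this to $\mu_d\otimes\nu$-a.e.\ convergence on the product space.

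Next, fix $\Psi\in\mathcal W_r(Y,\mathcal K)$. Each multiplier $p_a$ maps $\mathcal W_r$ into $\mathcal W_{r+1}$. To see this, note that for a.e.\ $y$ the fiber $p_a(\cdot,y)\Psi(\cdot,y)$ is a finite sum of rank-one-shifted fibers. Lemma~\ref{lem:controlled-phase-growth} applies to each term $c_{a,j}(y)\chi_{\Gamma_j(y)}$, with $\Gamma_j(y)=t_{a,j}(y)q(y)q(y)^T$ measurable and of rank at most one. The bounded measurable factor $c_{a,j}(y)$ preserves the fiberwise characterization of $\mathcal W_{r+1}$. Then
\[
\norm{(p_a-\Phi)\Psi}^2=\int_Y\int_{\mathsf O(d)}\abs{p_a-\Phi}^2\norm{\Psi(Q,y)}^2\,d\mu_d\,d\nu\to0
\]
by dominated convergence, with dominating function $4\norm{\Psi}^2\in L^1$. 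Since $\mathcal W_{r+1}$ is closed, $\Phi\Psi\in\mathcal W_{r+1}$. This gives Fourier-rank cost one per query. Theorem~\ref{thm:Fourier-rank-method} then places every output probability of an $M$-query algorithm in $\mathcal A_{2M}^{(d)}$. If $M<d/2$, equation~\eqref{eq:reflection-L1} gives zero determinant correlation, and the success probability is exactly $1/2$.

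I expect the main obstacle to be the measurability bookkeeping. The approximants must be jointly measurable in $(Q,y)$, and the fiber characterization of $\mathcal W_r$ must be compatible with $y$-dependent coefficients and frequencies. I would handle this first by defining the Fej\'er coefficients by explicit formulas rather than by choice. Where the fiber characterization is delicate, I would approximate $c_{a,j}$ and $t_{a,j}$ by simple functions and pass to a further $L^2$ limit.
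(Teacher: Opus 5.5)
Your proposal is correct and follows essentially the same route as the paper's proof in Appendix~\ref{app:continuous-membership}. Both arguments use phase kickback in the Hadamard basis and jointly measurable Fej\'er approximants with rank-one frequencies $\frac{\pi k}{\norm{q(y)}^2}q(y)q(y)^T$, whose coefficients depend measurably on $\tau(y)/\norm{q(y)}^2$. They then upgrade fiberwise Haar-null jump sets to a product-null set, apply dominated convergence into the closed space $\mathcal W_{r+1}$, and conclude via Theorem~\ref{thm:Fourier-rank-method} and Lemma~\ref{lem:reflection}. The only cosmetic difference is that the paper separates out the $Q$-independent cases $\tau(y)\notin[-\norm{q(y)}^2,\norm{q(y)}^2)$, which your pointwise construction handles implicitly.
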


\subsection{An optimizer approximately applies the inverse}
\label{sec:optimizer-inverse}

We next make reductions from optimization to the required linear algebraic primitives. For a positive-definite matrix $B$, write $K_B=\{x:x^TBx\le1\}$ and $\norm{x}_B=\sqrt{x^TBx}$. The unique maximizer of $b^Tx$ over $K_B$ is a normalized copy of $B^{-1}b$. If an optimizer returns a nearly optimal feasible point $x$, its objective value estimates the missing normalization. The vector $(b^Tx)x$ therefore approximates $B^{-1}b$. The following lemma makes this statement quantitative; the formula for the maximum follows from standard convex duality~\cite{boyd2004convex}.

\begin{lemma}[A near-maximizer applies the inverse]
\label{lem:maximizer-inverse}
Let $\underline\lambda>0$, let $B$ be a real symmetric matrix satisfying $B\succeq\underline\lambda I$, let $b$ be a unit vector, and put
\[
 h:=\sqrt{b^TB^{-1}b},
 \qquad
 x_*:=\frac{B^{-1}b}{h}.
\]
Then $h_{K_B}(b)=h$.  If $x\in K_B$ and $b^Tx\ge h-\eps$, then the computable vector $z=(b^Tx)x$ satisfies
\begin{equation}
 \norm{z-B^{-1}b}_B^2
 \le\frac{4\eps}{\sqrt{\underline\lambda}}+2\eps^2.
 \label{eq:inverse-error}
\end{equation}
\end{lemma}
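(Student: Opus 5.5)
The proof splits into two parts: a Cauchy--Schwarz computation of the support function, and an expansion of $\norm{z-B^{-1}b}_B^2$ into terms that near-optimality controls.

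\textbf{Support function.} For $x\in K_B$, Cauchy--Schwarz in the $B$-inner product gives
\[
 b^Tx=(B^{-1}b)^TBx\le\norm{B^{-1}b}_B\,\norm{x}_B\le\sqrt{b^TB^{-1}b}=h .
\]
Equality holds at $x_*$, since $\norm{x_*}_B^2=b^TB^{-1}b/h^2=1$ and $b^Tx_*=h$. Hence $h_{K_B}(b)=h$, and $x_*$ is a maximizer. Note that $B^{-1}b=h x_*$.

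\textbf{Error expansion.} Write $\beta:=b^Tx\in[h-\eps,h]$. Expand in the $B$-inner product:
\[
 \norm{z-hx_*}_B^2=\beta^2\norm{x}_B^2-2\beta h\,\langle x,x_*\rangle_B+h^2 .
\]
The cross term is explicit: $\langle x,x_*\rangle_B=x^TBB^{-1}b/h=\beta/h$. Using $\norm{x}_B\le1$, this gives
\[
 \norm{z-B^{-1}b}_B^2\le\beta^2-2\beta^2+h^2=h^2-\beta^2=(h-\beta)(h+\beta).
\]
If $\beta\ge0$, then $h^2-\beta^2\le\eps\cdot 2h$.

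\textbf{Bounding $h$ and the negative case.} We need $h\le1/\sqrt{\underline\lambda}$. This follows from $B^{-1}\preceq\underline\lambda^{-1}I$ and $\norm b=1$, which give $h^2\le1/\underline\lambda$. If $\beta\ge0$, we therefore get the bound $2\eps/\sqrt{\underline\lambda}$, which is at most the claimed quantity. The case $\beta<0$ can only occur when $\eps>h$. Then $\beta^2\le\eps^2$, since $\abs\beta\le\eps-h$ and $\beta\ge h-\eps$. So $h^2-\beta^2\le h^2\le h\eps\le\eps/\sqrt{\underline\lambda}$.

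In all cases, $\norm{z-B^{-1}b}_B^2\le 4\eps/\sqrt{\underline\lambda}+2\eps^2$. The stated constants leave slack, so no step is a real obstacle. The only point needing care is identifying the cross term exactly as $\beta/h$. That identification is what makes the bound first order in $\eps$ rather than $\sqrt{\eps}$.
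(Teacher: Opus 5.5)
Your proof is correct, but it takes a slightly different route from the paper's.

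\textbf{The paper's route.} It first isolates the geometric estimate $\norm{x-x_*}_B^2\le 2\eps/h$, working with $u=B^{1/2}x$ and $a=B^{-1/2}b$. It then writes $z-B^{-1}b=h(x-x_*)-(h-b^Tx)x$ and applies the triangle inequality, which gives $\norm{z-B^{-1}b}_B\le\sqrt{2\eps h}+\eps$. After $(r+s)^2\le 2r^2+2s^2$, this yields the stated $4\eps/\sqrt{\underline\lambda}+2\eps^2$.

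\textbf{Your route.} You expand $\norm{\beta x-hx_*}_B^2$ directly. Because the scale factor is exactly $\beta=b^Tx$ and $\langle x,x_*\rangle_B=\beta/h$, the expansion collapses to $\beta^2\norm{x}_B^2-2\beta^2+h^2\le h^2-\beta^2$. This gives the sharper bound $2\eps h\le 2\eps/\sqrt{\underline\lambda}$, with no $\eps^2$ term.

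Your case split on the sign of $\beta$ is unnecessary. For every $\beta\in[h-\eps,h]$ you have $h^2-\beta^2=2h(h-\beta)-(h-\beta)^2\le 2h\eps$.

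\textbf{What the paper's modular route buys.} The intermediate bound $\norm{x-x_*}_B\le\sqrt{2\eps/h}$ is reused, with $D$ in place of $\eps$, in Lemma~\ref{lem:approximately-feasible-inverse}. There the output is infeasible, and decomposing through a feasible witness via the triangle inequality is the natural step.

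\textbf{A minor overstatement.} Your closing remark overstates the role of the exact cross term. The paper uses the same identity $a^Tu=b^Tx$, and its squared bound is also linear in $\eps$. Your exact expansion improves only the constants.
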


\begin{proof}
Cauchy--Schwarz applied to $B^{1/2}x$ and $B^{-1/2}b$ gives the formula for the maximum, attained at $x_*$. Put $u=B^{1/2}x$ and $a=B^{-1/2}b$. Then $\norm u\le1$, $\norm a=h$, and
\[
 \norm{x-x_*}_B^2
 =\norm{u-a/h}^2
 \le2-2\frac{a^Tu}{h}
 \le\frac{2\eps}{h}.
\]
As $0\le h-b^Tx\le\eps$ and $B^{-1}b=h x_*$,
\[
 \norm{z-B^{-1}b}_B
 \le h\norm{x-x_*}_B+\eps\norm x_B
 \le\sqrt{2\eps h}+\eps.
\]
Use $h\le\underline\lambda^{-1/2}$ and $(r+s)^2\le2r^2+2s^2$.
\end{proof}

We will apply the lemma with $B=A_Q$. The next constants specify the required optimization accuracy and the resulting inverse errors. Here $\eps_{0,d}$ is the objective error on the unscaled ellipsoid, $\rho$ is the error from one successful optimizer call, and $E_0=3\rho$ is the error after the amplification used below. Define
\begin{equation}
 c_\star:=10^{-8},
 \qquad
 \eps_{0,d}:=c_\star g_d^{3/2},
 \qquad
 \rho:=\sqrt{\sqrt2c_\star+c_\star^2},
 \qquad
 E_0:=3\rho.
 \label{eq:inverse-constants}
\end{equation}
The numerical choice $c_\star=10^{-8}$ gives $E_0<10^{-3}$.

\begin{corollary}[One optimizer call gives a constant-error inverse]
\label{cor:one-inverse}
Suppose an exactly feasible optimizer for $K_{s,Q}$ works for every unit objective and has objective error at most $s\eps_{0,d}$. For every unit $b\in\R^d$, one invocation and classical postprocessing return a vector $z_b$ such that
\begin{equation}
 \Prb\!\left[\norm{z_b-A_Q^{-1}b}\le\rho\right]\ge\frac23.
 \label{eq:one-inverse}
\end{equation}
The postprocessing uses no additional oracle calls.
\end{corollary}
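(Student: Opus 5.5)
The plan is to reduce to Lemma~\ref{lem:maximizer-inverse} by undoing the scale $s$ and the odd-dimensional padding. Then I convert the $B$-norm error bound into a Euclidean one using the lower spectral bound $2g_d$ from \eqref{eq:spectral-bounds}.

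\emph{Setup.} Given a unit $b\in\R^d$, I feed the optimizer the unit objective $\bar b:=(b,0)\in\R^n$, padding with $\ell_n$ zeros. Since $K_{s,Q}=s\,K_{\overline A_{n,Q}}$, we have $h_{K_{s,Q}}(\bar b)=s\,h_{K_{\overline A_{n,Q}}}(\bar b)$. So if the optimizer returns $\widehat y\in K_{s,Q}$ with $\bar b^T\widehat y\ge h_{K_{s,Q}}(\bar b)-s\eps_{0,d}$, which happens with probability at least $2/3$, then $x:=\widehat y/s$ lies in $K_{\overline A_{n,Q}}$ and $\bar b^Tx\ge h_{K_{\overline A_{n,Q}}}(\bar b)-\eps_{0,d}$.

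\emph{Applying the lemma.} I apply Lemma~\ref{lem:maximizer-inverse} with $B=\overline A_{n,Q}$ and $\underline\lambda=2g_d$. This is valid because $A_Q\succeq 2g_dI_d$ by \eqref{eq:spectral-bounds}, and the appended eigenvalue $1$ also exceeds $2g_d$. With $\bar z:=(\bar b^Tx)x$, the lemma gives
\[
 \norm{\bar z-\overline A_{n,Q}^{-1}\bar b}_B^2\le \frac{4\eps_{0,d}}{\sqrt{2g_d}}+2\eps_{0,d}^2 .
\]
Since $B\succeq 2g_dI$, the Euclidean norm satisfies $\norm v^2\le \norm v_B^2/(2g_d)$. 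Substituting $\eps_{0,d}=c_\star g_d^{3/2}$ gives
\[
 \norm{\bar z-\overline A_{n,Q}^{-1}\bar b}^2\le \sqrt2\,c_\star+c_\star^2g_d^2\le \sqrt2\,c_\star+c_\star^2=\rho^2,
\]
where the last inequality uses $g_d\le1$, immediate from its definition.

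\emph{Removing the padding.} Because $\overline A_{n,Q}$ is block diagonal, $\overline A_{n,Q}^{-1}\bar b=(A_Q^{-1}b,0)$. I set $z_b$ to be the first $d$ coordinates of $\bar z$. Discarding coordinates cannot increase the Euclidean error, so $\norm{z_b-A_Q^{-1}b}\le\rho$ on the success event. This gives \eqref{eq:one-inverse}.

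The postprocessing, namely rescaling by $1/s$, forming $(\bar b^Tx)x$ and truncating, is classical and uses no oracle calls. There is no real obstacle here. The only points needing care are tracking the factor $s$ in the accuracy, and checking that the padded coordinate neither spoils the lower eigenvalue bound nor contributes to the error.
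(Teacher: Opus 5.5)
Your proposal is correct and follows the paper's proof: rescale by $1/s$, apply Lemma~\ref{lem:maximizer-inverse} with $\underline\lambda=2g_d$, and divide by $2g_d$ to get the Euclidean bound $\sqrt2c_\star+c_\star^2g_d^2\le\rho^2$. The one cosmetic difference is the order of steps: the paper projects onto the first $d$ coordinates before invoking the lemma with $B=A_Q$, whereas you apply it to the padded $\overline A_{n,Q}$ and truncate afterward, which yields the same vector $z_b=(b^Tx')x'$ and the same bound.
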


\begin{proof}
Invoke the optimizer with objective $(b,0_{\ell_n})$ and divide the first $d$ coordinates of its output by $s$, obtaining $x$. On the optimizer's success event, the definition of $K_{s,Q}$ gives $x\in K_{A_Q}$, and the objective guarantee gives $b^Tx\ge h_{K_{A_Q}}(b)-\eps_{0,d}$. Apply Lemma~\ref{lem:maximizer-inverse} with $\underline\lambda=2g_d$ and set $z_b=(b^Tx)x$. It gives
\[
 \norm{z_b-A_Q^{-1}b}_{A_Q}^2
 \le2\sqrt2c_\star g_d+2c_\star^2g_d^3.
\]
Dividing by the spectral lower bound $2g_d$ gives
\[
 \norm{z_b-A_Q^{-1}b}^2
 \le\sqrt2c_\star+c_\star^2g_d^2
 \le\sqrt2c_\star+c_\star^2=\rho^2.
\]
\end{proof}

\subsection{Using inverse iteration to predict the determinant}
\label{sec:inverse-iteration}

Corollary~\ref{cor:one-inverse} turns one optimizer invocation into an approximate application of $A_Q^{-1}$, but only with probability $2/3$. We will boost the success probability by repeating each invocation and select a cluster of mutually close outputs to make every inverse step reliable. On $\mathcal E_{d,-}$, a random sign vector has overlap at least $1/\sqrt{2d}$ with the exceptional eigenvector $u$ with constant probability. Each application of $A_Q^{-1}$ amplifies this component relative to the orthogonal component, and $O(\log d)$ iterations make the resulting vector nearly parallel to $u$. A final estimate of the maximum objective value then distinguishes this case from $\mathcal E_{d,+}$.

We first amplify a single approximate inverse application. Note that the selection rule below does not need to know the target vector $z_*$: it chooses an output surrounded by a strict majority of the other outputs.

\begin{lemma}[Majority-ball amplification]
\label{lem:majority-ball}
Let $m$ be odd.  Suppose independent outputs $z_1,\ldots,z_m$ each lie within radius $r$ of a fixed $z_*$ with probability at least $2/3$.  Select the least-indexed output whose closed $2r$-ball contains more than $m/2$ outputs, if such an output exists.  Whenever a strict majority is successful, the selected output lies within $3r$ of $z_*$.  The probability of no strict successful majority is at most $e^{-m/18}$.
\end{lemma}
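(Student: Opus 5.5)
The proof splits into a deterministic part and a probabilistic part, handled in that order.

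\emph{Deterministic part.} Let $S\subseteq\{1,\ldots,m\}$ be the indices of the successful outputs, meaning $\norm{z_i-z_*}\le r$. Assume $|S|>m/2$.

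First, a selected output exists. Take any $i\in S$. For every $j\in S$, the triangle inequality gives $\norm{z_j-z_i}\le 2r$. So the closed $2r$-ball around $z_i$ contains all of $S$, which is more than $m/2$ outputs. Hence the selection rule returns some least index $k$.

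Second, the selected output is close to $z_*$. Let $B_k$ be the set of outputs inside the $2r$-ball around $z_k$, so $|B_k|>m/2$. Since both $B_k$ and $S$ have size greater than $m/2$, pigeonhole gives some $j\in B_k\cap S$. Then
\[
\norm{z_k-z_*}\le\norm{z_k-z_j}+\norm{z_j-z_*}\le 2r+r=3r.
\]
This conclusion does not depend on $k$ itself being successful, and it uses no knowledge of $z_*$.

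\emph{Probabilistic part.} Let $X=|S|$, a sum of independent indicators, each with mean at least $2/3$. So $\E X\ge 2m/3$. Because $m$ is odd, failure means $X\le (m-1)/2$, which implies $X\le m/2$. This requires a deviation of at least $m/6$ below the mean. Coupling each indicator down to a Bernoulli$(2/3)$ variable, Hoeffding's inequality gives
\[
\Prb\left[X\le \E X-\tfrac m6\right]\le\exp\left(-2m\left(\tfrac16\right)^2\right)=e^{-m/18}.
\]

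The only step needing care is the intersection argument, since the selected output may not itself be successful. The pigeonhole step handles this. The remaining argument is routine, and the constants match Hoeffding's inequality exactly.
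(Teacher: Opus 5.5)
Your proof is correct and matches the paper's argument: successful outputs are pairwise within $2r$, so an eligible center exists; the selected ball must meet the successful set by majority counting, which gives the $3r$ bound through the triangle inequality; and Hoeffding with deviation $m/6$ gives $e^{-m/18}$. The coupling down to Bernoulli$(2/3)$ indicators is unnecessary, since Hoeffding applies directly once $\E X\ge 2m/3$, but it does no harm.
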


\begin{proof}
All successful outputs are pairwise within $2r$, so every successful output is an eligible center when they form a strict majority. The selected ball then intersects the successful set, and the triangle inequality gives the $3r$ bound. Let $X_i:=\1[\norm{z_i-z_*}\le r]$. Hoeffding's inequality~\cite{hoeffding1963probability} gives
\[
 \Prb\!\left[\sum_{i=1}^mX_i\le m/2\right]
 \le e^{-2m(2/3-1/2)^2}=e^{-m/18}.
\]
\end{proof}

We next choose the initial vector. A random sign vector has enough overlap with any fixed direction with constant probability.

\begin{lemma}[Rademacher initialization]
\label{lem:rademacher}
For every unit $u\in\R^d$, let $v_0=d^{-1/2}(\sigma_1,\ldots,\sigma_d)$ with independent uniform signs. Then
\begin{equation}
 \Prb\!\left[\abs{u^Tv_0}^2\ge\frac1{2d}\right]\ge\frac1{12}.
 \label{eq:Rademacher-overlap}
\end{equation}
\end{lemma}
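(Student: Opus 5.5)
We need $\Prb[X^2\ge 1/(2d)]\ge 1/12$, where $X:=u^Tv_0=d^{-1/2}\sum_i u_i\sigma_i$. The plan is a second-moment (Paley--Zygmund) argument applied to $Y:=dX^2=(\sum_i u_i\sigma_i)^2$, which is nonnegative with $\E Y=\sum_iu_i^2=1$. The target event is $Y\ge 1/2$. Paley--Zygmund with $\theta=1/2$ gives
\[
 \Prb[Y\ge \tfrac12]\ge (1-\tfrac12)^2\frac{(\E Y)^2}{\E Y^2}=\frac{1}{4\,\E Y^2}.
\]
So it suffices to show $\E Y^2\le 3$.

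To bound the fourth moment, I would expand $\E(\sum_iu_i\sigma_i)^4$ using independence and symmetry of the signs. Only terms in which every index appears an even number of times survive, leaving
\[
 \E Y^2=\sum_i u_i^4+3\sum_{i\ne j}u_i^2u_j^2 = 3\Bigl(\sum_iu_i^2\Bigr)^2-2\sum_iu_i^4\le 3.
\]
This is the Khintchine-type bound with constant $3$, the Gaussian value. Substituting into the Paley--Zygmund bound gives $\Prb[Y\ge1/2]\ge 1/12$, which is exactly \eqref{eq:Rademacher-overlap}.

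There is no real obstacle. The one point to handle carefully is the fourth-moment count: pairings with $i\ne j$ contribute $\binom{4}{2}/2=3$ per unordered pair, i.e.\ $3\sum_{i\neq j}$ over ordered pairs. The constant $1/12$ suggests this is precisely the intended route, since $\theta=1/2$ and $\E Y^2\le3$ give $1/(4\cdot 3)$.
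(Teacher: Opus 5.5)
Your proof is correct and is essentially the paper's argument: the paper applies Paley--Zygmund with parameter $1/2$ to $\abs{u^Tv_0}^2$ directly, using $\E\abs{u^Tv_0}^2=1/d$ and the same fourth-moment identity $3(\sum_iu_i^2)^2-2\sum_iu_i^4\le3$, while you rescale by $d$ first. One small slip in your commentary: each unordered pair $\{i,j\}$ contributes $\binom42=6$ (that is, $3$ per ordered pair), not $\binom42/2=3$; your displayed formula $3\sum_{i\ne j}u_i^2u_j^2$ over ordered pairs is nevertheless correct.
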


\begin{proof}
For $X=\abs{u^Tv_0}^2$,
\[
 \E X=\frac1d,
 \qquad
 \E X^2=\frac1{d^2}
 \left(3\Big(\sum_i u_i^2\Big)^2-2\sum_i u_i^4\right)
 \le\frac3{d^2}.
\]
Paley--Zygmund with parameter $1/2$ gives the result. This is the standard second-moment lower-tail inequality of Paley and Zygmund~\cite{paley1932some}.
\end{proof}

From \eqref{eq:inverse-gap}, on $\mathcal E_{d,-}$, the eigenvalue of $A_Q$ in direction $u$ is $2g_d$, while every eigenvalue on $u^\perp$ is at least $3g_d$. The next lemma shows that inverse iteration can still exploit this difference when each approximate inverse application has additive error.

\begin{lemma}[Noisy inverse iteration]
\label{lem:noisy-inverse}
Let $Q\in\mathcal E_{d,-}$ and let $u$ be the unit vector in \eqref{eq:inverse-gap}.  Define
\begin{equation}
 k:=\left\lceil\log_{4/3}(3\sqrt{2d})\right\rceil.
 \label{eq:iteration-count}
\end{equation}
Starting from the Rademacher vector $v_0$, suppose
\[
 y_j=A_Q^{-1}v_j+e_j,
 \qquad
 \norm{e_j}\le E_0,
 \qquad
 v_{j+1}=\frac{y_j}{\norm{y_j}},
 \quad 0\le j<k.
\]
On the event in \eqref{eq:Rademacher-overlap}, every normalization is defined and
\begin{equation}
 \abs{u^Tv_k}^2\ge\frac9{10}.
 \label{eq:final-overlap}
\end{equation}
\end{lemma}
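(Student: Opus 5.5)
The plan is to track the tangent of the angle between the iterate and $u$. For each $j$, write $v_j=a_ju+w_j$ with $w_j\perp u$, and set $t_j:=\norm{w_j}/\abs{a_j}$. Then $\abs{u^Tv_j}^2=1/(1+t_j^2)$, so \eqref{eq:final-overlap} follows once $t_k\le1/3$, because $1/(1+1/9)=9/10$. On the event \eqref{eq:Rademacher-overlap} we have $a_0^2\ge1/(2d)$, and hence $t_0\le\sqrt{2d-1}<\sqrt{2d}$. Replacing $u$ by $-u$ if needed, we may assume $a_0>0$.

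For the one-step estimate, \eqref{eq:inverse-gap} gives $A_Q^{-1}u=u/(2g_d)$. It also says that $A_Q^{-1}$ preserves $u^\perp$ with operator norm at most $1/(3g_d)$ there. Hence the $u$-component of $y_j$ is at least $a_j/(2g_d)-E_0$, and its orthogonal component has norm at most $\norm{w_j}/(3g_d)+E_0$. Normalization rescales both components by the same factor, so
\[
 t_{j+1}\le\frac{\tfrac23t_j+\eta_j}{1-\eta_j},
 \qquad
 \eta_j:=\frac{2g_dE_0}{a_j}.
\]
Here we used $\norm{w_j}/a_j=t_j$. As long as $\eta_j<1$, the $u$-component of $y_j$ stays positive. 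Then $y_j\neq0$, the normalization is defined, and $a_{j+1}>0$.

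The induction will keep $t_j\le t_0$, which gives $a_j=(1+t_j^2)^{-1/2}\ge a_0\ge(2d)^{-1/2}$. Consequently $\eta_j\le\eta:=2\sqrt{2d}\,g_dE_0$ for every $j$. Since $g_d=(1-\cos(\alpha/d))/2\le\alpha^2/(4d^2)$ and $E_0<10^{-3}$, we get $\eta\le10^{-3}\alpha^2/(\sqrt2\,d^{3/2})$, which is a tiny absolute constant. The claim to prove by induction is
\[
 t_j\le\max\{(3/4)^jt_0,\;1/3\}.
\]
Two elementary facts suffice for the inductive step. First, if $t_j\ge1/3$, then $(\tfrac23t_j+\eta)/(1-\eta)\le\tfrac34t_j$, since this only needs $\eta(1+\tfrac34t_j)\le\tfrac1{12}t_j$, and $\eta$ is far smaller than that. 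Second, if $t_j\le1/3$, then the right side is at most $(\tfrac29+\eta)/(1-\eta)\le1/3$. Both cases keep $t_{j+1}\le t_0$, which maintains the lower bound on $a_j$ needed to bound $\eta_j$. The choice \eqref{eq:iteration-count} gives $(3/4)^kt_0\le(3/4)^k\sqrt{2d}\le1/3$, so $t_k\le1/3$ and \eqref{eq:final-overlap} follows.

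The only delicate point is that the error $E_0$ is additive and of constant size. It is harmless only because $A_Q^{-1}$ has norm of order $g_d^{-1}=\Theta(d^2)$ in the $u$ direction. That dominates both the error and the initial $d^{-1/2}$ overlap, so a few lines are needed to check $\eta\ll1/12$ uniformly in $d$. The main thing to get right is keeping the lower bound on $a_j$ inside the induction, so that $\eta_j$ does not grow.
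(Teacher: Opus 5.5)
Your proof is correct and follows essentially the same route as the paper: the ratio recurrence $t_{j+1}\le(\tfrac23t_j+\eta_j)/(1-\eta_j)$ with $\eta_j=2g_dE_0/a_j$, a two-case contraction needing only $\eta\le1/45$ and $\eta\le1/12$, and the count of $3/4$-contractions. One small fix: when $t_0<1/3$, your second case gives only $t_{j+1}\le1/3$, not $t_{j+1}\le t_0$. The invariant should therefore be $t_j\le\max\{t_0,1/3\}$, equivalently $a_j\ge(2d)^{-1/2}$, which is what the paper tracks directly, and this still gives $\eta_j\le\eta$.
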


\begin{proof}
Inverse iteration and its shift-and-invert variants are classical methods for isolating extremal eigenspaces~\cite{saad2011numerical, garber2016faster}. Here, we determine the additive-error estimate needed for this oracle reduction. We track the overlap $a_j$ with the target direction, the ratio $r_j$ of the orthogonal component to the target component, and the relative inverse error $\zeta_j$. Thus let
\[
 a_j:=\abs{u^Tv_j},
 \qquad
 r_j:=\frac{\norm{(I-uu^T)v_j}}{a_j},
 \qquad
 \zeta_j:=\frac{2g_dE_0}{a_j}.
\]
By \eqref{eq:inverse-gap}, the target component of $A_Q^{-1}v_j$ has magnitude $a_j/(2g_d)$ and the orthogonal component has norm at most $a_jr_j/(3g_d)$.  Whenever $\zeta_j<1$,
\begin{equation}
 r_{j+1}\le\frac{(2/3)r_j+\zeta_j}{1-\zeta_j}.
 \label{eq:ratio-recurrence}
\end{equation}
The same target-component lower bound shows $y_j\ne0$.

We prove inductively that $a_j\ge1/\sqrt{2d}$ and $\zeta_j<1/100$.  Since $g_d\le\alpha^2/(4d^2)$, the first inequality implies
\[
 \zeta_j\le2\frac{\alpha^2}{4d^2}E_0\sqrt{2d}
 \le\frac{\alpha^2E_0}{4}<\frac1{100}.
\]
If $r_j\ge1/3$, recurrence \eqref{eq:ratio-recurrence} and $\zeta_j<1/45$ imply $r_{j+1}\le(3/4)r_j$.  If $r_j\le1/3$, the same recurrence and $\zeta_j<1/12$ imply $r_{j+1}\le1/3$.  Since $a_j^2=1/(1+r_j^2)$, these alternatives preserve $a_j\ge1/\sqrt{2d}$.

Initially $r_0<\sqrt{2d}$.  From the definition of $k$, after repeated contractions by 3/4, we reach $r_k\le1/3$, and hence $\abs{u^Tv_k}^2=1/(1+r_k^2)\ge9/10$.
\end{proof}

We now assemble the preceding lemmas. There are $k$ amplified inverse steps followed by one amplified estimate of the maximum objective value, so the wrapper uses $k+1$ batches of optimizer invocations.

\begin{proposition}[Predicting determinant using optimizer invocations]
\label{prop:optimizer-determinant}
Suppose the optimizer in Corollary~\ref{cor:one-inverse} uses at most $T$ membership queries per invocation.  Put
\begin{equation}
 m:=2\left\lceil9\log(100(k+1))\right\rceil+1,
 \qquad
 W:=(k+1)m.
 \label{eq:wrapper-parameters}
\end{equation}
Using $W$ fresh optimizer invocations, hence at most $M=WT$ membership queries, one can predict $\det Q$ for $Q\sim\mu_d$ with average success at least $1/2+\beta_\star$, where $\beta_\star:=1/50$. The only oracle calls are the membership queries made by the optimizer.
\end{proposition}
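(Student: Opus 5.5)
The predictor runs noisy inverse iteration from a Rademacher start, amplifying each inverse step by majority-ball selection, and then classifies $\det Q$ by thresholding one last amplified inverse application. I will analyze it on the regular events $\mathcal E_{d,\pm}$. Draw $v_0$ as in Lemma~\ref{lem:rademacher}. For $j=0,\dots,k-1$, invoke the optimizer $m$ times with objective $v_j$, postprocess each output as in Corollary~\ref{cor:one-inverse} to get $z^{(1)},\dots,z^{(m)}$, and select $y_j$ by the rule of Lemma~\ref{lem:majority-ball} with $r=\rho$. Set $v_{j+1}=y_j/\norm{y_j}$. In a final $(k+1)$st batch, apply the same amplified step to $v_k$ to obtain $y_k\approx A_Q^{-1}v_k$, then test $\norm{y_k}$ against a threshold $\theta$. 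If at any stage no majority ball exists, or a normalization fails, output a uniformly random guess. Equivalently, one can test the estimated optimal value $h^2=v_k^TA_Q^{-1}v_k$; using the norm is simpler. The total is $W=(k+1)m$ invocations, each with at most $T$ queries, so $M\le WT$.

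\textbf{Batch reliability.} Each batch uses fresh invocations, and conditional on the past the target $A_Q^{-1}v_j$ is fixed. Lemma~\ref{lem:majority-ball} therefore gives $\norm{y_j-A_Q^{-1}v_j}\le3\rho=E_0$ except with probability $e^{-m/18}\le 1/(100(k+1))$, by the choice of $m$. A union bound shows all $k+1$ batches succeed with probability at least $99/100$.

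\textbf{Negative component.} On $\mathcal E_{d,-}$, assume all batches succeed and the Rademacher event of probability $\ge1/12$ holds. Lemma~\ref{lem:noisy-inverse} gives $\abs{u^Tv_k}^2\ge9/10$, so $\norm{A_Q^{-1}v_k}\ge\abs{u^Tv_k}/(2g_d)\ge0.948/(2g_d)$, using \eqref{eq:inverse-gap}.

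\textbf{Positive component.} On $\mathcal E_{d,+}$, every unit vector satisfies $\norm{A_Q^{-1}v}\le1/(3g_d)$ by \eqref{eq:regular-plus}, regardless of the iteration's behavior. I have to make sure the normalizations there are harmless. If a $y_j$ vanishes, outputting a random guess costs only advantage; to keep the bookkeeping clean, the procedure should instead output $+1$ whenever anything degenerates.

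\textbf{Threshold.} Since $E_0<10^{-3}$ is negligible against $1/g_d$, the choice $\theta=0.4/g_d$ separates the two cases.

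\textbf{Success probability.} On the positive component, the classifier is correct with probability at least $(99/100)^2$, counting the regular event and batch success. On the negative component it is correct with probability at least $(99/100)^2\cdot(1/12)$, plus roughly half of the remaining mass if failures are resolved by random guessing.

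The main obstacle is getting the average advantage to reach $1/50$, because the Rademacher overlap only holds with probability $1/12$. If the negative side fails, the output must not be systematically biased toward the wrong answer. Outputting $+1$ on degeneracies and thresholding asymmetrically gives positive-side accuracy near $0.98$, but negative-side accuracy only about $0.08$, for an average near $0.53$. That already exceeds $1/2+1/50$, and this is the accounting I would carry out carefully. For the lower bound I need only the average of the two conditional success probabilities, each weighted by $1/2$. It is about $\tfrac12(0.98+0.08)\approx0.53\ge0.52$, after accounting for the off-event mass of at most $1/100$ on each side.
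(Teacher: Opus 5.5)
Your proposal is correct and follows essentially the paper's route: a Rademacher start, majority-ball amplification of each inverse step with fresh optimizer calls, Lemma~\ref{lem:noisy-inverse} on the event that all batches are good, a union bound over the $k+1$ batches, a threshold test separating the $\mathcal E_{d,+}$ bound $A_Q\succeq 3g_dI$ from the aligned case on $\mathcal E_{d,-}$, and averaging over the two components. The differences are minor and both valid: you threshold $\norm{y_k}$ at $0.4/g_d$ (using $\sqrt{0.9}/2>0.4>1/3$) instead of the median objective value at $\sqrt{2/(5g_d)}$, and you bound the negative side by the product $(1/12)(99/100)$ using the conditional batch guarantee instead of $1/12-1/100$; your fallback convention (random guess versus output $+1$) is immaterial, since on the all-good event no normalization degenerates because $\norm{A_Q^{-1}v_j}>1/3>E_0$.
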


\begin{proof}
The reduction calls the same optimizer on several objectives. Each objective after the first depends on the outputs of earlier calls. This is why the optimizer must work for every unit objective. Once we condition on the earlier outputs, the next objective is fixed, so the optimizer still succeeds with probability at least $2/3$.

Choose $v_0$ as in Lemma~\ref{lem:rademacher}. At each of the $k$ inverse iterations, run the construction of Corollary~\ref{cor:one-inverse} independently $m$ times and apply the majority-ball selector with radius $\rho$. On a strict successful majority, the selected vector has additive error at most $E_0=3\rho$ and is nonzero because $\norm{A_Q^{-1}v_j}\ge(1+2g_d)^{-1}>1/3>E_0$. Normalize it to obtain the next iterate; use a fixed unit vector as a fallback if selection fails.

After $k$ iterations, run the optimizer independently another $m$ times with objective $(v_k,0_{\ell_n})$. Divide each output's first $d$ coordinates by $s$, call the result $x_i$, and take the median $\widetilde h$ of $v_k^Tx_i$. For every successful call, this value lies between $h_{K_{A_Q}}(v_k)-\eps_{0,d}$ and $h_{K_{A_Q}}(v_k)$. A strict successful majority therefore places the median in the same interval. Set
\begin{equation}
 \tau_d:=\sqrt{\frac{2}{5g_d}},
 \label{eq:objective-threshold}
\end{equation}
and output $-1$ exactly when $\widetilde h>\tau_d$.

Each batch fails to contain a strict successful majority with probability at most
\[
 e^{-m/18}\le\frac1{100(k+1)}.
\]
The same conditional guarantee holds at every stage. A union bound over the $k+1$ batches gives total failure probability at most
\[
 (k+1)\frac1{100(k+1)}=\frac1{100}.
\]
Thus all batches are good with probability at least $99/100$.

If $Q\in\mathcal E_{d,+}$, then $A_Q\succeq3g_dI$ and every unit $v$ obeys
\[
 h_{K_{A_Q}}(v)^2=v^TA_Q^{-1}v
 \le\frac1{3g_d}<\tau_d^2.
\]
A strict successful majority in the final batch therefore makes the classifier output $+1$.

If $Q\in\mathcal E_{d,-}$, the Rademacher event and good inverse batches give $\abs{u^Tv_k}^2\ge9/10$.  Hence
\[
 h_{K_{A_Q}}(v_k)^2\ge\frac9{20g_d}
\]
and
\begin{equation}
 h_{K_{A_Q}}(v_k)-\tau_d
 \ge\frac{3-2\sqrt2}{\sqrt{20g_d}}
 >c_\star g_d^{3/2}=\eps_{0,d}.
 \label{eq:objective-margin}
\end{equation}
Thus a strict successful majority in the final batch makes the classifier output $-1$. The initialization event has probability at least $1/12$, while the probability that some batch fails is at most $1/100$. The negative regular component is therefore classified correctly with probability at least $1/12-1/100=11/150$.

On the positive component, regularity and good batches give success probability at least $(99/100)^2$. On the negative component, regularity together with the preceding estimate gives success probability at least $(99/100)(11/150)$. Since the two determinant components have equal Haar measure, the average success is at least
\[
 \frac12\frac{99}{100}
 \left(\frac{99}{100}+\frac{11}{150}\right)
 >\frac{13}{25}=\frac12+\frac1{50}.
\]
\end{proof}

Finally, $k=O(\log d)$ and $m=O(\log\log d)$, so
\[
 W=(k+1)m=O(\log d\,\log\log d).
\]
The classifier has success probability strictly greater than $1/2$, so Theorem~\ref{thm:binary-determinant} implies $WT\ge d/2$. Therefore
\[
 T=\Omega\!\left(\frac{d}{\log d\,\log\log d}\right).
\]
More explicitly, for $n\ge100$ the definitions give
\[
 k+1\le8\log n,
 \qquad
 m\le147\log\log n,
 \qquad
 W\le1176\log n\log\log n.
\]
Since $d\ge n/2$, the determinant lower bound therefore implies
\begin{equation}
 T\ge\frac{n}{4704\log n\log\log n}.
 \label{eq:ambient-wrapper-bound}
\end{equation}
The next section chooses the scale $s$ and translates this bound into the stated optimization accuracy and geometric parameters.

\section{Main result: optimization lower bounds}
\label{sec:optimization-lower-bounds}

Section~\ref{sec:determinant-to-optimization} proved the query lower bound once the optimizer is accurate enough to simulate inverse iteration. It remains to choose the scale of the ellipsoids and verify the final accuracy and geometric parameters. The unscaled objective error required for exactly feasible optimization is $\Theta(n^{-3})$, so scaling by $s=n$ gives error $\Theta(n^{-2})$. Approximately feasible optimization requires the smaller unscaled tolerance $\Theta(n^{-4})$, so we instead scale by $s=n^2$. This again gives a final tolerance of order $n^{-2}$. 

\subsection{Lower bound for exactly feasible optimization}
\label{sec:exactly-feasible-lower-bound}

We first take $s=n$. The theorem records the resulting accuracy, ball inclusions, and membership-query lower bound.

\begin{theorem}[Exactly feasible optimization lower bound]
\label{thm:exactly-feasible-main}
There are universal constants
\[
 n_0,c_\eps,C_\eps,C_{\rm lb}>0
\]
such that for every $n\ge n_0$ there is an explicit family $\mathscr K_n$ of $n$-dimensional ellipsoids and an accuracy $\eps_n$ satisfying
\[
 c_\eps n^{-2}\le\eps_n\le C_\eps n^{-2}.
\]
There are known parameters $r_n,R_n>0$ such that every $K\in\mathscr K_n$ satisfies
\begin{equation}
 B_2(0,r_n)\subseteq K\subseteq B_2(0,R_n),
 \quad
 R_n-r_n\ge\frac12R_n,
 \quad
 \frac{r_n\eps_n}{R_n}=O(n^{-3}).
 \label{eq:exactly-feasible-geometry}
\end{equation}
Let $\mathcal G\subseteq\R^n$ be any nonempty finite grid fixed in advance. Every exactly feasible optimizer of accuracy $\eps_n$ that works for all $K\in\mathscr K_n$ and all unit objectives, using exact membership on $\mathcal G$, requires
\begin{equation}
 T\ge C_{\rm lb}\frac{n}{\log n\,\log\log n}
 \label{eq:exactly-feasible-lower-bound}
\end{equation}
queries in the worst case. One may take $n_0=100$ and $C_{\rm lb}=1/4704$.
\end{theorem}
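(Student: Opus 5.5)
We take $s=n$ and $\mathscr K_n:=\{K_{n,Q}:Q\in\mathcal Q_n\}$ from \eqref{eq:scaled-body}, with accuracy $\eps_n:=n\,\eps_{0,d_n}=c_\star n\,\overline g_n^{3/2}$. Everything else follows from Proposition~\ref{prop:optimizer-determinant} and the bound \eqref{eq:ambient-wrapper-bound}. The remaining work is to check the parameters.

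\textbf{Accuracy.} Since $g_d=(1-\cos(\alpha/d))/2$ and $1-\cos t\in[t^2/2-t^4/24,\,t^2/2]$, we have $g_d=\Theta(d^{-2})$. Combined with $n/2\le d_n\le n$, this gives $\overline g_n=\Theta(n^{-2})$. Hence $\eps_n=c_\star n\,\Theta(n^{-3})=\Theta(n^{-2})$, and we read off explicit constants $c_\eps,C_\eps$ from the cosine bounds.

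\textbf{Geometry.} We take $r_n:=r_s$ and $R_n:=R_s$ from \eqref{eq:scaled-ball-bounds} with $s=n$. This gives $B_2(0,r_n)\subseteq K\subseteq B_2(0,R_n)$ for every $K\in\mathscr K_n$; in the odd case the appended eigenvalue $1$ lies in $[2\overline g_n,1+2\overline g_n]$, so the bounds persist. The ratio is $r_n/R_n=\sqrt{2\overline g_n/(1+2\overline g_n)}$, and since $\overline g_n\le\alpha^2/(4d^2)$ is tiny for $n\ge100$, it is far below $1/2$. This gives $R_n-r_n\ge R_n/2$. It also gives $r_n\eps_n/R_n\le\sqrt{2\overline g_n}\,\eps_n=O(n^{-1}\cdot n^{-2})=O(n^{-3})$.

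\textbf{Lower bound.} Suppose an exactly feasible optimizer of accuracy $\eps_n$ works for all $K\in\mathscr K_n$ and all unit objectives using $T$ exact membership queries on $\mathcal G$. Then it satisfies the hypothesis of Corollary~\ref{cor:one-inverse} with objective error $s\eps_{0,d}$. Proposition~\ref{prop:optimizer-determinant} therefore yields a determinant predictor with Haar-average success above $1/2$ that uses $WT$ exact membership queries to $K_{n,Q}$ on the fixed grid $\mathcal G$. Theorem~\ref{thm:binary-determinant} forces $WT\ge d/2$. The explicit estimates $W\le1176\log n\log\log n$ for $n\ge100$ and $d\ge n/2$ then give \eqref{eq:ambient-wrapper-bound}, namely $C_{\rm lb}=1/4704$ with $n_0=100$.

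Since Haar-average success above $1/2$ is impossible below this query count, some $Q$ must witness failure, so the bound holds in the worst case.

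\textbf{Main obstacle.} The step most needing care is verifying the explicit constants in $W\le1176\log n\log\log n$. This requires bounding $k+1=\lceil\log_{4/3}(3\sqrt{2d})\rceil+1\le8\log n$ and $m\le147\log\log n$ for $n\ge100$, using natural logarithms throughout. It also requires confirming that $\log\log n$ is bounded below appropriately at $n=100$ so that the ceilings are absorbed.
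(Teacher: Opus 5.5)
Your proposal is correct and follows essentially the same route as the paper. You take $s=n$ and $\eps_n=n\,c_\star\overline g_n^{3/2}$, read off $\overline g_n=\Theta(n^{-2})$ and the radii from \eqref{eq:scaled-ball-bounds}, and conclude via Proposition~\ref{prop:optimizer-determinant} and the explicit wrapper estimate \eqref{eq:ambient-wrapper-bound}; the only cosmetic difference is that you bound $1-\cos t$ by its Taylor polynomials where the paper uses $1-\cos t=2\sin^2(t/2)$ with linear sine bounds, and your constant checks ($k+1\le 8\log n$, $m\le147\log\log n$ for $n\ge100$) do hold.
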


\begin{proof}
Set $s_n=n$ and define
\begin{equation}
 \eps_n:=nc_\star\overline g_n^{3/2},
 \qquad
 \mathscr K_n:=\{K_{n,Q}:Q\in\mathcal Q_n\}.
 \label{eq:exactly-feasible-family}
\end{equation}
We first verify the accuracy. The identity $1-\cos t=2\sin^2(t/2)$ and the standard linear bounds for sine give
\begin{equation}
 \frac{\alpha^2}{\pi^2n^2}\le \overline g_n\le\frac{\alpha^2}{n^2}
 \label{eq:g-bounds}
\end{equation}
These estimates imply $\eps_n=\Theta(n^{-2})$.

We next verify the geometry. Equation~\eqref{eq:scaled-ball-bounds} gives
\[
 r_n=\frac{n}{\sqrt{1+2\overline g_n}}=\Theta(n),
 \qquad
 R_n=\frac{n}{\sqrt{2\overline g_n}}=\Theta(n^2).
\]
Moreover,
\[
 \frac{r_n}{R_n}
 =\sqrt{\frac{2\overline g_n}{1+2\overline g_n}}\le\frac12
\]
for $n\ge100$. Hence $R_n-r_n\ge R_n/2$. The normalized accuracy satisfies
\[
 \frac{r_n\eps_n}{R_n}
 \le\sqrt2c_\star n\overline g_n^2=O(n^{-3}).
\]

It remains to prove the query lower bound. Suppose one optimizer invocation uses $T$ queries. Proposition~\ref{prop:optimizer-determinant} converts it into a determinant classifier, and the explicit ambient-dimensional estimate \eqref{eq:ambient-wrapper-bound} gives
\[
 T\ge\frac{n}{4704\log n\log\log n}.
\]
\end{proof}

\subsection{Lower bound for approximately feasible optimization}
\label{sec:approximately-feasible-optimization}

An approximately feasible output may lie outside the ellipsoid, so the near-maximizer argument from Section~\ref{sec:optimizer-inverse} no longer applies directly. We handle this in three steps. First, we replace the output analytically by a nearby feasible witness and bound the resulting objective loss. Second, we use the original output to construct an approximate inverse application. Finally, we choose the tolerance small enough to preserve both inverse iteration and the final test of the maximum objective value. We use the convention from Definition~\ref{def:approximately-feasible-opt}, standard in geometric oracle theory~\cite{grotschel2012geometric,van2020convex}.

The first lemma quantifies the loss from passing to a nearby feasible point.

\begin{lemma}[Approximately feasible output geometry]
\label{lem:approximately-feasible-geometry}
Suppose $B_2(0,r)\subseteq K\subseteq B_2(0,R)$ and $0<\delta<r$.  If $y$ satisfies \eqref{eq:approximately-feasible-output} for a unit objective $c$, then there is $x\in K$ with $\norm{x-y}\le\delta$ and
\begin{equation}
 c^Tx\ge h_K(c)-D,
 \qquad
 D:=\delta\left(\frac Rr+2\right).
 \label{eq:approximately-feasible-effective-error}
\end{equation}
Moreover,
\begin{equation}
 h_K(c)-D\le c^Ty\le h_K(c)+\delta.
 \label{eq:approximately-feasible-objective-interval}
\end{equation}
\end{lemma}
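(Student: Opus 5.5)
We prove the lemma in three steps: choose the witness $x$, bound its objective loss, and then read off the two-sided interval for $c^Ty$.

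\textbf{Choice of witness.} Since $K$ is compact and convex and $\dist(y,K)\le\delta$, we take $x$ to be the Euclidean projection of $y$ onto $K$. This gives $x\in K$ with $\norm{x-y}\le\delta$, and by Cauchy--Schwarz for the unit vector $c$,
\[
 c^Tx\ge c^Ty-\delta.
\]
Combining this with the objective guarantee in \eqref{eq:approximately-feasible-output} yields $c^Tx\ge h_{K^{-\delta}}(c)-2\delta$. Hence it suffices to prove
\[
 h_{K^{-\delta}}(c)\ge h_K(c)-\delta R/r,
\]
because then $c^Tx\ge h_K(c)-\delta(R/r+2)=h_K(c)-D$.

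\textbf{Inner support function bound (the main step).} We use a homothety toward the inscribed ball. Let $x_*\in K$ attain $h_K(c)$, and set $\lambda:=\delta/r\in(0,1)$ and $z:=(1-\lambda)x_*$. For any $w$ with $\norm w\le\delta$, write
\[
 z+w=(1-\lambda)x_*+\lambda(w/\lambda).
\]
Here $\norm{w/\lambda}\le r$, so $w/\lambda\in B_2(0,r)\subseteq K$. Convexity then gives $z+w\in K$. Therefore $B_2(z,\delta)\subseteq K$, i.e.\ $z\in K^{-\delta}$. Its objective value is
\[
 c^Tz=h_K(c)-\lambda c^Tx_*\ge h_K(c)-\frac{\delta}{r}\norm{x_*}\ge h_K(c)-\frac{\delta R}{r},
\]
using $K\subseteq B_2(0,R)$. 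This is the only step where geometry enters; everything else is bookkeeping.

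\textbf{The interval \eqref{eq:approximately-feasible-objective-interval}.} For the upper bound, $c^Ty\le c^Tx+\delta\le h_K(c)+\delta$ since $x\in K$. For the lower bound, $c^Ty\ge h_{K^{-\delta}}(c)-\delta\ge h_K(c)-\delta(R/r+1)$, which is at least $h_K(c)-D$.
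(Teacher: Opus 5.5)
Your proposal is correct and follows essentially the same argument as the paper. The paper proves $(1-\delta/r)K\subseteq K^{-\delta}$ by the same convex combination with the inscribed ball and then uses $h_K(c)\le R$; you apply that homothety only to the maximizer $x_*$, and your witness $x$ is simply the nearest point of $K$ to $y$.
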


\begin{proof}
Convexity and $B_2(0,r)\subseteq K$ imply
\[
 \left(1-\frac\delta r\right)K+\delta B_2(0,1)\subseteq K,
\]
so $(1-\delta/r)K\subseteq K^{-\delta}$. The objective guarantee in \eqref{eq:approximately-feasible-output} therefore gives
\[
 c^Ty\ge\left(1-\frac\delta r\right)h_K(c)-\delta.
\]
Choose the promised $x\in K$ with $\norm{x-y}\le\delta$. Then $c^Tx\ge c^Ty-\delta$, and $h_K(c)\le R$ gives \eqref{eq:approximately-feasible-effective-error}. The same lower estimate gives the left side of \eqref{eq:approximately-feasible-objective-interval}. The upper side follows from $c^Ty\le c^Tx+\delta\le h_K(c)+\delta$.
\end{proof}

We now adapt the near-maximizer calculation to the possibly infeasible output $y$. The scalar $q=b^Ty$ estimates $\max_{x\in K_B}b^Tx$, so $z=qy$ remains computable from the optimizer's output.

\begin{lemma}[An approximately feasible output gives a computable inverse]
\label{lem:approximately-feasible-inverse}
Let $0<\mu\le L$, let $B$ be a real symmetric matrix satisfying $\mu I\preceq B\preceq LI$, let $b$ be unit, and let $y$ be a successful approximately feasible output with tolerance $0<\delta<L^{-1/2}$ for $K_B$. Put
\[
 h=\sqrt{b^TB^{-1}b},
 \quad
 D=\delta\left(\sqrt{L/\mu}+2\right),
 \quad
 q=b^Ty,
 \quad
 z=qy.
\]
Then
\begin{equation}
 \norm{z-B^{-1}b}_B
 \le(h+D)\left(\sqrt L\,\delta+\sqrt{\frac{2D}{h}}\right)+D.
 \label{eq:approximately-feasible-inverse-error}
\end{equation}
\end{lemma}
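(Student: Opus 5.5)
Set $K=K_B$. The inclusions $B_2(0,L^{-1/2})\subseteq K_B\subseteq B_2(0,\mu^{-1/2})$ follow from $\mu I\preceq B\preceq LI$, so Lemma~\ref{lem:approximately-feasible-geometry} applies with $r=L^{-1/2}$ and $R=\mu^{-1/2}$. Note $\delta<r$ by hypothesis and $R/r=\sqrt{L/\mu}$, which gives exactly the stated $D$. Lemma~\ref{lem:maximizer-inverse} gives $h_{K_B}(b)=h$. We therefore obtain a feasible witness $x\in K_B$ with $\norm{x-y}\le\delta$ and $b^Tx\ge h-D$. By \eqref{eq:approximately-feasible-objective-interval}, the computable scalar satisfies $h-D\le q\le h+\delta\le h+D$, so $\abs{q-h}\le D$ and $\abs q\le h+D$.

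Next, write $x_*=B^{-1}b/h$, so $B^{-1}b=hx_*$. I would decompose
\[
 z-B^{-1}b=q(y-x)+q(x-x_*)+(q-h)x_*
\]
and bound each term in the $B$-norm by the triangle inequality. For the first term, $\norm{y-x}_B\le\sqrt L\norm{y-x}\le\sqrt L\,\delta$, which contributes $(h+D)\sqrt L\,\delta$. For the second term, I reuse the computation in the proof of Lemma~\ref{lem:maximizer-inverse}: with $u=B^{1/2}x$ and $a=B^{-1/2}b$, we have $\norm u\le1$ and $\norm{x-x_*}_B^2\le2-2a^Tu/h$. Since $a^Tu=b^Tx\ge h-D$, this yields $\norm{x-x_*}_B\le\sqrt{2D/h}$, contributing $(h+D)\sqrt{2D/h}$. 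For the third term, $\norm{x_*}_B=1$, so it contributes at most $D$. Adding the three contributions gives \eqref{eq:approximately-feasible-inverse-error} exactly.

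The only delicate step is bounding $\abs{q}$. The scalar $q$ is computed from the infeasible $y$ rather than from $x$, so it must be controlled using both sides of \eqref{eq:approximately-feasible-objective-interval}, and we need $\delta\le D$. This holds trivially since $D\ge 2\delta$. Everything else is a direct reuse of earlier lemmas.
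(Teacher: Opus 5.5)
Your proposal is correct and follows essentially the same route as the paper: apply Lemma~\ref{lem:approximately-feasible-geometry} with $r=L^{-1/2}$ and $R=\mu^{-1/2}$ to get a feasible witness, and bound $\norm{x-x_*}_B$ by rerunning the near-maximizer computation with $D$ in place of $\eps$. You then control $q$ through both sides of \eqref{eq:approximately-feasible-objective-interval} and apply the triangle inequality to the identical three-term decomposition $q(y-x)+q(x-x_*)+(q-h)x_*$.
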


\begin{proof}
The spectral bounds imply
\[
 B_2(0,L^{-1/2})\subseteq K_B\subseteq B_2(0,\mu^{-1/2}).
\]
Apply Lemma~\ref{lem:approximately-feasible-geometry} with $r=L^{-1/2}$ and $R=\mu^{-1/2}$, and let $x$ be the resulting feasible witness. If $x_*=B^{-1}b/h$, the proof of Lemma~\ref{lem:maximizer-inverse}, with $D$ in place of $\eps$, gives
\[
 \norm{x-x_*}_B\le\sqrt{\frac{2D}{h}},
 \qquad
 \norm{y-x}_B\le\sqrt L\delta.
\]
Equation~\eqref{eq:approximately-feasible-objective-interval} gives $\abs{q-h}\le D$ and $\abs q\le h+D$. Since $B^{-1}b=hx_*$ and $\norm{x_*}_B=1$, the decomposition
\[
 qy-hx_*=q(y-x)+q(x-x_*)+(q-h)x_*
\]
and the triangle inequality give \eqref{eq:approximately-feasible-inverse-error}.
\end{proof}

The final threshold for the maximum objective value lies strictly between the two regular cases. The positive constants
\begin{equation}
 \Delta_+:=\sqrt{\frac25}-\frac1{\sqrt3},
 \qquad
 \Delta_-:=\frac{3-2\sqrt2}{\sqrt{20}}
 \label{eq:approximately-feasible-margins}
\end{equation}
are the corresponding gaps after factoring out $1/\sqrt{\overline g_n}$. We choose one constant small enough to preserve both these gaps and the inverse-error bound:
\begin{equation}
 \gamma:=\min\left\{1,\frac{\rho^2}{49},
 \Delta_+,\frac{\Delta_-}{3}\right\},
 \label{eq:approximately-feasible-constant}
\end{equation}
and set
\begin{equation}
 \delta_{0,n}:=\gamma\overline g_n^2.
 \label{eq:approximately-feasible-unscaled-accuracy}
\end{equation}

\begin{lemma}[Approximate feasibility preserves the reduction]
\label{lem:approximately-feasible-margins}
Let $Q\in\mathcal Q_n$, put $d=d_n$ and $B=\overline A_{n,Q}$, and let $\pi_d:\R^n\to\R^d$ denote projection onto the first $d$ coordinates. For a unit vector $b\in\R^d$, set $\overline b=(b,0_{\ell_n})$. Suppose $y$ is a successful approximately feasible output for $K_B$ with objective $\overline b$ and tolerance $\delta_{0,n}$. Put
\[
 q=\overline b^Ty,
 \qquad
 z=qy.
\]
Then
\[
 \norm{\pi_dz-A_Q^{-1}b}\le\rho.
\]
Moreover, using objectives of the form $\overline b$, the final median test at threshold \eqref{eq:objective-threshold} correctly separates the regular determinant components on a strict successful majority.
\end{lemma}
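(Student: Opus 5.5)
The plan is to apply Lemma~\ref{lem:approximately-feasible-inverse} with $B=\overline A_{n,Q}$, so that $\mu=2\overline g_n$ and $L=1+2\overline g_n\le 2$ (including the appended eigenvalue $1$ when $\ell_n=1$). Then we project, and convert the $B$-norm bound into a Euclidean bound using $B\succeq 2\overline g_n I$.

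\textbf{Step 1: size of the relevant quantities.}
The definition of $\delta_{0,n}$ gives $\delta_{0,n}<L^{-1/2}$ because $\gamma\le1$ and $\overline g_n$ is tiny. The quantity $h=\sqrt{\overline b^TB^{-1}\overline b}$ is attained in the first block, so $h\in[(1+2\overline g_n)^{-1/2},(2\overline g_n)^{-1/2}]$. In particular $h\ge 1/2$ and $h\le (2\overline g_n)^{-1/2}$. Next,
\[
D=\delta_{0,n}\left(\sqrt{L/\mu}+2\right)\le \gamma\overline g_n^2\left(\overline g_n^{-1/2}+2\right)\le 2\gamma\overline g_n^{3/2}.
\]
Hence $D\le h$, $\sqrt L\,\delta_{0,n}\le 2\gamma\overline g_n^2$, and $\sqrt{2D/h}\le \sqrt{8\gamma}\,\overline g_n^{3/4}$.

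\textbf{Step 2: the inverse-error bound.}
Plugging into \eqref{eq:approximately-feasible-inverse-error} gives
\[
\norm{z-B^{-1}\overline b}_B\le 2h\left(2\gamma\overline g_n^2+\sqrt{8\gamma}\,\overline g_n^{3/4}\right)+2\gamma\overline g_n^{3/2}\lesssim \sqrt{\gamma}\,\overline g_n^{1/4},
\]
where the dominant term uses $h\le(2\overline g_n)^{-1/2}$. Dividing by $\sqrt{2\overline g_n}$ then gives a Euclidean error of order $\sqrt\gamma\,\overline g_n^{-1/4}$, which blows up.

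This is the main obstacle. The crude bound $h\le(2\overline g_n)^{-1/2}$ together with $\norm{\cdot}\le(2\overline g_n)^{-1/2}\norm{\cdot}_B$ is too lossy. I would redo the estimate directly in Euclidean norm, using the decomposition from the proof of Lemma~\ref{lem:approximately-feasible-inverse},
\[
qy-hx_*=q(y-x)+q(x-x_*)+(q-h)x_*,
\]
and bounding each term separately. Since $\norm{y-x}\le\delta$ in the Euclidean norm, the first term is at most $(h+D)\delta\le \overline g_n^{-1/2}\gamma\overline g_n^{2}$. For the second term, $\norm{x-x_*}\le(2\overline g_n)^{-1/2}\sqrt{2D/h}$, so after multiplying by $q$ it is at most
\[
\overline g_n^{-1/2}\cdot\overline g_n^{-1/2}\cdot\sqrt{D/h}\cdot\sqrt{h}.
\]
I would sharpen this by treating $\norm{x-x_*}_B^2\le 2D/h$ more carefully: multiplied by $h^2$, it gives $h\norm{x-x_*}_B\le\sqrt{2Dh}\le \sqrt{4\gamma}\,\overline g_n^{1/2}$. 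Converting to Euclidean norm then yields $\sqrt{2\gamma}$, an $O(\sqrt\gamma)$ constant, exactly as in Corollary~\ref{cor:one-inverse}. The third term is at most $D\norm{x_*}$. Here $\norm{x_*}=\norm{B^{-1}b}/h\le (2\overline g_n)^{-1}/h$, which with $D\lesssim\gamma\overline g_n^{3/2}$ gives $O(\gamma\overline g_n^{1/2})$. Collecting terms, the total Euclidean error is at most $c\sqrt\gamma$ for a small absolute $c$ (around $7$). The choice $\gamma\le\rho^2/49$ then gives error at most $\rho$. Projection by $\pi_d$ is nonexpansive, and $\pi_dB^{-1}\overline b=A_Q^{-1}b$ by the block structure, which proves the first claim.

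\textbf{Step 3: the median test.}
By \eqref{eq:approximately-feasible-objective-interval}, each successful $q_i=\overline b^Ty_i$ lies in $[h-D,h+\delta_{0,n}]$, and the same holds for the median on a strict successful majority. On $\overline{\mathcal E}_{n,+}$ we have $h\le(3\overline g_n)^{-1/2}$. Then $h+\delta_{0,n}<\tau_d=\sqrt{2/5}\,\overline g_n^{-1/2}$ needs $\delta_{0,n}<\Delta_+\overline g_n^{-1/2}$, which follows from $\gamma\le\Delta_+$. On $\overline{\mathcal E}_{n,-}$, inverse iteration (Lemma~\ref{lem:noisy-inverse}, now fed by Step 2) gives $h^2\ge 9/(20\overline g_n)$. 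Then $h-D>\tau_d$ needs $D<\Delta_-\overline g_n^{-1/2}$, which follows from $D\le 2\gamma\overline g_n^{3/2}$ and $\gamma\le\Delta_-/3$.
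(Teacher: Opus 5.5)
Your argument is correct and is essentially the paper's proof. You use the decomposition $qy-hx_*=q(y-x)+q(x-x_*)+(q-h)x_*$ from Lemma~\ref{lem:approximately-feasible-inverse}, convert to Euclidean norm via $B\succeq 2\overline g_n I$, project with $\pi_d$, and check the two threshold margins using $\gamma\le\Delta_+$ and $\gamma\le\Delta_-/3$, exactly as the paper does. The ``obstacle'' you flag in Step~2 is not real: it comes only from bounding $\sqrt{2D/h}$ with $h\ge 1/2$ while bounding the prefactor with $h\le(2\overline g_n)^{-1/2}$. The paper substitutes directly using $(h+D)\sqrt{2D/h}\le 2\sqrt{2Dh}$, the same sharpening you eventually adopt, which makes the $B$-norm bound $O(\sqrt{\gamma\overline g_n})$ and gives $7\sqrt\gamma\le\rho$ after dividing by $\sqrt{2\overline g_n}$, so your separate Euclidean re-derivation is unnecessary.
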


\begin{proof}
Apply Lemma~\ref{lem:approximately-feasible-inverse} in dimension $n$ with objective $\overline b$. Here $\mu=2\overline g_n$, $L=1+2\overline g_n\le2$, and
\[
 D\le3\gamma\overline g_n^{3/2},
 \qquad
 L^{-1/2}\le h\le(2\overline g_n)^{-1/2}.
\]
Substitution into \eqref{eq:approximately-feasible-inverse-error} gives
\begin{align*}
 \norm{z-B^{-1}\overline b}_B
 &\le5\gamma\overline g_n^{3/2}+2\sqrt{3\sqrt2\gamma\overline g_n},\\
 \norm{z-B^{-1}\overline b}
 &\le\frac5{\sqrt2}\gamma\overline g_n+\sqrt{6\sqrt2\gamma}
 \le7\sqrt\gamma\le\rho.
\end{align*}
Since
\[
 B^{-1}\overline b
 =\overline A_{n,Q}^{-1}(b,0_{\ell_n})
 =(A_Q^{-1}b,0_{\ell_n}),
\]
and Euclidean projection does not increase distance, we obtain
\[
 \norm{\pi_dz-A_Q^{-1}b}
 \le\norm{z-B^{-1}\overline b}
 \le\rho.
\]
Thus the projected vectors $\pi_dz$ have exactly the inverse-error guarantee required by the amplification and inverse iteration in Section~\ref{sec:inverse-iteration}.

For the final maximum-value test, padding the objective does not change the exact maximum:
\[
 \sqrt{\overline b^TB^{-1}\overline b}
 =\sqrt{b^TA_Q^{-1}b}.
\]
On $\overline{\mathcal E}_{n,+}$, the gap between the upper bound on this maximum and the threshold in \eqref{eq:objective-threshold} is $\Delta_+/\sqrt{\overline g_n}$; successful approximately feasible outputs have objective value at most $h+\delta_{0,n}$. On $\overline{\mathcal E}_{n,-}$ after inverse iteration, the gap between the lower bound on the maximum and the threshold is $\Delta_-/\sqrt{\overline g_n}$; successful approximately feasible outputs have objective value at least $h-D$. The definition of $\gamma$ and $\overline g_n<1$ give
\[
 \delta_{0,n}<\frac{\Delta_+}{\sqrt{\overline g_n}},
 \qquad
 D<\frac{\Delta_-}{\sqrt{\overline g_n}},
\]
so the same median threshold separates the cases.
\end{proof}

The unscaled tolerance is $\delta_{0,n}=\Theta(n^{-4})$. We therefore scale the ellipsoid by $n^2$: distances and erosions scale linearly, so this produces the target tolerance $\Theta(n^{-2})$ without changing the membership-query reduction.

\begin{theorem}[Approximately feasible optimization lower bound]
\label{thm:approximately-feasible-main}
For every sufficiently large $n$, set
\[
 \delta_n:=n^2\delta_{0,n}=\Theta(n^{-2}),
\]
and consider the family
\begin{equation}
 \mathscr K_n^{\rm app}:=\{K_{n^2,Q}:Q\in\mathcal Q_n\}.
 \label{eq:approximately-feasible-family}
\end{equation}
For every $K\in\mathscr K_n^{\rm app}$,
\[
 B_2(0,r_n^{\rm app})\subseteq K\subseteq B_2(0,R_n^{\rm app}),
 \qquad
 r_n^{\rm app}=\Theta(n^2),
 \qquad
 R_n^{\rm app}=\Theta(n^3).
\]
Moreover,
\[
 R_n^{\rm app}-r_n^{\rm app}\ge\frac12R_n^{\rm app},
 \qquad
 \frac{r_n^{\rm app}\delta_n}{R_n^{\rm app}}=O(n^{-3}).
\]
Every approximately feasible optimizer with tolerance $\delta_n$ that works for every $K\in\mathscr K_n^{\rm app}$ and every unit objective requires
\[
 \Omega\!\left(\frac{n}{\log n\,\log\log n}\right)
\]
exact membership queries in the worst case, on every nonempty finite grid fixed in advance.
\end{theorem}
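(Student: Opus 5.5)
The proof follows the exactly feasible case (Theorem~\ref{thm:exactly-feasible-main}), with Lemma~\ref{lem:approximately-feasible-margins} replacing Corollary~\ref{cor:one-inverse}. There are three steps: the geometric parameters, a rescaling to the unscaled ellipsoid $K_B$, and the query bound from Proposition~\ref{prop:optimizer-determinant} and Theorem~\ref{thm:binary-determinant}.

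\emph{Geometry.} We take $s=n^2$ in \eqref{eq:scaled-ball-bounds}. This gives $r_n^{\rm app}=n^2/\sqrt{1+2\overline g_n}=\Theta(n^2)$ and $R_n^{\rm app}=n^2/\sqrt{2\overline g_n}=\Theta(n^3)$, using \eqref{eq:g-bounds}. Since the ratio $r/R$ does not depend on $s$, the argument in the proof of Theorem~\ref{thm:exactly-feasible-main} gives $r/R\le1/2$ for $n\ge100$, hence $R-r\ge R/2$. We also have
\[
 \frac{r\delta_n}{R}\le\sqrt{2\overline g_n}\,n^2\gamma\overline g_n^2=O(n^{-3}).
\]
The bound $\delta_n=n^2\gamma\overline g_n^2=\Theta(n^{-2})$ is again \eqref{eq:g-bounds}. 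We must check $\delta_n<r_n^{\rm app}$ so that Definition~\ref{def:approximately-feasible-opt} applies; this holds because $\delta_n=\Theta(n^{-2})$ while $r_n^{\rm app}=\Theta(n^2)$.

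\emph{Rescaling.} Scaling by $s$ commutes with erosion and distance: $(sK)^{-s\delta}=sK^{-\delta}$, $\dist(sy,sK)=s\dist(y,K)$, and $h_{sK}(c)=s\,h_K(c)$. So if $y$ is a successful output for $K_{n^2,Q}$ with tolerance $\delta_n$, then $y/n^2$ is a successful approximately feasible output for $K_B$, with $B=\overline A_{n,Q}$, at tolerance $\delta_{0,n}$.

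We also need $\delta_{0,n}<L^{-1/2}$ for Lemma~\ref{lem:approximately-feasible-inverse}. Here $L\le2$ and $\gamma\le1$, so this holds.

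With these checks, Lemma~\ref{lem:approximately-feasible-margins} shows that every successful call yields a vector $\pi_d z$ within $\rho$ of $A_Q^{-1}b$. It also shows that the final median of objective values, after dividing by $n^2$, falls on the correct side of $\tau_d$ on the regular events.

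\emph{Query bound.} The wrapper of Proposition~\ref{prop:optimizer-determinant} then runs unchanged. It uses majority-ball amplification with radius $\rho$, which gives error $E_0=3\rho$. It then applies the noisy inverse iteration of Lemma~\ref{lem:noisy-inverse} with $k$ steps, followed by the thresholded median.

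One point needs checking: the selected vector must be nonzero. This holds because $\norm{A_Q^{-1}b}>1/3>E_0$.

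The probability accounting is identical, so the determinant is predicted with advantage $\beta_\star$ using $W=O(\log n\log\log n)$ invocations. The membership oracle is still exact membership for $K_{n^2,Q}$ on a fixed finite grid, so Theorem~\ref{thm:binary-determinant} applies. It gives $WT\ge d/2\ge n/4$, and hence the stated bound \eqref{eq:ambient-wrapper-bound}.

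The main obstacle is the rescaling step: confirming that the approximate-feasibility guarantee (the erosion $K^{-\delta}$ and the outward tolerance) transfers exactly under scaling, and that the tolerance conditions of Lemma~\ref{lem:approximately-feasible-inverse} hold. Everything else is already packaged in Lemma~\ref{lem:approximately-feasible-margins} and Proposition~\ref{prop:optimizer-determinant}.
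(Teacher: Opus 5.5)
Your proposal is correct and follows essentially the same route as the paper. You rescale by $s=n^2$ using $(sK)^{-s\delta}=s(K^{-\delta})$, invoke Lemma~\ref{lem:approximately-feasible-margins} in place of Corollary~\ref{cor:one-inverse}, rerun the wrapper of Proposition~\ref{prop:optimizer-determinant} to reach \eqref{eq:ambient-wrapper-bound}, and read off the geometry from \eqref{eq:scaled-ball-bounds} and \eqref{eq:g-bounds}. Your extra checks that $\delta_n<r_n^{\rm app}$ and $\delta_{0,n}<L^{-1/2}$ are hypotheses the paper uses without stating them, and they hold; the second also needs the trivial fact $\overline g_n<1$.
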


\begin{proof}
Let $s=n^2$. Scaling commutes with erosion:
\[
 (sK)^{-s\delta}=s(K^{-\delta}).
\]
Thus an approximately feasible output with tolerance $s\delta_{0,n}$ for $K_{s,Q}=sK_{\overline A_{n,Q}}$ becomes an approximately feasible output with tolerance $\delta_{0,n}$ for $K_{\overline A_{n,Q}}$ after division by $s$. At every stage of the wrapper, use the padded objective $\overline b=(b,0_{\ell_n})$ and project the postprocessed proposal onto its first $d_n$ coordinates. Lemma~\ref{lem:approximately-feasible-margins} then gives the $d_n$-dimensional inverse accuracy and maximum-value separation required by Proposition~\ref{prop:optimizer-determinant}. The resulting determinant classifier has the same bias $\beta_\star$ and uses the same wrapper as in Section~\ref{sec:inverse-iteration}. The ambient-dimensional estimate \eqref{eq:ambient-wrapper-bound} proves the query lower bound.

It remains to verify the stated tolerance and geometry. Since $s=n^2$ and $\overline g_n=\Theta(n^{-2})$,
\[
 \delta_n=n^2\gamma\overline g_n^2=\Theta(n^{-2}),
 \quad
 r_n^{\rm app}=\frac{n^2}{\sqrt{1+2\overline g_n}}=\Theta(n^2),
 \quad
 R_n^{\rm app}=\frac{n^2}{\sqrt{2\overline g_n}}=\Theta(n^3).
\]
Moreover,
\[
 \frac{r_n^{\rm app}}{R_n^{\rm app}}
 =\sqrt{\frac{2\overline g_n}{1+2\overline g_n}}\le\frac12
\]
for all sufficiently large $n$, which gives $R_n^{\rm app}-r_n^{\rm app}\ge R_n^{\rm app}/2$. Finally,
\[
 \frac{r_n^{\rm app}\delta_n}{R_n^{\rm app}}
 =\delta_n\sqrt{\frac{2\overline g_n}{1+2\overline g_n}}
 =O(n^{-3}).
\]
\end{proof}

\begin{corollary}[Optimization with continuous exact membership]
\label{cor:continuous-membership-optimization}
The lower bounds in Theorems~\ref{thm:exactly-feasible-main} and~\ref{thm:approximately-feasible-main} remain valid when the finite membership register is replaced by exact coherent membership on $L^2(\R^n)\otimes\C^2$.
\end{corollary}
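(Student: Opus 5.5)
The plan is to rerun the proofs of Theorems~\ref{thm:exactly-feasible-main} and~\ref{thm:approximately-feasible-main} unchanged, except at the single place where finiteness of the membership register was used: Theorem~\ref{thm:binary-determinant}. We replace that theorem by Corollary~\ref{cor:continuous-membership-determinant}. Everything downstream of the determinant lower bound depends only on the optimizer's output guarantees, not on the query register. This includes Corollary~\ref{cor:one-inverse}, Lemmas~\ref{lem:majority-ball}--\ref{lem:noisy-inverse}, Proposition~\ref{prop:optimizer-determinant}, and Lemma~\ref{lem:approximately-feasible-margins}. These steps therefore carry over verbatim: an optimizer using $T$ continuous membership queries per invocation still yields a determinant classifier with bias $\beta_\star$ that makes $M=WT$ membership queries.

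The key step is to check that exact coherent membership for $K_{s,Q}$ on $L^2(\R^n)\otimes\C^2$ is an instance of the threshold oracle in Corollary~\ref{cor:continuous-membership-determinant}. Take $Y=\R^n$ with Lebesgue measure, which is sigma-finite. Define $q(y)=q_y=y'/s$ and $\tau(y)=\tau_y=2-(1+4g_d)\norm{q_y}^2-2t_y^2$ as in \eqref{eq:membership-parameters}; both maps are continuous, hence measurable. By \eqref{eq:membership-threshold}, $\1[y\in K_{s,Q}]=\1[q(y)^TQq(y)\le\tau(y)]=m_Q(y)$ holds for every $y$ and every $Q\in\mathsf O(d)$. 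This identity needs no restriction to the ball $B_2(0,R_s)$, since the threshold rewriting is exact everywhere, and boundary points are counted as members on both sides. So the oracle $\ket{y}\ket{b}\mapsto\ket{y}\ket{b\oplus\1[y\in K_{s,Q}]}$, extended to $L^2$, coincides with coherent access to $m_Q$. For odd $n$ the padded coordinate enters only through $t_y$, which is again measurable. Scaling $s=n$ or $s=n^2$ changes nothing.

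With this identification, Corollary~\ref{cor:continuous-membership-determinant} gives $WT\ge d/2$. The explicit bound $W\le1176\log n\log\log n$ and $d\ge n/2$ then reproduce \eqref{eq:ambient-wrapper-bound}, and with it the stated bounds for both accuracy regimes. The geometric and accuracy parameters $\eps_n,\delta_n,r_n,R_n$ are unaffected.

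The main obstacle is a modelling subtlety rather than a computation. The optimizer's intermediate unitaries now act on an infinite-dimensional register, and adaptivity via controlled queries must still fit the form \eqref{eq:standard-query-algorithm}. I would address this by noting that Theorem~\ref{thm:Fourier-rank-method} is stated for an arbitrary separable $\mathcal H_{\rm alg}$, so the rank bookkeeping is register-agnostic. All analytic work, such as jointly measurable Fej\'er approximants and strong limits, is already encapsulated in the appendix proof of Corollary~\ref{cor:continuous-membership-determinant}. A second point to check is that the optimizer's outputs are classical vectors, so the wrapper's postprocessing is unchanged.
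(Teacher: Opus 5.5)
Your proposal is correct and follows the paper's proof exactly. You identify continuous exact membership for $K_{s,Q}$ with the threshold oracle $m_Q$ via the continuous, hence measurable, maps $y\mapsto q_y,\tau_y$ from \eqref{eq:membership-parameters} and \eqref{eq:membership-threshold}. You then substitute Corollary~\ref{cor:continuous-membership-determinant} for Theorem~\ref{thm:binary-determinant} in the optimizer wrapper and leave every other step of the reduction unchanged.
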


\begin{proof}
For the hard ellipsoids, the membership predicate has the form \eqref{eq:membership-threshold}, and the maps from the continuous query point $y$ to $q_y$ and $\tau_y$ are continuous. Corollary~\ref{cor:continuous-membership-determinant} therefore replaces Theorem~\ref{thm:binary-determinant} in the optimizer wrapper. Every other step of the reduction is unchanged.
\end{proof}

\section{First-order optimization of smooth and strongly convex functions}
\label{sec:strongly-convex-quadratics}

The same hard matrices give a lower bound in the first-order oracle model. The dimension of the hard block can be chosen as a function of the condition number and padded by known coordinates. This produces the usual transition at $\sqrt\kappa=n$.

For a differentiable function $f:\R^n\to\R$, define its ideal gradient phase oracle by
\begin{equation}
 (\mathcal G_f\psi)(x,u)
 :=e^{iu^T\nabla f(x)}\psi(x,u),
 \qquad
 (x,u)\in\R^n\times\R^n.
 \label{eq:gradient-phase-oracle}
\end{equation}
One application of $\mathcal G_f$ or $\mathcal G_f^{-1}$ counts as one gradient query. In the ideal continuous-register model, this oracle is equivalent, by the Fourier transform on the answer register, to the translation oracle
\[
 \ket{x,y}\longmapsto\ket{x,y+\nabla f(x)}.
\]
As in Remark~\ref{rem:ideal-continuous-translation}, this equivalence is not a claim about finite-precision encodings.

For a quadratic function with Hessian $H$, the inequalities
\[
 I\preceq H\preceq\kappa I
\]
mean that the function is $1$-strongly convex and $\kappa$-smooth. Its condition number is at most $\kappa$.

\begin{theorem}[Gradient lower bound for strongly convex quadratics]
\label{thm:strongly-convex-gradient-lower-bound}
There are universal constants $\eps_{\rm quad},C>0$ such that the following holds for every $n\ge1$ and $\kappa\ge1$. Put
\[
 q:=\min\{\sqrt\kappa,n\}
\]
There is an explicit family $\mathscr F_{n,\kappa}$ of quadratic functions on $\R^n$, each $1$-strongly convex and $\kappa$-smooth, such that any quantum algorithm which, for every $f\in\mathscr F_{n,\kappa}$, returns $\widehat x$ satisfying
\begin{equation}
 f(\widehat x)\le\min_{x\in\R^n}f(x)+\eps_{\rm quad}
 \label{eq:quadratic-constant-accuracy}
\end{equation}
with probability at least $2/3$ makes at least
\begin{equation}
 C\frac{q}{\log(2+q)\,\log\log(3+q)}
 \label{eq:quadratic-gradient-lower-bound}
\end{equation}
queries to the gradient phase oracle \eqref{eq:gradient-phase-oracle} in the worst case.
\end{theorem}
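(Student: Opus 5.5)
Choose the hard block dimension $d:=2\lfloor q/C_0\rfloor$ for a suitable constant $C_0$, so that $d\le n$ and the condition number of the block is at most $\kappa$; trivial cases with $d<2$ or small $q$ are absorbed into $C$. The hard functions will be quadratics
\[
 f_{Q,b}(x)=\tfrac{\lambda}{2}x'^T\bigl(A_Q/(2g_d)\bigr)x'-\lambda' b^Tx'+\tfrac12\norm{x''}^2,
 \qquad x=(x',x'')\in\R^d\times\R^{n-d},
\]
with $Q\in\mathsf O(d)$. The normalization is chosen so that the Hessian on the hard block, $A_Q/(2g_d)$, lies between $I$ and $(1+2g_d)/(2g_d)\,I=\Theta(d^2)I$. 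Taking $d=\Theta(\sqrt\kappa)$, or $d\approx n$ when $\sqrt\kappa\ge n$, keeps the Hessian inside $[I,\kappa I]$ by \eqref{eq:spectral-bounds} and \eqref{eq:g-bounds}. The gradient is $\nabla f(x)=(A_Qx'/(2g_d)-b,\;x'')$, which is affine in $Q$. A gradient phase query therefore yields the phase
\[
 e^{iu'^T A_Q x'/(2g_d)}\times(\text{$Q$-independent phase}),
\]
and by \eqref{eq:A-explicit} this equals a known phase times $e^{i(u'/(8g_d))^TQx'}e^{i(x'/(8g_d))^TQu'}$. Its frequency matrix is $(u'x'^T+x'u'^T)/(8g_d)$, which has rank at most two. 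Lemma~\ref{lem:controlled-phase-growth}, applied twice or directly with rank $\le 2$, gives Fourier-rank cost at most $2$ per query. Theorem~\ref{thm:Fourier-rank-method} and Lemma~\ref{lem:reflection} then show that any determinant predictor with Haar advantage needs at least $d/4$ gradient queries.

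Next I reduce determinant prediction to constant-accuracy minimization, mirroring Section~\ref{sec:inverse-iteration}. The minimizer of $f_{Q,b}$ is $x'_*=2g_dA_Q^{-1}b$. Strong convexity with modulus $1$ gives $\norm{\widehat x-x_*}^2\le 2\eps_{\rm quad}$, so a successful optimizer returns $2g_dA_Q^{-1}b$ up to a constant additive error. The exact error depends on how the linear term is scaled, and I set $b$ to be a unit vector. Since $2g_dA_Q^{-1}$ has eigenvalue $1$ on $u$ (in the negative case) and norm $\le 2/3$ on $u^\perp$, this is exactly the noisy inverse-iteration setting of Lemma~\ref{lem:noisy-inverse} after rescaling. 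Relative errors of order $E_0$ are tolerated provided the overlap with $u$ stays $\gtrsim d^{-1/2}$.

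This is the main obstacle. In Lemma~\ref{lem:noisy-inverse} the relative error $\zeta_j=2g_dE_0/a_j$ was small because of the factor $g_d$. Here the error is a constant absolute error on a vector whose $u$-component is only $a_j\approx d^{-1/2}$, so a constant $\eps_{\rm quad}$ would swamp the signal in the early iterations. I would fix this by scaling the linear term by $\lambda'=\Theta(\sqrt d)$ times the needed margin, or equivalently by choosing $b$ of norm $\Theta(\sqrt d)$. Because $f$ is quadratic, multiplying $b$ by $\beta$ multiplies the minimizer by $\beta$ but leaves the additive error at $\sqrt{2\eps_{\rm quad}}$. Taking $\beta=C_1\sqrt d$ with $\eps_{\rm quad}$ a small constant therefore makes $\zeta_j\le 1/100$ throughout. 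The family $\mathscr F_{n,\kappa}$ is thus all $f_{Q,b}$ with $\norm b=C_1\sqrt d$. I then verify the recurrence \eqref{eq:ratio-recurrence} as before.

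For the final test, I call the optimizer on $b=\beta v_k$ and read off $\norm{\widehat x'}$, or the value $b^T\widehat x'$. This is $\approx\beta\,v_k^T2g_dA_Q^{-1}v_k$, which is at least $0.9\beta$ in the negative case and at most $\tfrac23\beta$ in the positive case. Constant absolute error cannot bridge this gap since $\beta\ge C_1\sqrt d$. Majority-ball amplification (Lemma~\ref{lem:majority-ball}) and median selection over $m=O(\log\log d)$ repetitions, across $k+1=O(\log d)$ stages, give Haar advantage at least $1/50$ exactly as in Proposition~\ref{prop:optimizer-determinant}. Combining with the $d/4$ bound gives $T\cdot O(\log d\log\log d)\ge d/4$, and hence the bound \eqref{eq:quadratic-gradient-lower-bound} since $d=\Theta(q)$.
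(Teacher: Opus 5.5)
Your proof follows the paper's route. You use the same hard block $A_Q/(2g_d)$ with $d=2\lfloor q/C_0\rfloor$ (the paper takes $C_0=2\pi/\alpha$), and the same inverse-iteration wrapper with majority-ball amplification and a final median test. Your rank-two frequency $\bigl(u'(x')^T+x'(u')^T\bigr)/(8g_d)$ per gradient query is exactly the paper's simulation of one gradient query by two matrix phase queries.

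The only substantive difference is the scale of the linear term. The paper uses the coefficient $1/(2g_d)=\Theta(d^2)$, so the minimizer is exactly $(A_Q^{-1}b,0)$ and a successful output is a $\rho$-accurate copy of $A_Q^{-1}b$. Corollary~\ref{cor:one-inverse}, Lemma~\ref{lem:noisy-inverse}, and the nonvanishing bound $\norm{A_Q^{-1}v_j}>1/3>E_0$ then carry over verbatim. Your choice $\norm{b}=C_1\sqrt d$ is the minimal scaling that works, and your estimate $\zeta_j=O(\sqrt{\eps_{\rm quad}}/C_1)$ is correct. It does, however, require re-deriving the constants. It also needs a fallback when the selected vector is zero: off the negative event, $\norm{\beta\,2g_dA_Q^{-1}v}$ can be as small as $\Theta(d^{-3/2})$, which is below the constant error.

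There is also a small slip in the final test. $b^T\widehat x'$ is approximately $\beta^2v_k^T(2g_dA_Q^{-1})v_k$, not $\beta\,v_k^T(2g_dA_Q^{-1})v_k$. Test $v_k^T\widehat x'$ instead, or rescale the threshold.

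The one genuine gap is the small-$q$ range. The bound \eqref{eq:quadratic-gradient-lower-bound} is strictly positive for every $q\ge1$, so you must show that at least one query is necessary for every $n$ and $\kappa$. When $q<C_0$ you get $d=0$, and your family collapses to the single function $\frac12\norm{x}^2$. A zero-query algorithm solves it by outputting $0$, so this case cannot simply be ``absorbed into $C$.''

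You need a separate two-element family in this range, for instance $f_\pm(x)=\frac12\norm{x\mp c\,e_1}^2$ with $c>\sqrt{2\eps_{\rm quad}}$. The paper uses $c=2\rho$ and $\eps_{\rm quad}=\rho^2/2$. The success regions are then the disjoint balls $B_2(\pm c\,e_1,\sqrt{2\eps_{\rm quad}})$. A zero-query algorithm has the same output distribution for both functions, so it cannot succeed with probability $2/3$ on each.

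For $q\ge C_0$ your reduction already forces $T\ge d/(4W)>0$. With this addition, the argument goes through.
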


\begin{proof}
Set
\[
 \eps_{\rm quad}:=\frac{\rho^2}{2},
\]
where $\rho$ is defined in \eqref{eq:inverse-constants}.
Fix a sufficiently large universal constant $q_0$. We first dispose of the range $q<q_0$. Let $e_1$ be the first standard basis vector and consider
\[
 f_\pm(x):=\frac12\norm{x\mp2\rho e_1}^2.
\]
Set $\mathscr F_{n,\kappa}:=\{f_+,f_-\}$. Both functions are $1$-strongly convex and $1$-smooth, hence $\kappa$-smooth. The success condition with $\eps_{\rm quad}=\rho^2/2$ requires the output to lie within distance $\rho$ of the corresponding minimizer. These two success regions are disjoint. A zero-query algorithm has the same output distribution for both functions and therefore cannot succeed with probability at least $2/3$ on each. Thus at least one query is necessary. Since $q<q_0$, this proves \eqref{eq:quadratic-gradient-lower-bound} in this range after choosing the universal constant $C$ sufficiently small.

It remains to consider $q\ge q_0$. We use the construction of Section~\ref{sec:determinant-to-optimization} in a smaller even dimension. Set
\begin{equation}
 d:=2\left\lfloor
 \frac{\alpha}{2\pi}\min\{\sqrt\kappa,n\}
 \right\rfloor.
 \label{eq:quadratic-hard-dimension}
\end{equation}
For sufficiently large $q$, we have $2\le d<n$ and
\begin{equation}
 d=\Theta(q).
 \label{eq:quadratic-hard-dimension-order}
\end{equation}
Let $Q\in O(d)$, and form the matrix $A_Q\in\R^{d\times d}$ using \eqref{eq:hard-matrices} in dimension $d$. Thus
\begin{equation}
 2g_dI_d\preceq A_Q\preceq(1+2g_d)I_d.
 \label{eq:quadratic-hard-spectrum}
\end{equation}
Define the padded, normalized Hessian
\begin{equation}
 \widehat A_{n,Q}
 :=\frac{A_Q}{2g_d}\oplus\kappa I_{n-d}.
 \label{eq:quadratic-Hessian}
\end{equation}

The sine bounds used in \eqref{eq:g-bounds}, now in dimension $d$, give
\[
 g_d\ge\frac{\alpha^2}{\pi^2d^2}.
\]
Consequently,
\begin{align*}
 \lambda_{\max}\!\left(\frac{A_Q}{2g_d}\right)
 &\le\frac{1+2g_d}{2g_d}
 =1+\frac1{2g_d}\\
 &\le1+\frac{\pi^2d^2}{2\alpha^2}
 \le1+\frac\kappa2
 \le\kappa,
\end{align*}
where the penultimate inequality follows from $d\le(\alpha/\pi)\sqrt\kappa$, and the last holds for $\kappa\ge2$. Together with \eqref{eq:quadratic-hard-spectrum}, this proves
\begin{equation}
 I_n\preceq\widehat A_{n,Q}\preceq\kappa I_n.
 \label{eq:quadratic-conditioned-Hessian}
\end{equation}

For $b\in\mathbb S^{d-1}$, let $\overline b=(b,0_{n-d})$ and define
\begin{equation}
 f_{Q,b}(x)
 :=\frac12x^T\widehat A_{n,Q}x
    -\frac1{2g_d}\overline b^Tx,
 \qquad x\in\R^n.
 \label{eq:hard-quadratic-function}
\end{equation}
Let
\[
 \mathscr F_{n,\kappa}
 :=\{f_{Q,b}:Q\in O(d),\ b\in\mathbb S^{d-1}\}.
\]
Equation~\eqref{eq:quadratic-conditioned-Hessian} shows that every function in this family is $1$-strongly convex and $\kappa$-smooth.

We next relate optimization to inverse application. The unique minimizer of \eqref{eq:hard-quadratic-function} is
\begin{equation}
 x_{Q,b}^*
 =\widehat A_{n,Q}^{-1}\frac{\overline b}{2g_d}
 =(A_Q^{-1}b,0_{n-d}).
 \label{eq:hard-quadratic-minimizer}
\end{equation}
For every $x\in\R^n$,
\begin{equation}
 f_{Q,b}(x)-f_{Q,b}(x_{Q,b}^*)
 =\frac12\norm{x-x_{Q,b}^*}_{\widehat A_{n,Q}}^2
 \ge\frac12\norm{x-x_{Q,b}^*}^2.
 \label{eq:quadratic-gap-identity}
\end{equation}
With this choice of $\eps_{\rm quad}$, equation~\eqref{eq:quadratic-gap-identity} gives, on the success event,
\begin{equation}
 \norm{\widehat x-x_{Q,b}^*}\le\rho.
 \label{eq:quadratic-inverse-accuracy}
\end{equation}
Projection onto the first $d$ coordinates therefore produces a vector within distance $\rho$ of $A_Q^{-1}b$.

It remains to count the hidden-matrix queries used by one gradient query. Write $x=(x',x'')$ and $u=(u',u'')$ according to the decomposition $\R^n=\R^d\oplus\R^{n-d}$. From \eqref{eq:A-explicit} and \eqref{eq:hard-quadratic-function},
\begin{align}
 u^T\nabla f_{Q,b}(x)
 ={}&\frac{\frac12+2g_d}{2g_d}(u')^Tx'
        +\kappa(u'')^Tx''
     -\frac1{2g_d}(u')^Tb \notag\\
 &+\frac1{8g_d}
 \left((u')^TQx'+(x')^TQu'\right).
 \label{eq:gradient-phase-decomposition}
\end{align}
The first three terms are independent of $Q$ and can be implemented without a query. The two remaining terms are matrix phases with rank-one frequencies. They can be implemented with two queries to $\mathcal P_Q$, using the query labels
\[
 \left(x',\frac{u'}{8g_d}\right)
 \qquad\text{and}\qquad
 \left(u',\frac{x'}{8g_d}\right).
\]
The same simulation with inverse phases implements $\mathcal G_{f_{Q,b}}^{-1}$. Hence every gradient query uses at most two matrix phase queries to $Q$.

Now suppose that an optimizer satisfying the theorem uses at most $T$ gradient queries. Apply it repeatedly to the functions $f_{Q,b}$, choosing $b$ adaptively as in the inverse iteration of Section~\ref{sec:inverse-iteration}. By \eqref{eq:quadratic-inverse-accuracy}, every successful invocation supplies the same approximate inverse application used in Corollary~\ref{cor:one-inverse}. The majority-ball amplification, Rademacher initialization, and noisy inverse iteration are therefore unchanged.

The final test uses one more amplified batch of optimizer invocations with objective vector $v_k$. A successful output $x$ satisfies
\[
 \abs{v_k^Tx'-v_k^TA_Q^{-1}v_k}\le\rho,
\]
where $x'$ denotes its first $d$ coordinates. On $\mathcal E_{d,+}$,
\[
 v_k^TA_Q^{-1}v_k\le\frac1{3g_d},
\]
whereas on $\mathcal E_{d,-}$, after successful inverse iteration,
\[
 v_k^TA_Q^{-1}v_k\ge\frac9{20g_d}.
\]
For all sufficiently large $d$, the fixed error $\rho$ is smaller than both gaps to the threshold $2/(5g_d)$. The median test and the probability calculation in Proposition~\ref{prop:optimizer-determinant} therefore give a determinant predictor with average success probability greater than $1/2$ using
\[
 W=O(\log d\,\log\log d)
\]
optimizer invocations.

Each optimizer invocation uses at most $T$ gradient queries, and each gradient query is simulated with two matrix phase queries to $Q$. The determinant predictor therefore uses at most $2WT$ matrix phase queries. Theorem~\ref{thm:mv-determinant} in dimension $d$ gives
\[
 2WT\ge\frac d2.
\]
Hence
\[
 T=\Omega\!\left(\frac{d}{\log d\,\log\log d}\right).
\]
Combining this with \eqref{eq:quadratic-hard-dimension-order} proves \eqref{eq:quadratic-gradient-lower-bound} after adjusting universal constants.
\end{proof}

\paragraph{Comparison with accelerated methods.}
Accelerated gradient descent uses
\[
 O\!\left(\sqrt\kappa\,
 \log\frac{f(x_0)-f_*}{\eps}\right)
\]
gradient evaluations for smooth strongly convex optimization~\cite{nesterov_acceleration}. For the family above, taking $x_0=0$ gives an initial objective gap polynomial in $\kappa$, while the required accuracy is constant. Accelerated gradient descent therefore uses $O(\sqrt\kappa\log\kappa)$ queries. For quadratics, conjugate-gradient methods also give an $O(n)$ exact-arithmetic upper bound. Theorem~\ref{thm:strongly-convex-gradient-lower-bound} matches the resulting $\min\{\sqrt\kappa,n\}$ dependence up to logarithmic factors in the ideal quantum gradient-query model.

The normalization in \eqref{eq:quadratic-Hessian} converts the original $\Theta(n^{-2})$ accuracy scale into constant absolute accuracy. It also rescales the linear term and the initial objective gap. Thus the theorem does not identify precision and condition number in every normalization; it gives two equivalent descriptions of this hard family.

\section*{Discussion}
We comment briefly on some outstanding open questions. The first natural question is whether the Fourier-rank polynomial method has applications to lower bounds beyond those considered here. Generalizations of the method that correspond to the threshold degree and approximate degree variants of the classic polynomial method are also of interest. From an optimization viewpoint, the results in this manuscript, combined with those of~\cite{garg2020no}, resolve the query complexity of convex optimization upto logarithmic factors when the target accuracy $\epsilon$ and dimension $n$ are related as $\epsilon = \Omega(1/\sqrt{n})$ or $\epsilon=O(1/n^2)$. The quantum query complexity in the intermediate precision regime remains open.

\section*{AI Usage Disclosure}
\phantomsection\label{sec:ai_use}
The ideas behind these proofs were developed in collaboration with Large Language Models. During the course of this project, we tried several models including Claude Opus 4.7 and Opus 4.8, as well as GPT-5.5 and GPT-5.6 Sol. The essential notion of Fourier rank was identified by each of these models but the attempts to form a lower bound proof based on these considerations turned out to be incorrect due to various scaling issues encountered while quantizing classical matrix-vector product lower bounds for eigenvalue estimation. GPT-5.6 Sol suggested the route via a determinant lower bound that allowed for these issues to be sidestepped, and led to the complete proofs presented here. Through a series of interactions with the model, GPT-5.6 Sol produced an essentially complete proof of the convex optimization lower bound that has been polished, verified, and substantially rewritten by the authors for clarity. The primary AI-contributed ideas in the reduction from determinant to convex optimization concern the bridge between appropriate binary membership queries and the continuous phase oracle model of the determinant lower bound. The authors have manually verified each of the technical claims made by the LLM and are responsible for the correctness of all results in the paper.

\section*{Acknowledgment}
We thank Andrew Childs for informing us of his concurrent result, and for helpful discussions. We thank Rob Otter and Ruslan Shaydulin for executive support and valuable feedback on this project. We also acknowledge our colleagues at the Global Technology Applied Research Center of JPMorganChase.

\section*{Disclaimer}
This paper was prepared for informational purposes by the Global Technology Applied Research center of JPMorgan Chase \& Co. This paper is not a product of the Research Department of JPMorgan Chase \& Co. or its affiliates. Neither JPMorgan Chase \& Co. nor any of its affiliates makes any explicit or implied representation or warranty and none of them accept any liability in connection with this paper, including, without limitation, with respect to the completeness, accuracy, or reliability of the information contained herein and the potential legal, compliance, tax, or accounting effects thereof. This document is not intended as investment research or investment advice, or as a recommendation, offer, or solicitation for the purchase or sale of any security, financial instrument, financial product or service, or to be used in any way for evaluating the merits of participating in any transaction.

\bibliographystyle{alpha}
\bibliography{bibo}

\appendix
\section{Continuous exact membership queries}
\label{app:continuous-membership}

This appendix proves Corollary~\ref{cor:continuous-membership-determinant} by formulating an uncountable query register directly in $L^2$. Throughout, $n\ge2$, $(Y,\nu)$ is sigma-finite, $\mathcal K$ is separable, and $q:Y\to\R^n$ and $\tau:Y\to\R$ are measurable. Define
\begin{equation}
 \phi(Q,y):=(-1)^{\1[q(y)^TQq(y)\le\tau(y)]}.
 \label{eq:continuous-membership-phase}
\end{equation}
The map $(Q,y)\mapsto\phi(Q,y)$ is measurable and has modulus one.

\begin{lemma}[Continuous membership has rank-one growth]
\label{lem:continuous-membership-growth}
Let $M$ be the multiplication operator
\[
 (M\Psi)(Q,y):=\phi(Q,y)\Psi(Q,y).
\]
Then, for every $r\ge0$,
\[
 M\mathcal W_r(Y,\mathcal K)
 \subseteq\mathcal W_{r+1}(Y,\mathcal K).
\]
\end{lemma}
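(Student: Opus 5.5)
We construct jointly measurable Fejér approximants of $\phi$ and pass to a strong $L^2$ limit. This replaces the finite controlled sum used in Lemma~\ref{lem:membership-rank-growth}.

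\textbf{Step 1: the approximants.} Fix $N$. For a label $y$ with $q(y)\neq0$, put $\sigma(y):=\norm{q(y)}^2$. Let $h_y(t)=(-1)^{\1[t\le\tau(y)]}$ on $[-\sigma(y),\sigma(y)]$, extended periodically, and take its $N$th Fejér mean. Rescaling to $[-1,1]$ gives
\[
 p_N(Q,y)=\sum_{k=-N}^{N}\Bigl(1-\tfrac{|k|}{N+1}\Bigr)c_k(y)\,e^{i\pi k\,q(y)^TQq(y)/\sigma(y)}.
\]
The coefficients $c_k(y)$ are explicit integrals of an indicator of the interval determined by $\tau(y)/\sigma(y)$, so they are measurable in $y$. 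This holds because they are elementary closed-form functions of $(\tau(y),\sigma(y))$, with the cases $\tau\ge\sigma$ and $\tau<-\sigma$ giving constants. For labels with $q(y)=0$, set $p_N:=\phi$, which is constant in $Q$. Hence $p_N$ is jointly measurable. Moreover $\abs{p_N}\le1$, and each term is $\chi_{\Gamma_k(y)}(Q)$ with $\Gamma_k(y)=\pi k\,q(y)q(y)^T/\sigma(y)$ of rank at most one.

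\textbf{Step 2: $p_N\Psi$ stays in rank $r+1$.} Let $\Psi\in\mathcal W_r(Y,\mathcal K)$. Then $p_N\Psi$ is a finite sum of terms $c'_k(y)\,P_{\Gamma_k}\Psi$, with bounded measurable scalar coefficients $c'_k(y)$. By Lemma~\ref{lem:controlled-phase-growth}, each $P_{\Gamma_k}\Psi$ lies in $\mathcal W_{r+1}$. Multiplication by a bounded measurable function of $y$ alone preserves $\mathcal W_{r+1}$, by the fiberwise characterization of $\mathcal W_{r+1}$. So $p_N\Psi\in\mathcal W_{r+1}$.

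\textbf{Step 3: convergence.} Fix $y$ with $q(y)\neq0$. As noted in Lemma~\ref{lem:threshold-rank-one}, the variable $q^TQq/\norm q^2$ is atomless under Haar measure, so $Q\mapsto q(y)^TQq(y)$ avoids the jumps $\tau(y)$ and $\pm\sigma(y)$ for $\mu_n$-almost every $Q$. Fejér convergence at continuity points therefore gives $p_N(Q,y)\to\phi(Q,y)$ for almost every $Q$. By Tonelli, this convergence holds almost everywhere in $(Q,y)$. Since $\abs{p_N-\phi}^2\norm{\Psi}^2\le4\norm{\Psi}^2$ is integrable, dominated convergence gives $p_N\Psi\to M\Psi$ in $L^2$. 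Because $\mathcal W_{r+1}$ is closed, $M\Psi\in\mathcal W_{r+1}$.

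\textbf{Main obstacle.} The delicate step is joint measurability of the approximants in $(Q,y)$, together with the null-set bookkeeping when passing from fiberwise to joint almost-everywhere convergence. I would handle both through the explicit formula for the Fejér coefficients of a step function and a Tonelli argument over the product measure $\mu_n\otimes\nu$.
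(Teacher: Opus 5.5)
Your proposal is correct and follows the paper's proof essentially step for step. The shared steps are: jointly measurable Fej\'er approximants with rank-one frequencies $\pi k\,q(y)q(y)^T/\norm{q(y)}^2$; Lemma~\ref{lem:controlled-phase-growth} plus bounded $y$-dependent coefficients to stay in $\mathcal W_{r+1}$; and atomlessness, Tonelli, dominated convergence and closedness for the limit. The only cosmetic difference is that you treat all labels with $q(y)\neq0$ uniformly, so the degenerate thresholds give constant Fej\'er means automatically, whereas the paper splits those cases off and obtains measurability of the coefficients from continuity in $c=\tau/A$.
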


\begin{proof}
Put $A(y):=\norm{q(y)}^2$. If $A(y)=0$, the phase is independent of $Q$. The same is true when $A(y)>0$ and either $\tau(y)<-A(y)$ or $\tau(y)\ge A(y)$. Multiplication by these phases preserves every rank space.

It remains to consider
\[
 Y^\circ:=\{y:A(y)>0,\ -A(y)<\tau(y)<A(y)\}.
\]
For $y\in Y^\circ$, define
\[
 c(y):=\frac{\tau(y)}{A(y)},
 \qquad
 u(Q,y):=\frac{q(y)^TQq(y)}{A(y)}.
\]
Both maps are measurable and $u(Q,y)\in[-1,1]$.

For $c\in(-1,1)$, let
\[
 h_c(u):=(-1)^{\1[u\le c]},
 \qquad -1\le u\le1,
\]
and extend $h_c$ with period two. Its Fourier coefficients and $N$th Fej\'er mean are
\begin{align}
 \widehat h_c(k)
 &:=\frac12\int_{-1}^1 h_c(u)e^{-i\pi ku}\,du,
 \\
 F_N(c,u)
 &:=\sum_{k=-N}^N
 \left(1-\frac{|k|}{N+1}\right)
 \widehat h_c(k)e^{i\pi ku}.
 \label{eq:joint-Fejer}
\end{align}
For each $k$, the map $c\mapsto\widehat h_c(k)$ is continuous. Hence $F_N$ is jointly measurable. Positivity and unit mass of the Fej\'er kernel give
\[
 |F_N(c,u)|\le1.
\]
At every continuity point of the periodic function $h_c$, Fej\'er's theorem gives~\cite{katznelson2004introduction}
\[
 F_N(c,u)\longrightarrow h_c(u).
\]

Define the measurable global approximant $p_N$ by
\[
 p_N(Q,y):=
 \begin{cases}
 F_N(c(y),u(Q,y)),&y\in Y^\circ,\\
 \phi(Q,y),&\text{when the phase is constant in $Q$},\\
 1,&A(y)>0\text{ and }\tau(y)=-A(y).
 \end{cases}
\]
For $y\in Y^\circ$, expansion \eqref{eq:joint-Fejer} gives
\begin{equation}
 p_N(Q,y)
 =\sum_{k=-N}^N c_{N,k}(y)\chi_{\Gamma_k(y)}(Q),
 \qquad
 \Gamma_k(y):=\frac{\pi k}{A(y)}q(y)q(y)^T,
 \label{eq:continuous-Fejer-expansion}
\end{equation}
where the coefficients $c_{N,k}$ are measurable and bounded. Every $\Gamma_k$ is measurable and has rank at most one. Lemma~\ref{lem:controlled-phase-growth} shows that multiplication by each $\chi_{\Gamma_k(y)}$ maps $\mathcal W_r$ into $\mathcal W_{r+1}$. Multiplication by $c_{N,k}(y)$ acts only on the query register and preserves that space. The sum in \eqref{eq:continuous-Fejer-expansion} is finite, so
\begin{equation}
 p_N\Psi\in\mathcal W_{r+1}(Y,\mathcal K)
 \qquad
 \text{for every }\Psi\in\mathcal W_r(Y,\mathcal K).
 \label{eq:continuous-Fejer-rank}
\end{equation}

We next prove convergence in the product $L^2$ space. Fix $y$ with $A(y)>0$. Under Haar measure, $Qq(y)/\norm{q(y)}$ is uniform on the unit sphere, and $u(Q,y)$ has the distribution of one coordinate of a uniform spherical point. For $n\ge2$, this distribution is atomless. The threshold set $u(Q,y)=c(y)$ and the endpoint sets $u(Q,y)=\pm1$ therefore have Haar measure zero. The same conclusion holds for the endpoint case $\tau(y)=-A(y)$.

These exceptional subsets of $\mathsf{O}(n)\times Y$ are measurable. Tonelli's theorem turns the fiberwise Haar-null statements into one $(\mu_n\otimes\nu)$-null exceptional set. Consequently,
\[
 p_N(Q,y)\longrightarrow\phi(Q,y)
 \qquad
 \text{for almost every }(Q,y).
\]
For $\Psi\in\mathcal W_r(Y,\mathcal K)$, the bound $|p_N|\le1$ gives
\[
 |p_N-\phi|^2\norm{\Psi}^2
 \le4\norm{\Psi}^2.
\]
The right-hand side is integrable on $\mathsf{O}(n)\times Y$. Dominated convergence therefore yields
\begin{equation}
 \norm{(p_N-\phi)\Psi}_{L^2(\mathsf{O}(n)\times Y)}
 \longrightarrow0.
 \label{eq:continuous-membership-strong-limit}
\end{equation}
Each $p_N\Psi$ lies in the closed space $\mathcal W_{r+1}$ by \eqref{eq:continuous-Fejer-rank}. Its limit $\phi\Psi=M\Psi$ lies in the same space, proving the lemma.
\end{proof}

\begin{remark}[Analytic approximants]
The Fej\'er sums $p_N$ establish the strong limit \eqref{eq:continuous-membership-strong-limit}. They are not asserted to be unitary or physically implementable queries.
\end{remark}

We now return to the exact coherent membership oracle. For fixed $Q$, it is the measurable controlled permutation on $L^2(Y,\nu)\otimes\C^2$ corresponding to
\[
 U_Q\ket{y}\ket b
 =\ket y\ket{b\oplus\1[q(y)^TQq(y)\le\tau(y)]}.
\]
This ket notation is only mnemonic; the oracle is defined as an operator on square-integrable wavefunctions. In the Hadamard basis
\[
 \ket{\widehat z}
 =2^{-1/2}\sum_{b=0}^1(-1)^{zb}\ket b,
\]
the $z=0$ component is fixed and the $z=1$ component is multiplied by $\phi(Q,y)$. Thus one exact membership query and its inverse map rank $r$ amplitudes to rank at most $r+1$ by Lemma~\ref{lem:continuous-membership-growth}. Theorem~\ref{thm:Fourier-rank-method} places every $T$-query outcome probability in $\mathcal A_{2T}^{(n)}$. If $2T<n$, equation~\eqref{eq:reflection-L1} makes its determinant correlation zero, and the decision conclusion of Theorem~\ref{thm:Fourier-rank-method} gives average success probability $1/2$. This proves Corollary~\ref{cor:continuous-membership-determinant}.

\begin{remark}[Boundary values]
For a continuous $L^2(Y,\nu)$ query register, changing membership on a $\nu$-null subset of query points does not change the oracle as an operator. The argument above does not rely on that fact for a finite or countable register: for each fixed query point, equality in the threshold is Haar-null as a function of $Q$. The same non-strict boundary convention therefore works in all three settings.
\end{remark}

\section{Potentially removing the \texorpdfstring{$\log\log n$}{log log n} factor with certified retries}
\label{app:certified-retry}

This appendix sketches a possible way to remove the $\log\log n$ factor from the main lower bound. The main theorems do not rely on this refinement. We focus on the exactly feasible setting and explain the additional ingredients that a complete proof would require. Throughout this appendix, $n$ denotes the dimension of the hidden orthogonal matrix $Q$ and its associated ellipsoid; since this is only a sketch, we suppress the one-dimensional padding used for odd ambient dimensions in Section~\ref{sec:hard-family}.

\paragraph{Where the $\log\log n$ factor arises.}
The reduction in Section~\ref{sec:inverse-iteration} performs $O(\log n)$ inverse-iteration steps. Each approximate inverse application succeeds with probability at least $2/3$. The main proof repeats the optimizer $O(\log\log n)$ times at every step so that all steps are simultaneously reliable with high probability.

A possible alternative is to treat each optimizer output as a proposal. We test whether the proposal has a small inverse residual, accept it if the test passes, and otherwise retry the same step. This replaces separate amplification at every step by one global retry budget.

\paragraph{Residual certification.}
Fix a unit vector $v$. Invoke the optimizer and construct a proposal $z$ for $A_Q^{-1}v$ as in Corollary~\ref{cor:one-inverse}. On a successful invocation, the calculation in the proof of that corollary gives
\[
 \norm{z-A_Q^{-1}v}_{A_Q}^2=O(g_n).
\]
Writing $e=z-A_Q^{-1}v$ and using $A_Q\preceq(1+2g_n)I$, we obtain
\[
 \norm{A_Qz-v}^2
 =\norm{A_Qe}^2
 \le\norm{A_Q}_{\rm op}\norm e_{A_Q}^2
 =O(g_n).
\]
Since $g_n=\Theta(n^{-2})$, every successful proposal has residual $O(n^{-1})$.

Choose a safe threshold
\[
 R_{\rm safe}:=\frac{c}{\sqrt n}
\]
for a sufficiently small universal constant $c>0$. For all sufficiently large $n$, a successful optimizer output satisfies
\[
 \norm{A_Qz-v}\le\frac14R_{\rm safe}.
\]

The residual can be estimated without knowing $A_Q^{-1}v$. Indeed,
\[
 A_Qz-v
 =\left(\frac12+2g_n\right)z
    +\frac14(Qz+Q^Tz)-v.
\]
In the ideal continuous model, two auxiliary matrix phase queries suffice to estimate this vector: one for $Qz$ and one for $Q^Tz$. To estimate $Qz$, first use the exact Fourier equivalence in \eqref{eq:mv-phase-equivalence} to view a phase query as a translation query. The formal input state $\ket z$ is not normalizable, so instead prepare a normalized state whose position distribution is a narrow Gaussian centered at $z$, together with an answer state centered at the origin. After the translation query, measuring the answer register returns $Qz$ plus Gaussian error. By choosing the two widths sufficiently small, the error exceeds any prescribed tolerance with probability at most any prescribed $\delta>0$. The same procedure estimates $Q^Tz$ after swapping the two continuous registers of the phase oracle. This is an ideal continuous-register statement and does not assert the same equivalence for a finite-precision matrix--vector oracle.

For example, estimate the residual to Euclidean error at most $R_{\rm safe}/4$ and accept when the estimated norm is at most $R_{\rm safe}/2$. Whenever the estimate is accurate,
\[
 \norm{A_Qz-v}\le\frac14R_{\rm safe}
 \quad\Longrightarrow\quad
 \text{the proposal is accepted},
\]
whereas
\[
 \text{the proposal is accepted}
 \quad\Longrightarrow\quad
 \norm{A_Qz-v}\le\frac34R_{\rm safe}.
\]
Thus successful optimizer outputs pass the test, while every accepted proposal is safe, even if it came from the optimizer's failure event.

\paragraph{Why a certified proposal is safe.}
Suppose $Q\in\mathcal E_{n,-}$, and let $u$ be the exceptional eigenvector. Write
\[
 v=\alpha u+w,
 \qquad
 z=\beta u+y,
 \qquad
 w,y\perp u,
\]
and assume
\[
 \norm{A_Qz-v}\le R<\abs\alpha.
\]
Since $A_Qu=2g_nu$ and $A_Q|_{u^\perp}\succeq3g_nI$, projection onto $u$ and $u^\perp$ gives
\[
 \abs\beta\ge\frac{\abs\alpha-R}{2g_n},
 \qquad
 \norm y\le\frac{\norm w+R}{3g_n}.
\]
Define
\[
 r:=\frac{\norm w}{\abs\alpha},
 \qquad
 \theta:=\frac{R}{\abs\alpha}.
\]
After normalizing $z$, the new ratio between the orthogonal and exceptional components satisfies
\[
 r'\le\frac{2(r+\theta)}{3(1-\theta)}.
\]
On the Rademacher initialization event, $\abs\alpha\ge1/\sqrt{2n}$. Choosing $c$ so that $\theta\le1/33$ yields
\[
 r\ge\frac13
 \quad\Longrightarrow\quad
 r'\le\frac34r,
 \qquad
 r\le\frac13
 \quad\Longrightarrow\quad
 r'\le\frac13.
\]
Therefore every certified proposal advances the inverse iteration safely. After $O(\log n)$ accepted proposals, the iterate has squared overlap at least $9/10$ with $u$.

\paragraph{How the improvement would follow.}
At each stage, call the optimizer once and test the resulting proposal. If the proposal is rejected, keep the current iterate and try again. Conditioned on any previous history, a proposal has a constant probability of being accepted because the optimizer succeeds with probability at least $2/3$. A standard bounded-budget argument should therefore produce all $O(\log n)$ accepted proposals using $O(\log n)$ optimizer invocations with high constant probability.

After the inverse-iteration steps, one additional proposal can be certified in the same way. On $\mathcal E_{n,-}$, the aligned iterate gives a value of $v^Tz$ above $2/(5g_n)$. On $\mathcal E_{n,+}$, the bound $A_Q\succeq3g_nI$ places this value below $2/(5g_n)$ for every certified proposal. This additional proposal therefore supplies the final determinant test.

The auxiliary matrix queries must be charged. If the resulting determinant algorithm uses $M$ membership queries and $L$ matrix phase queries, then Theorem~\ref{thm:Fourier-rank-method} and Lemma~\ref{lem:reflection} give
\[
 M+L\ge\frac n2,
\]
because both query types have Fourier-rank cost one. With $O(\log n)$ optimizer proposals, each using $T$ membership queries and two auxiliary matrix phase queries, this becomes
\[
 O(\log n)(T+2)\ge\frac n2.
\]
It would follow that
\[
 T=\Omega\!\left(\frac n{\log n}\right).
\]

A complete proof would require a precise residual-estimation lemma in the ideal continuous model, fixed constants in the certified inverse-iteration inequalities, and a full bounded-budget probability calculation. The same idea may extend to approximately feasible optimization, but that case requires separate error bookkeeping and is not developed here.

\end{document}